\def\preprint{1}
\def\shortver{0}
\def\llncs{0}%For LLNCS. Note that this may cause errors in the bibliography.
\def\focs{0}
\def\proceeding{0}
\def\draft{0} % When set to 0, author notes are removed.
\def\anonymous{0}
\def\widemargin{0}
\def\toc{0} %Table of contents. Shouldn't be used in proceedings.
\preprint=1 
        \renewcommand{\widemargin}{1}
        \renewcommand{\widemargin}{1}
\preprint=1
\title{
The power of a single Haar random state: constructing and separating quantum pseudorandomness
}
    \titlerunning{The power of a single Haar random state}
\date{\today}
\begin{document}
\ifnum\anonymous=0
    \ifnum\preprint=1
        \author[1]{Boyang Chen\, \orcidlink{0009-0000-1043-0977}}
        \author[2]{Andrea Coladangelo}
        \author[3]{Or Sattath}
        \affil[1]{Institute for Interdisciplinary Information Sciences, Tsinghua University}
        \affil[2]{Paul G. Allen School of Computer Science \& Engineering, University of Washington}
        \affil[3]{Computer Science Department, Ben-Gurion University of the Negev}
        \date{}
    \fi

    \ifnum\llncs=1

        \author{Boyang Chen\inst{1}\orcidlink{0009-0000-1043-0977}
            \and         Andrea Coladangelo\inst{2}\orcidlink{0000-0002-6773-2711}
            \and Or Sattath\inst{3}\orcidlink{0000-0001-7567-3822} 
        }
        \authorrunning{Chen et al.}
        % First names are abbreviated in the running head.
        % If there are more than two authors, 'et al.' is used.
        %
        \institute{Department of Computer Science and Technology, Tsinghua University, Beijing, China \email{by-chen24@mails.tsinghua.edu.cn} \and
        Paul G. Allen School of Computer Science \& Engineering, University of Washington, Seattle, USA \email{coladan@cs.washington.edu} \and
        Department of Computer Science, Ben Gurion University of the Negev, Beersheba, Israel 
        \email{sattath@bgu.ac.il}
        }
    \fi
\else
    \author{}\institute{}
\fi

\maketitle
\begin{abstract} 

In this work, we focus on the following question: what are the cryptographic implications of having access to an oracle that provides a \emph{single} Haar random quantum state? We find that the study of such a model sheds light on several aspects of the notion of quantum pseudorandomness.

Pseudorandom states ($\prs$) are a family of states for which it is hard to distinguish between polynomially many copies of either a state sampled uniformly from the family or a Haar random state. A weaker notion, called single-copy pseudorandom states ($\oprs$), satisfies this property with respect to a single copy. We obtain the following results:
\begin{itemize}
\item First, we show, perhaps surprisingly, that $\oprs$ (as well as bit-commitments) exist relative to an oracle that provides a \emph{single} Haar random state. 
\item Second, we build on this result to show the existence of an isometry oracle relative to which $\oprs$ exist, but $\prs$ do not.
\end{itemize}
Taken together, our contributions yield one of the first black-box separations between central notions of quantum pseudorandomness, and introduce a new framework to study black-box separations between various inherently quantum primitives.\ifnum\shortver=0\footnote{\anote{Added this footnote.}We point out that 
an earlier version of this paper claimed an oracle separation of $\oprs$ and $\prs$ relative to a unitary oracle. However, the ``lifting'' of our isometry oracle to a unitary oracle contained a mistake, pointed out to us by Mark Zhandry. While the separation can still be lifted to be relative to a unitary oracle, this is done via different techniques in~\cite{BMM+24} (and via the more recently proposed unifying framework of~\cite{GZ25}). Our lifting techniques are only sufficient to yield a separation relative to a ``parametrized'' unitary oracle, i.e.\ one where the unitary oracle depends on the security parameter.}\fi

 \ifnum\llncs=1
 \keywords{ Haar random states \and Quantum pseudorandomness  \and Black-box separation.}
 \fi
\end{abstract}
\ifnum\toc=1
    \ifnum\llncs=0
    %     \setcounter{secnumdepth}{3}
    %     \setcounter{tocdepth}{3}
    % \fi
        \newpage
        \tableofcontents 
        \newpage
    \fi
\fi

\tableofcontents

\section{Introduction}
\label{sec:introduction}
It is well known that computational assumptions are necessary for almost all modern classical and quantum cryptographic tasks. The minimal assumption that is useful for classical cryptography is the existence of one-way functions ($\owf$). 
This assumption is known to be equivalent to the existence of many other cryptographic applications, such as pseudorandom number generators, pseudorandom functions, digital signatures, symmetric-key encryption, and commitments (see, e.g.,~\cite{Gol01,Gol04}).

The quantum setting presents a drastically different picture: a variety of quantum primitives are known that are sufficient to build cryptography, but are \emph{potentially weaker} than one-way functions. Recently, Tomoyuki Morimae coined the term \emph{Microcrypt}, as an addition to Impagliazzo's five worlds~\cite{Imp95}, to refer to such quantum primitives (and their cryptographic applications)\footnote{As far as we know, Morimae introduced the term in a talk \url{https://www.youtube.com/live/PKfYJlKD3z8?feature=share&t=1048}, though he did not provide a precise definition, so our definition might be slightly different than his original intention.}.
One of the tenants of Microcrypt are \emph{pseudorandom states} ($\prs$), first introduced by Ji, Liu, and Song~\cite{JLS18}. This is a family of efficiently generatable quantum states $\{\ket{\phi_k}\}_{k\in \{0,1\}^n}$ such that it is computationally hard to distinguish between polynomially many copies of (a) $\ket{\phi_k}$ sampled uniformly from the family, and (b) a uniformly (Haar) random quantum state. Ji, Liu, and Song also provided a black-box construction of $\prs$ from a $\owf$.
Subsequent to \cite{JLS18}, many other tenants of Microcrypt have been introduced, such as pseudorandom function-like states ($\prfs$)~\cite{AGQY22}, efficiently samplable statistically far-but-computationally-indistinguishable pairs of (mixed) quantum states ($\efi$ pairs)~\cite{Yan22,BCQ23}, one-way state generators~\cite{MY22b}, and pseudorandom states with proof of destruction~\cite{BBSS23}. 

% Applications
Many cryptographic applications are known based on Microcrypt assumptions. By now, variants of all of the main Minicrypt\footnote{Minicrypt primitives are those that are equivalent to one-way functions. The term was introduced by Impagliazzo \cite{Imp95}.} primitives have been shown to be in Microcrypt, including symmetric-key encryption, commitments (recently, also commitments to quantum states~\cite{GJMZ23}), PRGs, PRFs, garbled circuits, message authentication codes, and digital signatures. Perhaps more surprisingly, Microcrypt also contains some tasks in Cryptomania, namely, secure multi-party computation~\cite{MY22b,BCKM21,GLSV21} and public-key encryption with quantum public keys~\cite{BGH+23}. The key factor contributing to the surprise is Impagliazzo and Rudich's separation between one-way functions (Minicrypt) and public-key encryption\footnote{Note that this classical separation does not apply for public key encryption with \emph{quantum} public keys.} and oblivious transfer (Cryptomania)~\cite{IR89}. The new constructions circumvent classical impossibilities because they involve quantum states, e.g.\ commitments and multiparty computation rely on quantum communication, and encryption schemes have quantum ciphertexts.  %As a result, many of these constructions do not have perfect completeness. 

% Separations
The evidence that these quantum primitives are \emph{weaker} than Minicrypt comes from Kretschmer's quantum oracle separation of $\prs$ and $\owf$s \cite{Kre21}. The separating oracle consists of a family $\{\mathcal{U}_n\}_{n \in \mathbb{N}}$, where $\mathcal{U}_n$ is a list of exponentially many Haar random $n$-qubit unitaries $\{U_k\}_{k \in\{0,1\}^n}$. Relative to this oracle, there is a simple construction of a $\prs$: for $k \in \{0,1\}^n$, let $\ket{\phi_k}:=U_k\ket{0^n}$. %Kretchmer shows that a (computationally unbounded) adversary cannot distinguish between $t$ copies of $\ket{\phi_k}$ and $t$ copies of a Haar random states whenever $t$ and the number of queries to the quantum oracle is polynomial. 
Note that, if we just consider the action of the unitaries $U_k$ on the standard basis states, i.e.\ the set of states $U_k \ket{x}$ for $x \in \{0,1\}^n$, then, for each $n$, Kretschmer's oracle can be viewed as providing $2^{2n}$ ``essentially Haar random'' states\footnote{The states are Haar random subject to the constraint that they should be pairwise orthogonal (for each fixed $k$).}.
 In another work, Bouland, Fefferman and Vazirani~\cite{BFV19} show\footnote{Modulo a technical gap in their proof~\cite[p.\ 19]{BFV19}: "We expect the same result would apply \dots but we do not prove this fact." } a $\prs$ construction relative to a family $\{\mathcal{U}_n\}_{n \in \mathbb{N}}$, where $\mathcal{U}_n = (U, U^{-1})$ for a Haar random $n$-qubit $U$. By considering the action of $U$ on the standard basis states, this oracle can be viewed as providing $2^{n}$ essentially Haar random states. This raises a natural question. What can be done with much fewer Haar random states? We look at the most extreme case and ask: 
\begin{center}
\emph{What are the cryptographic implications of having oracle access to a \emph{single} Haar random state?}\footnote{Or, more precisely, one $n$-qubit Haar random state for each value of $n$ (which is accessed by providing the input $1^n$).}
\end{center}
We put forward the \chrs\ (\CHRS) model, where all parties (including the adversary) have access to an arbitrary \emph{polynomial number of copies} of a \emph{single} Haar random state. We find that this model sheds light on several aspects of quantum pseudorandomness.
First of all, is quantum pseudorandomness possible in this model? In the classical setting, having access to a fixed (random) string, which can be used both by the algorithm and the adversary, is not enough to construct pseudorandomness (e.g., pseudorandom generators). In the quantum setting, one may naturally expect that, similarly, a single Haar random state is not enough to construct \emph{quantum} pseudorandomness.

The $\prs$ variant that is most relevant for this work is \emph{single-copy} pseudorandom states ($\oprs$), introduced by Morimae and Yamakawa~\cite{MY22a}. They differ from (multi-copy) pseudorandom states ($\prs$) in two important ways (see \cref{def:oprs} for a formal definition):
\begin{enumerate}
    \item The adversary needs to distinguish between a single copy of the pseudorandom state and a single copy of a Haar random state.
    \item The construction has to be ``stretching'': the number of output qubits has to be greater than the key size (for this to be a non-trivial object).
\end{enumerate}

\subsection{Our results}
Our first result is that, perhaps surprisingly, single-copy pseudorandom states exist in this model:

\begin{theorem}[Informal]
\label{thm:main2}
$\oprs$ exist in the \CHRS\ model.
\end{theorem}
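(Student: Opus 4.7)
The plan is to exhibit an explicit 1PRS construction in the CHRS model and bound its distinguishing advantage directly, via Haar-integration techniques applied to the adversary's joint view (polynomially many copies of $\ket{\vartheta}$ together with one copy of the challenge). A natural template is $\ket{\phi_k} := V_k \ket{\vartheta}^{\otimes c}$ (possibly followed by tracing out some registers), where $\{V_k\}_{k \in \{0,1\}^m}$ is an efficient keyed family of unitaries, $c$ is a small constant, and $m$ is strictly less than the number of output qubits---so that the construction is stretching. Finding the right $\{V_k\}$, so that stretching and security coexist, is the main design challenge.

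Security reduces to showing that, for any polynomial $t=\poly(n)$, the trace distance
\[
\Bigl\| \E_{\vartheta, k}(\ket{\vartheta}\bra{\vartheta})^{\otimes t} \otimes \ket{\phi_k^\vartheta}\bra{\phi_k^\vartheta} \;-\; \E_\vartheta (\ket{\vartheta}\bra{\vartheta})^{\otimes t} \otimes \tfrac{I}{2^n} \Bigr\|_1
\]
is negligible in $n$. The central technical tool is the Haar-integration identity $\E_\vartheta (\ket{\vartheta}\bra{\vartheta})^{\otimes r} = \Pi_{\mathrm{sym}}^r / \binom{2^N+r-1}{r}$ (with $N$ the number of qubits of $\ket{\vartheta}$), which lets both density matrices be expressed in terms of projectors onto the symmetric subspace of $(\mathbb{C}^{2^N})^{\otimes (t+c)}$. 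I will expand the squared Hilbert-Schmidt norm of the difference via this identity, carefully tracking partial traces; this reduces the problem to a combinatorial sum involving permutation operators acting on the registers and traces of products of the $V_k$'s. Converting the resulting $\|\cdot\|_2$-bound to a $\|\cdot\|_1$-bound will use that the difference is invariant under permutations of the first $t$ registers and is therefore supported on an effective subspace whose dimension scales only polynomially in $t$ (for fixed $n$), so that the conversion does not blow up exponentially.

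The main obstacle---and what makes the statement surprising---is that polynomially many copies of $\ket{\vartheta}$ already give the adversary substantial information about it, yet the construction must render the challenge $\ket{\phi_k^\vartheta}$ indistinguishable from a fresh independent Haar state. A naive construction where $\ket{\phi_k^\vartheta}$ is trivially recoverable from $\ket{\vartheta}$ (for example, by applying a keyed unitary whose inverse the adversary can guess given a short key space, or by exposing $\ket{\phi_k}$ as a sub-state of $\ket{\vartheta}$ that the adversary can also measure out) falls to an obvious ``guess-and-check'' attack. The crux of the proof will be identifying a construction for which the polynomially-small key space still suffices to hide the challenge, crucially exploiting the single-copy restriction to ensure that the adversary obtains only negligible information from any single measurement. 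Formalizing this will likely require a hybrid argument over the key, combined with careful moment bounds on Haar averages, and is where I expect the bulk of the technical work to lie.
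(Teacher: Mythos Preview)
Your plan has two genuine gaps.

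\textbf{First, you never commit to a construction.} You write that ``finding the right $\{V_k\}$\dots is the main design challenge,'' and then leave it open. But the entire content of the theorem is precisely that such a family exists; the analysis technique is secondary. The paper's construction is concrete and surprising: take a \emph{single} copy of the $m$-qubit Haar state $\ket{\psi}$ and apply a quantum one-time pad $X^aZ^b$ to only the first $0.45m$ qubits, with key $k=(a,b)$ of length $n=0.9m<m$. The nontrivial insight is that padding \emph{fewer than half} the qubits still scrambles the whole state (even given polynomially many reference copies of $\ket{\psi}$). Without identifying some construction with this kind of property, the rest of your outline has nothing to act on.

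\textbf{Second, your $\|\cdot\|_2\to\|\cdot\|_1$ conversion does not work.} You claim the difference operator is supported on a subspace ``whose dimension scales only polynomially in $t$ (for fixed $n$),'' presumably the symmetric subspace across the $t$ reference copies. But that subspace has dimension $\binom{2^N+t-1}{t}$, which for $t=\poly(N)$ is of order $2^{Nt}/t!$ --- exponential in the security parameter. The $\sqrt{\textnormal{rank}}$ factor in $\|A\|_1\le\sqrt{\textnormal{rank}(A)}\,\|A\|_2$ therefore blows up exponentially, and any Hilbert--Schmidt bound you obtain will be swamped. Permutation symmetry of the $t$ copies is not enough to cut the rank down to something polynomial in $N$.

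For comparison, the paper's route avoids moment calculations entirely. It (i) invokes Harrow's result that $(U\otimes I)\ket{\Phi_d}$ for Haar $U$ is $r$-copy close to a Haar state on $d^2$ dimensions; (ii) observes that a QOTP on exactly half the qubits of $(U\otimes I)\ket{\Phi_d}$ makes the \emph{whole} state maximally mixed via the Pauli twirl (because the two halves are maximally entangled); and (iii) proves a ``stretching'' lemma showing that if an $(m{-}1)$-qubit family scrambles an $(m{-}1)$-qubit Haar state, then the same family applied to the last $m{-}1$ qubits of an $m$-qubit Haar state scrambles it too, at the cost of a factor $5$ in trace distance. Iterating (iii) $0.1m$ times starting from (ii) yields the stretching construction with an overall $5^{0.1m}/2^{0.45m}$ error, which is exponentially small. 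None of these structural ingredients appears in your plan.
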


The $\oprs$ is statistically secure as long as the number of copies of the Haar random state that the adversary receives is polynomial. This result is shown in \cref{sec:oprs_in_chrs_model}. One of the main technical ingredients that we introduce to prove \cref{thm:main2} is a certain ``stretching'' result for quantum pseudorandomness in the CHRS model (\cref{thm:ammplify-random-informal} in the technical overview, and \cref{thm:ammplify-random} in the main text), which may find application elsewhere.

As a result, we show that the statistical $\oprs$ above can be used to achieve a surprisingly strong form of bit-commitment:
\begin{theorem}[Informal]
    In the \CHRS\ model, a non-interactive quantum bit-commitment exists that is statistically hiding and binding.
\end{theorem}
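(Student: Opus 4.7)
The plan is to take the statistical $\oprs$ family $\{\ket{\phi_k}\}_{k\in\{0,1\}^\lambda}$ from \cref{thm:main2}, with output length $n \ge 2\lambda$ (achievable via the stretching result), and plug it into a standard canonical quantum bit-commitment construction. Specifically, I would define
\[
\ket{\Psi_0}_{RC}=\frac{1}{\sqrt{2^\lambda}}\sum_k\ket{k}_R\ket{\phi_k}_C,\qquad \ket{\Psi_1}_{RC}=\frac{1}{\sqrt{2^n}}\sum_x\ket{x}_R\ket{x}_C.
\]
To commit to bit $b$, the sender prepares $\ket{\Psi_b}$ (using its polynomially many copies of the \CHRS{} state $\ket{\psi}$ to instantiate the $\oprs$ generator when $b=0$) and sends register $C$ to the receiver. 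To open, the sender sends $b$ together with $R$, and the receiver applies the inverse preparation unitary and measures to verify.

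For hiding, the reduced state of $\ket{\Psi_b}$ on $C$ is $\rho_0=\E_k[\ket{\phi_k}\bra{\phi_k}]$ if $b=0$ and $\rho_1=I/2^n$ if $b=1$. Any distinguisher against the commit register is precisely a single-copy distinguisher between the $\oprs$ mixture and a single copy of a Haar random state, so \cref{thm:main2} gives negligible distinguishing advantage for any adversary making polynomially many queries to $\ket{\psi}$; this yields statistical hiding in the \CHRS{} sense. For binding, I would invoke the standard Uhlmann-based sum-binding bound for canonical commitments, namely $p_0+p_1\le 1+F(\rho_0,\rho_1)$. Since $\rho_0$ is supported on at most $2^\lambda$ pure states, Cauchy--Schwarz yields
\[
F(\rho_0,\rho_1)^2=\frac{(\tr\sqrt{\rho_0})^2}{2^n}\le \frac{\operatorname{rank}(\rho_0)}{2^n}\le 2^{\lambda-n}\le 2^{-\lambda},
\]
so sum-binding holds with negligible error. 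Crucially, this bound is purely information-theoretic and imposes no restriction on the sender's query budget.

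The step I expect to need the most care is reconciling the conclusion with the Mayers--Lo--Chau no-go: how can a non-interactive bit commitment be simultaneously statistically hiding and binding? The resolution is the non-standard meaning of ``statistical'' in the \CHRS{} model. Hiding is statistical only against adversaries making polynomially many queries to $\ket{\psi}$; as density operators, $\rho_0$ and $\rho_1$ are in fact far apart (one has rank at most $2^\lambda$, the other has full rank $2^n$), and this very gap is what enables unconditional binding. A truly unbounded adversary could distinguish $\rho_0$ from $\rho_1$ by collecting sufficiently many copies of $\ket{\psi}$ and internally simulating the $\oprs$, which is consistent with Mayers--Lo--Chau but is ruled out for \CHRS{}-bounded adversaries. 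The formal hiding reduction itself should be routine: a commit-register distinguisher using $q$ additional copies of $\ket{\psi}$ converts directly into an $\oprs$ single-copy distinguisher using the same number of copies, with no overhead.
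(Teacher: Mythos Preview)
Your commitment states and the hiding reduction match the paper exactly, and the binding insight---$\rho_0$ has rank at most $2^\lambda$ while $\rho_1=\I/2^n$, forcing $F(\rho_0,\rho_1)$ to be negligible---is the same one the paper exploits. The gap is in the reveal phase. ``Apply the inverse preparation unitary and measure'' does not give a well-defined verification here: the preparation of $\ket{\Psi_0}$ consumes a copy of the unknown Haar state $\ket{\psi}$, so after applying the inverse the receiver is left with (ideally) $\ket{0}_{\sfR}\ket{\psi}_{\sfC}$ and must verify that $\sfC$ holds $\ket{\psi}$---which cannot be done by ``measuring'', only approximated via SWAP tests against fresh copies. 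The Uhlmann sum-binding bound $p_0+p_1\le 1+F(\rho_0,\rho_1)$ you invoke is derived for the canonical verification that \emph{exactly} projects onto $\ket{\Psi_b}$; once the receiver's check is a SWAP test, that bound no longer applies as stated, and a single SWAP test accepts even orthogonal states with probability $\tfrac12$.

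The paper (following \cite{morimae2023unconditionally}) therefore replaces the inverse-and-project reveal with: the receiver prepares its own $\ket{\Psi_b}$ from fresh copies of $\ket{\psi}$ and SWAP-tests it against the revealed state, repeating the whole commitment $\lambda$ times in parallel to kill the $\tfrac12$ error. Binding is then established not via the Uhlmann bound but via the extractor-based definition: the trace-distance gap $\|\rho_0-\rho_1\|\ge 1-2^{-0.1\lambda}$ yields (via Yan's projector lemma) nearly orthogonal projectors $\Pi_0,\Pi_1$ on a single $\sfC$ register, and a combinatorial argument over the $\lambda$ parallel copies shows any state passing $R$'s SWAP-test verification for bit $b$ must have more than $2\lambda/3$ copies land in $\Pi_b$, so the extractor succeeds. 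A minor separate issue: the stretching result does \emph{not} give $n\ge 2\lambda$; the paper's $\oprs$ only achieves $m=\lambda/0.9$. Your fidelity bound still works ($F^2\le 2^{-0.1m}$), but you should not claim more stretch than is proved.
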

%The $\oprs$ can be used to construct a non-interactive quantum bit-commitment scheme, and therefore, this result also separates quantum bit-commitment from $\prs$.
%Furthermore, the bit-commitment is statistically binding \emph{and} hiding against any adversary that can access only polynomially many copies of the Haar random state,  see \cref{sec:commitment} for details. 

The hiding property holds against a computationally unbounded adversary that receives any polynomial number of copies of the Haar random state. In contrast, the binding property holds against a computationally unbounded adversary with an unbounded number of copies. Such a statistically binding and hiding commitment cannot exist in the standard model~\cite{LC97,May97}. The proof of the theorem follows the approach of Morimae and Yamakawa~\cite{MY22a} to construct commitments from a $\oprs$. The subtlety is that the construction of ~\cite{MY22a} utilizes the \textit{inverse} of the generator of the $\oprs$, something that is in general infeasible in the CHRS model. We settle the issue by showing a weak equivalence between the CHRS oracle and a corresponding unitary oracle, which is self-inverse (see \cref{sec:tech-state-to-unitary} for a technical overview). Thanks to Theorem 14 in \cite{Qia23}, the commitment scheme that we obtain in the CHRS model can be compiled into an $\epsilon$-simulation secure one, using an adaption of the compiler from \cite{BCKM21}. This version of commitment is sufficient to build secure multiparty computation via the construction in \cite{BCKM21}.
\begin{comment}
The hiding property holds against a computationally unbounded adversary that receives any polynomial number of copies of the Haar random state. In contrast, the binding property holds against a computationally unbounded adversary with an unbounded number of copies. Such statistical binding and hiding cannot exist in the standard model~\cite{LC97,May97}.
The theorem above is proven in \cref{sec:commitment}, in a similar way as shown previously by  \cite{morimae2023unconditionally}. Thanks to Theorem 14 in \cite{Qia23}, the commitment scheme that we obtain in the CHRS model can be compiled into an $\epsilon$-simulation secure one, using an adaption of the compiler from \cite{BCKM21}. This version of commitment is sufficient to build secure multiparty computation via the construction in \cite{BCKM21}.
\end{comment}

Even though plenty of relations involving Microcrypt primitives are known, the only \emph{black-box separations} involving Microcrypt are the following: Kretschmer~\cite{Kre21} separated post-quantum $\owf$ from $\prs$, via a quantum oracle. Ananth, Qian and Yuen~\cite{AQY22} observed that this separation also separates $\owf$ from $\prfs$. Kretschmer et al.~\cite{KQST23} separated $\owf$ from $\oprs$ via a \emph{classical} oracle. However, when we zoom in on Microcrypt, almost nothing is known about whether different Microcrypt primitives are equivalent to each other, or whether there is a hierarchy.
%As far as we are aware, the only other cryptographic separations (even outside Microcrypt) are in the quantum computation and classical communication (QCCC) model, see~\cite{CLM23} and references therein. 
The only known non-trivial\footnote{\cite{BS20b} (see also \cite[p.3]{ALY23}) show that PRS with very short output ($c \cdot \log (\mylambda)$ for $c \ll 1$, where $\mylambda$ is the length of the key) exist \emph{unconditionally}. Hence, they are trivially black-box separated from all of the other Microcrypt primitives which require computational assumptions.} separation is between short output and long output $\prs$ (with the former being potentially stronger). This separation is an immediate consequence of the works of Barhoush et al. \cite{barhoush2023pseudo} (which gives a construction of quantum digital signatures from PRS with short output) and Coladangelo and Mutreja \cite{coladangelo2024blackbox} (which shows an oracle separation between quantum digital signatures and PRS with long output), and was also shown in a concurrent work of Bouaziz--Ermann and Muguruza \cite{bouaziz2024quantum}.

In this work, building on our \cref{thm:main2}, we show a second black-box separation \emph{within} Microcrypt:
\begin{theorem}[Informal]
\label{thm:main}
There is an isometry oracle relative to which $\oprs$ exist, but $\prs$ with output length at least $\log n + 10$ (where $n$ is the seed length) do not. Additionally, there exists a ``parametrized'' \footnote{A ``parametrized'' oracle is a family of oracles $\{O_\lambda\}$. Existence relative to $\{O_\lambda\}$ means that, for a security parameter $\lambda$, both the construction and the adversary are only allowed to query $O_\lambda$. An oracle of this kind does not rule out the most general kind of black-box construction (which can make use of an arbitrary unitary implementation of primitive $A$, and its inverse, in order to build primitive $B$), but only rules out black-box constructions of primitive $B$ that, for a fixed security parameter $n$, only make use of a unitary implementation of $A$ for the same fixed security parameter $n$. We clarify that, while our unitary oracle separation is ``parametrized'', our isometry oracle separation is not.} unitary oracle relative to which $\oprs$ exist, but $\prs$ with output length at least $\omega(\log n)$ do not. %\footnote{This quantum oracle consists of an \emph{isometry} that provides a fixed Haar random state, along with a QPSPACE machine. While the latter is a \emph{unitary} oracle, the former is not.}
\end{theorem}
This yields one of the first black-box separations between central notions of quantum pseudorandomness. The separation relative to the isometry oracle is essentially tight in terms of output length, since PRS with very short output ($c \cdot \log (\mylambda)$ for $c \ll 1$) exist \emph{unconditionally} \cite{BS20b}. We show this result in \ifproceedingelse{the full version~\cite{arxivfullversion}}{\cref{sec:oralce-sep}}. The upgrade to a ``parametrized'' unitary oracle is inspired by techniques by Ji, Liu, and Song~\cite{JLS18} and Zhandry~\cite{zhandry2024space}, with some differences.\footnote{As mentioned earlier, a previous version of this paper claimed to lift the isometry oracle to a standard unitary oracle (rather than a ``parametrized'' one). However, the proof of this contained a mistake, pointed out to us by Mark Zhandry. A separation relative to a standard unitary oracle (in the full parameter regime) can still be obtained via different techniques, as in ~\cite{BMM+24} or \cite{GZ25}.} %Our technique also upgrades the black-box separations shown in a related work~\cite{ananth2024cryptography}, and we believe it is likely to be useful in other future separations.

    Taken together, our contributions introduce a new framework that seems very well-suited to study black-box separations between various inherently quantum primitives, particularly between ``single-copy'' and ``multi-copy'' primitives. Our framework has already been fruitful, and has been utilized in the works of Bostanci, Chen, and Nehoran~\cite{BCN24}, and Behera et al.~\cite{BMM+24}\footnote{We clarify that, while \cite{BCN24} and \cite{BMM+24} are subsequent to the original version of our paper (which introduces the CHRS model, and proves the first black-box separation of $\oprs$ and $\prs$), our isometry-to-unitary oracle upgrade appears in a later version of our paper, which is concurrent to \cite{BCN24} and \cite{BMM+24}.}.

Finally, for the reader's benefit, we include in \ifproceedingelse{the full version~\cite{arxivfullversion}}{Section \ref{sec:black-box-reductions}} a formal discussion of various notions of black-box oracle separations and their implications in terms of the impossibility of black-box constructions. %In particular, Theorem \ref{thm:main} implies that there cannot be a fully black-box construction of a $\prs$ from a $\oprs$ that uses the $\oprs$ as an \emph{isometry} (see Section \ref{sec:black-box-reductions} for more details).
%Note that Theorem \ref{thm:main} of course also establishes that, relative to the same oracle, primitives that imply a black-box construction of $\prs$, such as $\owf$ (as shown in~\cite{JLS18}) and $\prfs$, also cannot exist. 

\paragraph{Related work.}
In this work, we introduce the \chrs\ (\CHRS) model, in which both the generation algorithm and the adversary have access to polynomially many copies of a Haar random state over $n$ qubits. 
There are two related models. The first, which our work is a particular case of, was called the \emph{quantum auxiliary input} model (where the quantum state is sometimes referred to as the quantum advice) by~\cite{morimae2023unconditionally}, in which the parties are provided with polynomially many copies of a quantum state, which need not be efficiently generatable\footnote{We prefer not to use the term ``quantum auxiliary input'' since in most other works we are aware of (see~\cite{DGKPV10} and references therein), a quantum auxiliary input typically represents a setting in which the adversary may have information that may depend on the honest parties' inputs, and in particular, the secret key. In contrast, in our setting and that of \cite{morimae2023unconditionally}, the ``auxiliary'' state is fixed, independently of any honest parties' input.}. Chailloux, Kerenidis, and Rosgen~\cite{CKR16} showed that quantum commitments with quantum auxiliary input exist under a \emph{computational assumption}. They provide two schemes, where either the hiding or binding properties are computational. 
Morimae, Nehoran, and Yamakawa~\cite{morimae2023unconditionally} and Qian~\cite{Qia23} recently proved, \emph{unconditionally}, the existence of a computationally hiding and statistically binding commitment in the quantum auxiliary input model. This improves on the result of~\cite{CKR16}, in the sense that the computational assumption is removed.

The second related model is the \emph{common reference quantum state} (CRQS) model, in which the quantum state needs to be efficiently generatable. Note that, in the classical setting, the common reference string represents a model with a trusted setup. In this model, \cite{morimae2023unconditionally} show a statistically hiding and binding commitment with similar properties to ours. The difference is in the order of quantifiers of the hiding property: in our work, the scheme is hiding against an adversary that is allowed to have any polynomial number of copies of the quantum (Haar-random) state; in their construction (see~\cite[Theorem 1.4]{morimae2023unconditionally}), they first pick a polynomial $t(n)$ and show a construction which is hiding against adversaries which receive $t(n)$ copies of the CRQS\footnote{Even though this was not formally claimed in~\cite{morimae2023unconditionally}, we believe that the construction mentioned in the previous paragraph, with (inefficiently generatable) auxiliary quantum inputs, satisfies the same statistical security guarantees as ours.}. Of course, the main disadvantage of our work is that a Haar random state cannot be efficiently generated, whereas the state they use is efficiently generatable. However, note that if one is satisfied with security against some fixed polynomial $t(n)$ of copies, the Haar random state can be replaced efficiently by a quantum $t(n)$-design. 

We emphasize the features that differentiate our work:
\begin{itemize}
\item[(i)] Our common random state is \emph{structure-less}: it is a Haar random state.
\item[(ii)] We show how to achieve quantum pseudorandomness in this model. The related works construct commitments directly, but their constructions do not have any implications with regard to quantum pseudorandomness. We find it quite surprising that a Haar random state alone can yield quantum pseudorandomness. It is also thanks to this connection that we are able to separate different flavors of quantum pseudorandomness, namely $\oprs$ and $\prs$.
\end{itemize}

Finally, in the past few years, many results regarding Microcrypt have been discovered---at this point, too many to cover in detail. 
\ifnum \shortver=0
    A diagram showing the different Microcrypt primitives, their relations, applications, and separations are depicted in \cref{fig:MircoCrypt_diagram} on \cpageref{fig:MircoCrypt_diagram}. 
\begin{figure} 
    \centering
    %\vspace{-20pt}
    
    \centerline{
    \includegraphics[width=1.2\textwidth%,trim=0cm 1.4cm 0cm 0cm,clip
    ]{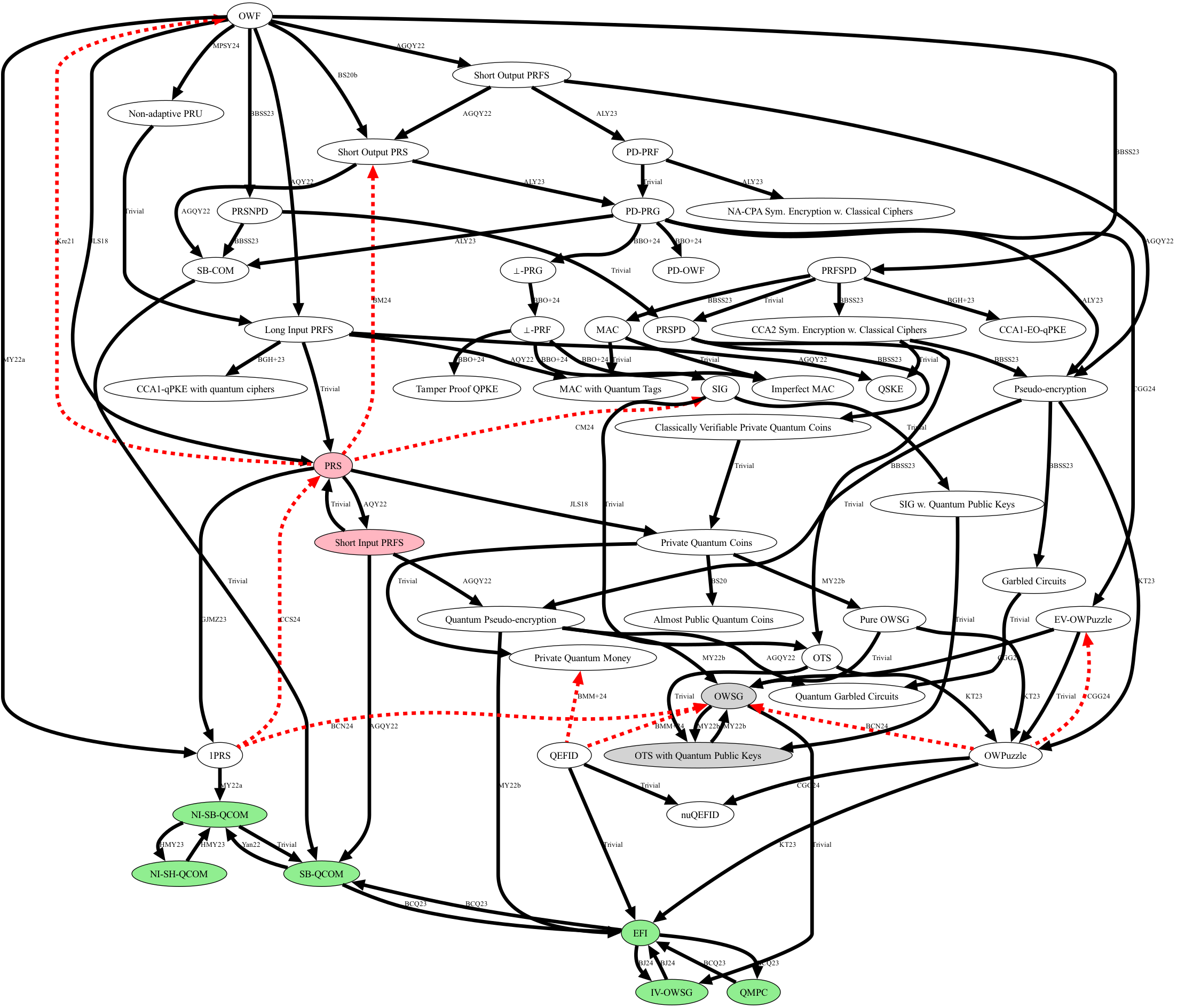}}
%    \vspace{-15pt}
%    \includegraphics[page=2,width=0.9\textwidth,trim={0cm 19cm 0cm 1.4cm},clip]{figs/PRS diagram with legend.pdf}
    \vspace{-1pt}
    \caption{Diagram of the known relations and applications in Microcrypt, as of September 2024. Regular arrows indicate implications, and dotted arrows indicate black-box separations. Nodes that share a color are equivalent. An interactive version of this diagram is available at \url{https://sattath.github.io/microcrypt-zoo/}, with additional features, such as ``mouseover a node'' reveals additional details, and ``mouseover an edge'' shows a clickable source for that relation. \ifnum\anonymous=0
        The website is updated periodically, therefore, the online version may differ from the one above as new results are published.
    \fi}
    \label{fig:MircoCrypt_diagram}
\end{figure}
\else
A diagram showing the different Microcrypt primitives, their relations, applications, and separations is available at \url{https://sattath.github.io/qcrypto-graph/}.
\fi
\paragraph{Concurrent work.} 
We point out the independent and concurrent work of Ananth, Gulati, and Lin~\cite{ananth2024note}, which appeared shortly after the first version of our paper, and was subsequently expanded in \cite{ananth2024cryptography}. We refer to the two works collectively as AGL. We briefly discuss how our work and AGL relate to each other. In short, AGL has stronger feasibility results, while our work has arguably stronger negative results. 

AGL improves upon our 1PRS construction, by presenting a strictly simpler 1PRS construction that achieves arbitrary stretch, with a simpler elementary analysis. AGL also provides a construction of PRS that are secure against adversaries that receive a \emph{fixed} (slightly less than linear) number of copies of the PRS state.

Our work gives an oracle separation between $\oprs$ and $\prs$ in the CHRS model, whereas AGL only separates $\oprs$ from $\prs$ that are limited to using one copy of the common Haar state (and thus it is a bit unclear what the implication of the latter is in terms of impossibility of black-box constructions). The more recent version of AGL includes a construction of $O({n^{0.99}})$-copy secure pseudorandom function-like states ($\textsf{PRFS}$) and an impossibility result for certain primitives beyond Microcrypt (like interactive key-agreement and commitments) in the CHRS model. Before our present work, all of the mentioned separations treated the CHRS oracle as an \emph{isometry}\footnote{One can view an input-less oracle that provides a state as an isometry.}. This was slightly unsatisfactory for the following reason: a separation of primitive A from primitive B relative to an isometry oracle only rules out black-box constructions of B from A that use ``isometry'' implementations of the procedures from A (i.e.\ when running an implementation of a procedure from A, the construction of B is not allowed to set the auxiliary qubits to anything but all zeros, and it is not allowed to use the inverse of the algorithms of A -- we refer the reader to \ifproceedingelse{the full version~\cite{arxivfullversion}}{Section \ref{sec:black-box-reductions}} for a formal discussion of this point). %Upgrading all of these separations to be relative to a \emph{unitary} oracle (and its inverse) was a question left open by both works, which we resolve here. This upgrade rules out the most general kind of black-box constructions (which can use unitary implementations of the procedures from A, as well as their inverses, to construct B).

We also point out the work by Bostanci, Chen, and Nehoran~\cite{BCN24}, and by Behera et al.~\cite{BMM+24} (subsequent to the first version of our paper, but concurrent to the second), who also contain techniques to lift separations in this framework from isometry to unitary oracles. In particular, the lifting result from \cite{BMM+24} is stronger than ours, as it applies to the full parameter regime of $\prs$ output length, and, more importantly, lifts an isometry oracle to a standard unitary oracle (rather than a ``parametrized'' one). The results of \cite{BCN24} and \cite{BMM+24} also leverage our framework and extend our results to separate $\oprs$ and one-way state generators (a ``multi-copy'' notion of quantum ``one-wayness'' introduced in \cite{MY22a,MY22b}). Additionally, both of these works also study the recently introduced notion of ``one-way puzzles''~\cite{khurana2024commitments}, and separate its efficient and inefficient verifier variants. All of our other contributions (introducing the CHRS model itself, and showing that it is useful for separating notions of quantum pseudorandomness) are unique to our paper and \cite{ananth2024cryptography}. 

\ifnum \shortver=0
\paragraph{Open problems.}
This work opens up several directions for further research.
\begin{itemize}
%In the CHRS model, we are given access to a single Haar random state (per security parameter). Can a similar separation be shown relative to a single Haar random unitary (per security parameter)? Can this be strengthened so that the separation holds, also given the inverse unitary? We note that the attack that breaks any $\prs$ would \emph{not} succeed if the generation algorithm can use a unitary oracle rather than the isometry we use. 
\item Our separation result (Theorem~\ref{thm:main}) holds relative to a quantum oracle. Can it be shown relative to a \emph{classical} oracle? We note that Krethschmer et al.~\cite{KQST23} show a classical oracle relative to which $\oprs$ and commitments exist, but one-way functions do not. 

%\item As pointed out earlier, since our separating oracle is technically an \emph{isometry}, Theorem~\ref{thm:main} implies that there does not exist a fully black-box construction of a $\prs$ from a $\oprs$ that uses the $\oprs$ as an \emph{isometry} (see Section~\ref{sec:black-box-reductions} for more details). An open question is to rule out the most general kind of black-box construction, which can make use of a \emph{unitary implementation} of the $\oprs$ (as well as its inverse).

\item There are examples of primitives that we know can be constructed from $\prs$, but are \emph{not} known to be implied by $\oprs$. The main examples are one-time digital signatures with quantum public keys~\cite{MY22a}, private quantum coins~\cite{JLS18}, and quantum pseudo-encryption~\cite{AQY22}. Currently, we do not have a separation between those applications%
\footnote{or even ones which are based on stronger Microcrypt assumptions, such as the existence of long input $\prfs$, which can be used to construct message authentication codes with quantum tags~\cite{AQY22}, quantum symmetric key encryption~\cite{AQY22}, and public key encryption with quantum ciphers and quantum public keys~\cite{BGH+23}.}
and $\oprs$.  Understanding whether any of these applications are separated from $\oprs$ would be interesting. 

%\item Lastly, in the plain model, ``flavor conversion''~\cite{Yan22,HMY23} allows switching between the computational and statistical hiding and binding properties. In our work, the commitment is statistically hiding against adversaries with \emph{polynomially many} copies of the Haar random state. It is statistically binding against adversaries with an \emph{arbitrary} number of copies of the Haar random state. Is there a flavor conversion theorem---this time, regarding the number of copies of the state that the adversary holds---that holds in our setting? \bnote{I think this should be fine since now we have commitment relative to a self-inverse unitary oracle}
\end{itemize}
\fi

\ifnum\anonymous=0
\subsubsection*{Acknowledgments}
AC and OS thank NTT research and Mark Zhandry for organizing the quantum money workshop, as well as the participants of the workshop, where this research was initiated. BC thanks Xingjian Li for helpful discussions. The authors also thank Mark Zhandry for pointing out an error in our first proof upgrading our oracle separation from isometry to unitary.

BC acknowledges supported by National Key Research and Development Program of China (Grant No.\ 2023YFA1009403) and National Natural Science Foundation of China (Grant
No.\ 12347104).

This research was supported by the Israel Science Foundation (grant No. 2527/24).

            \BeforeBeginEnvironment{wrapfigure}{\setlength{\intextsep}{0pt}}
            \begin{wrapfigure}{r}{100px}
                %\centering
                \includegraphics[width=100px]{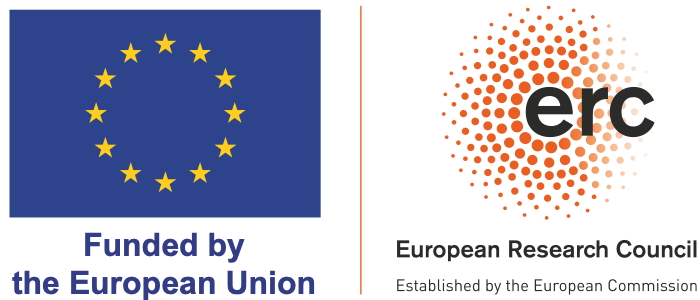}
            \end{wrapfigure}
            OS was funded by the European Union (ERC-2022-COG, ACQUA, 101087742). Views and opinions expressed are however those of the author(s) only and do not necessarily reflect those of the European Union or the European Research Council Executive Agency. Neither the European Union nor the granting authority can be held responsible for them.
\fi

\section{Technical Overview}
This section is organized as follows. In \cref{sec:1prs-techoverview}, we describe the construction of a $\oprs$ in the \CHRS\ model, and we give a high-level overview of the proof of security. We view this as the main technical contribution of our work. We also describe how to construct in the CHRS model following the approach in~\cite{MY22a}, with slight modification to deal with the inverse issue. Finally, in \cref{sec:separation-techoverview}, we describe an oracle separation between $\oprs$ and $\prs$. 
We consider the CHRS model augmented with quantum oracle access to a QPSPACE machine, and we describe a generic attack on any $\prs$ construction in this model. Since $\oprs$ still exist in this model, this yields an oracle separation between the two. %As a corollary, this implies that there is no black-box construction 

%Since this oracle model is a bit unconventional, we show the latter implication formally in Section \ref{sec:oralce-sep}).

\subsection{Construction of \texorpdfstring{$\oprs$}{1PRS} in the \CHRS\ model}

\label{sec:1prs-techoverview}
%We start by describing the construction, and we then provide an informal explanation of why it works. 
\paragraph{$\oprs$ definition.} Recall that, informally, a $\oprs$ is a QPT algorithm that takes as input a seed $k \in \{0,1\}^n$ (where $n$ is a security parameter) and outputs a state of some length $m>n$. We denote by $\ket{\phi_k}$ the output state on seed $k$. Then, security requires that a \emph{single} copy of the $\oprs$ state be computationally indistinguishable from a \emph{single} maximally mixed state of the same dimension, i.e.\ 
$$
\mathbb{E}_k \, \ket{\phi_k} \bra{\phi_k} \approx_c \frac{\mathds{1}}{2^m}
$$
(where $\approx_c$ denotes computational indistinguishability).

Note that this requirement is only non-trivial when $m>n$ (otherwise, one can simply output the seed itself). %\footnote{This is in contrast with the notion $\prs$, where security requires computational indistinguishability to hold even for polynomially many copies of the pseudorandom state.}.
Equivalently, one can think of the problem of constructing a $\oprs$ as the problem of finding a family $\{U_k\}_{k \in \{0,1\}^n}$ of efficiently computable unitaries such that 
$$
\mathbb{E}_k \, U_k \ket{0}\bra{0} U_k^{\dagger} \approx_c \frac{\mathds{1}}{2^m} \,.
$$
This problem becomes trivial if the family  $\{U_k\}$ is large enough. In particular, if $m = n$, a classical one-time pad, i.e.\ taking $U_k = X^k$ already suffices. One way to achieve the above with $m>n$ is, of course, to use a classical PRG, but this is of course already equivalent to assuming OWFs.

\paragraph{Working in the CHRS model.} We will instead describe how to construct a $\oprs$ in the CHRS model, i.e.\ when polynomially many copies of a single Haar random state are available to the construction and to the adversary. Our construction uses a single copy of the state $\ket{\psi}$, but security holds even when $r = \poly(n)$ copies of $\ket{\psi}$ are available to the adversary. 

We restrict ourselves to considering constructions of the following form: the $\oprs$ family $\{\ket{\phi_k}\}$ is such that $\ket{\phi_k} = U_k \ket{\psi}$. Let $m$ be the number of qubits of $\ket{\psi}$. Thus, the problem reduces to finding a family $\{U_k\}_{k \in \{0,1\}^n}$, for $m>n$, such that\footnote{Technically, as pointed out in an earlier footnote, parties in the CHRS model (including the adversary) have access to copies of one $m$-qubit Haar random state \emph{for each $m$}. However, it is clear that this is immaterial to the proof, since, for a given output length $m$, we are restricting our attention to constructions (i.e.\ choices of $U_k$) that only act on the $m$-qubit Haar state, and ignore the others.}
\begin{equation}
\label{eq:1}
\mathbb{E}_{\ket{\psi} \leftarrow \mu_
{2^m}} \,\mathbb{E}_{k \in \{0,1\}^n} (U_k \ket{\psi}\bra{\psi} U_k^{\dagger}) \otimes (\ket{\psi}\bra{\psi})^{\otimes r} \approx_c \mathbb{E}_{\ket{\psi} \leftarrow \mu_{2^m}} \frac{\mathds{1}}{2^m} \otimes (\ket{\psi}\bra{\psi})^{\otimes r} \,.
\end{equation}
In fact, we will describe a construction that achieves \emph{statistical} (rather than just computational) indistinguishability, assuming $r$ is polynomial in $n$. As anticipated, the crux of the problem is to achieve the above with $m > n$. %[footnote about how much smaller $m$ needs to be to give cryptography].

\paragraph{Construction of $\oprs$ in the CHRS model.}
For the reader's convenience (to help remember what the parameters refer to), going forward we have
\begin{itemize}
\item $k$: $\oprs$ seed.
\item $n = |k|$.
\item $m$: number of qubits of the output $\oprs$ state (this is also the number of qubits of the Haar random state $\ket{\psi}$).
\end{itemize}
Our construction of a $\oprs$ in the CHRS model is simple (although it is unclear a priori why it would work). We take the family of $m$-qubit unitaries $\{U_k\}$ to be a Quantum One-Time Pad (QOTP) on slightly less than half of the qubits, say $0.45m$. A bit more precisely, $k$ is a string of length $n \in [0.9m,m)$, which we can parse as $k = (a, b)$, where $a,b \in \{0,1\}^{n/2}$. Then, $U_k = (X^a Z^b) \otimes I$, i.e.\ $U_k$ applies $X^a Z^b$ to the first $n/2$ qubits of the $m$-qubit state it acts on. We now explain the intuition behind the construction.

\paragraph{First key idea: a quantum one-time pad on \emph{exactly} half of the qubits.} 

Notice, just for the sake of argument, that if we allowed ourselves to have $n = 2m$ (even though this violates the ``length extending'' requirement of $m >n$ by a large margin), then there would be a trivial choice of $U_k$ that works: simply pick $\{U_k\}$ to be a QOTP on \emph{all} of the qubits. Then, the $\oprs$ security property of Equation \eqref{eq:1} would be satisfied. Unfortunately, the full QOTP is very far from our goal: to comply with the length-extending requirement, a QOTP must be applied to \emph{strictly less than half} of the qubits. 

Let us simplify our life slightly for the moment: if we allow a QOTP on \emph{exactly half} of the qubits, i.e.\ $n = m$ (which still does not satisfy the requirement of $m>n$), is Equation \eqref{eq:1} satisfied? It turns out that the answer is yes (although the reason may be unclear at first). We provide an informal explanation.

The starting point is a recent result by Harrow \cite{harrow2023approximate}. This says that the state obtained by applying a Haar random unitary to one-half of a maximally entangled state is statistically indistinguishable from Haar random. Crucially, this guarantee also holds for multiple copies (in the appropriate parameter regime). A bit more precisely, Harrow proves the following. For $d \in \mathbb{N}$, let $\ket{\Phi_d} = \frac{1}{\sqrt{d}} \sum_{i=0}^{d-1} \ket{ii}$, and for a unitary $U$ acting on the left register, let $\ket{\phi_U} = (U \otimes I) \ket{\Phi_d}$. 
%$\ket{\phi_U} \coloneqq (U \otimes I) \ket{\Phi_d}$, here $\ket{\Phi_d} = \sum_{i=0}^{d-1} \ket{ii}$ denotes the maximally entangled state in $\C^d \otimes \C^d$. Harrow showed in \cite{harrow2023approximate} that
For a pure state $\ket{\psi}$, we denote by $\psi$ its density matrix.
\begin{lemma}[Harrow \cite{harrow2023approximate}, informal]\label{lem:harrow-tech-overview}
Let $r, d \in \mathbb{N}$. Then,
    \begin{equation*}
    % \label{eq:2}
    \norm{\E_{\ket{\psi} \gets \mu_{d^2}} [\psi^{\otimes r}] - \E_{U \gets SU(d)} [\phi_U^{\otimes r}]} \leq \frac{r^2}{d}
    \end{equation*}
\end{lemma}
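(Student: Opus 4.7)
The plan is to bound the trace norm by the Hilbert--Schmidt norm, exploiting the fact that both operators live in the symmetric subspace of $(\C^{d^2})^{\otimes r}$, and then reduce everything to the single moment identity $\E_W|\tr W|^{2r}=r!$ for Haar random $W\in SU(d)$.

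By the standard inequality $\norm{X}_1\le\sqrt{\operatorname{rank}(X)}\cdot\norm{X}_{HS}$ together with the fact that both $\E_\psi[\psi^{\otimes r}]$ and $\E_U[\phi_U^{\otimes r}]$ are supported on the symmetric subspace $S\subseteq(\C^{d^2})^{\otimes r}$ of dimension $|S|=\binom{d^2+r-1}{r}$, it suffices to compute the three traces $\tr(\E_\psi[\psi^{\otimes r}]^2)$, $\tr(\E_U[\phi_U^{\otimes r}]^2)$, and $\tr(\E_\psi[\psi^{\otimes r}]\cdot\E_U[\phi_U^{\otimes r}])$ and combine them via $\norm{A-B}_{HS}^2=\tr(A^2)+\tr(B^2)-2\tr(AB)$. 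Each is explicit: since $\E_\psi[\psi^{\otimes r}]=\Pi_{\text{sym}}/|S|$, we get $\tr(\E_\psi[\psi^{\otimes r}]^2)=1/|S|$; since $\E_U[\phi_U^{\otimes r}]$ is also supported on $S$ with unit trace, $\tr(\E_\psi[\psi^{\otimes r}]\cdot\E_U[\phi_U^{\otimes r}])=1/|S|$; and $\tr(\E_U[\phi_U^{\otimes r}]^2)=\E_{U,V}|\langle\phi_U|\phi_V\rangle|^{2r}$.

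For the last term, the ricochet identity $(M\otimes I)\ket{\Phi_d}=(I\otimes M^T)\ket{\Phi_d}$ gives $\langle\phi_U|\phi_V\rangle=\tr(U^\dagger V)/d$, and the standard Weingarten/Diaconis--Shahshahani moment $\E_W|\tr W|^{2r}=r!$ (valid for $r\le d$) yields $\tr(\E_U[\phi_U^{\otimes r}]^2)=r!/d^{2r}$. Combining,
\[\norm{\E_\psi[\psi^{\otimes r}]-\E_U[\phi_U^{\otimes r}]}_{HS}^2 = \frac{r!}{d^{2r}}-\frac{1}{|S|} = \frac{r!}{d^{2r}}\Bigl(1-\textstyle\prod_{i=0}^{r-1}(1+i/d^2)^{-1}\Bigr) = O\!\bigl(r!\,r^2/d^{2r+2}\bigr),\]
by Taylor-expanding $\prod_{i=0}^{r-1}(1+i/d^2)=1+r(r-1)/(2d^2)+\cdots$. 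Multiplying by $\sqrt{|S|}=O(d^r/\sqrt{r!})$ then yields the claimed trace-norm bound $O(r^2/d)$.

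The main obstacle is really just a bookkeeping check: the moment identity $\E_W|\tr W|^{2r}=r!$ is exact only when $r\le d$, and for larger $r$ there are subleading Weingarten corrections. In the regime of interest in this paper ($d$ at least superpolynomial in the security parameter, $r$ polynomial), the restriction $r\le d$ is harmless. A second minor subtlety is the Taylor expansion of $\prod_i(1+i/d^2)$, which should be done carefully enough to confirm the leading coefficient $r(r-1)/(2d^2)$ and to verify that the higher-order corrections are absorbed into the $O(\cdot)$.
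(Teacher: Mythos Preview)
The paper does not actually prove this lemma; it is quoted from Harrow~\cite{harrow2023approximate} (the formal version appears as \cref{lem:harrow-max-entangle}), and the surrounding text only offers heuristic intuition for the $r=1$ case before deferring to the reference for $r>1$. So there is no in-paper proof to compare your proposal against.

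Your argument is nonetheless sound. Two small comments. First, your final arithmetic understates what you have shown: from $\|A-B\|_{HS}^2=O(r!\,r^2/d^{2r+2})$ and $\sqrt{|S|}=O(d^r/\sqrt{r!})$ you actually get $\|A-B\|_1=O(r/d)$, not $O(r^2/d)$. In fact the product simplifies exactly, since
\[
|S|\cdot\|A-B\|_{HS}^2 \;=\; \frac{d^{2r}}{r!}\,P\cdot\frac{r!}{d^{2r}}\Bigl(1-\tfrac{1}{P}\Bigr)\;=\;P-1,\qquad P=\prod_{i=0}^{r-1}\bigl(1+i/d^2\bigr),
\]
and bounding $P-1\le e^{r(r-1)/(2d^2)}-1\le r^2/d^2$ for $r\le d$ gives $\|A-B\|_1\le r/d$, which is stronger than the stated $r^2/d$. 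Second, the passage between $SU(d)$ and $U(d)$ in the moment identity is harmless because $|\tr W|$ is invariant under $W\mapsto e^{i\theta}W$, so the Diaconis--Shahshahani moment $\E|\tr W|^{2r}=r!$ (valid for $r\le d$) holds over either group; and the restriction $r\le d$ is vacuous here since when $r>d$ the target bound $r^2/d$ already exceeds the trivial trace-norm bound~$2$.
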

In the case of a single copy ($r=1$), the following is some intuition as to why the result holds. Consider a Haar random state and any partition of its qubits into two registers $\mathsf{A}$ and $\mathsf{B}$. Then, with very high probability, a Haar random state has Schmidt coefficients close to uniform. This is somewhat intuitive (although it requires some work to prove). This implies that the following mixed state is close to a Haar random state:
$$ \mathbb{E}_{U, U' \leftarrow SU(d)} (U \otimes U') \Phi_d  (U \otimes U')^{\dagger}\,,$$
(the latter is a maximally entangled state to which independent Haar random unitary changes of basis are applied to each side).
However, notice that
\ifnum\widemargin=0
\begin{align*}
    \mathbb{E}_{U, U' \leftarrow SU(d)} (U \otimes U') \Phi_d (U \otimes U')^{\dagger} &=  \mathbb{E}_{U, U' \leftarrow SU(d)} (U \cdot U'^T  \otimes I ) \Phi_d (U \cdot U'^T \otimes I )^{\dagger}\\
    &=  \mathbb{E}_{U \leftarrow SU(d)} (U \otimes I) \Phi_d (U \otimes I)^{\dagger} = \mathbb{E}_{U \leftarrow SU(d)} \phi_U \, , 
\end{align*}
\else
\begin{align*}
    &\mathbb{E}_{U, U' \leftarrow SU(d)} (U \otimes U') \Phi_d (U \otimes U')^{\dagger} \\&=  \mathbb{E}_{U, U' \leftarrow SU(d)} (U \cdot U'^T  \otimes I ) \Phi_d (U \cdot U'^T \otimes I )^{\dagger}\\
    &=  \mathbb{E}_{U \leftarrow SU(d)} (U \otimes I) \Phi_d (U \otimes I)^{\dagger} = \mathbb{E}_{U \leftarrow SU(d)} \phi_U \, , 
\end{align*}
\fi
%$$ \mathbb{E}_{U, U' \leftarrow SU(d)} (U \otimes U') \Phi_d (U \otimes U')^{\dagger} =  \mathbb{E}_{U \leftarrow SU(d)} (U \otimes I) \Phi_d (U \otimes I)^{\dagger} = \mathbb{E}_{U \leftarrow SU(d)} \phi_U \,. $$
where the first equality follows from the ``Ricochet'' property of the maximally entangled state, and the second by the unitary invariance of the Haar measure. Thus,
$$ \E_{\ket{\psi} \gets \mu_{d^2}} [\psi] \approx \E_{U \gets SU(d)} [\phi_U] \,. $$
The general result for $r>1$ copies is much more involved, and we refer the reader to \cite{harrow2023approximate}.

So, how does Harrow's result help the analysis? The $r$-copy result says that
$$ \E_{\ket{\psi} \gets \mu_{d^2}} [\psi^{\otimes r}] \approx \E_{U \gets SU(d)} [\phi_U^{\otimes r}] \,.$$
Let $m = n$ be even, and take $d =2^{m/2}$, so that $\ket{\psi}$ is an $m$-qubit state, and $\ket{\Phi_d} = \frac{1}{\sqrt{2^{m/2}}} \sum_{i=0}^{2^{m/2}-1} \ket{ii}$, i.e.\ a maximally entangled state on $m$ qubits. Let $\mathcal{P}_{m/2}$ denote the Pauli group on $m/2$ qubits. Applying a QOTP to the first $m/2$ qubits (i.e.\ \emph{exactly half}) of the \emph{first} out of the $r$ copies, we get:
\if\widemargin=0
\begin{align}
\E_{P \gets \mathcal{P}_{m/2}}\E_{\ket{\psi} \gets \mu_{d^2}} \left[(P\otimes I) \psi (P^{\dagger}\otimes I) \otimes \psi^{\otimes (r-1)}\right] 
&\approx \E_{P \gets \mathcal{P}_{m/2}} \E_{U \gets SU(d)} \left[(P U \otimes I) \Phi_d (U^{\dagger}P \otimes I)^{\dagger}\otimes  \phi_U^{\otimes (r-1)}\right] \label{eq:3}\\
&= \E_{P \gets \mathcal{P}_{m/2}} \E_{U \gets SU(d)} \frac{1}{2^{m/2}} \sum_{i,j} PU\ket{i}\bra{j}U^\dagger P^\dagger \ot \ket{i} \bra{j} \ot \phi_U^{\ot r-1} \nonumber \\
&= \E_{P \gets \mathcal{P}_{m/2}} \E_{U \gets SU(d)} \frac{1}{2^{m/2}} \sum_{i,j} PU\ket{i}\bra{j}U^\dagger P^\dagger \ot \ket{i} \bra{j} \ot \phi_U^{\ot r-1} \nonumber \\
&=  \frac{\mathds{1}}{2^{m}} \ot \E_{U \gets SU(d)} \phi_U^{\ot r-1} \,, \label{eq:4}
\end{align}
\else
\begin{align}
&\E_{P \gets \mathcal{P}_{m/2}}\E_{\ket{\psi} \gets \mu_{d^2}} \left[(P\otimes I) \psi (P^{\dagger}\otimes I) \otimes \psi^{\otimes (r-1)}\right] 
\\&\approx \E_{P \gets \mathcal{P}_{m/2}} \E_{U \gets SU(d)} \left[(P U \otimes I) \Phi_d (U^{\dagger}P \otimes I)^{\dagger}\otimes  \phi_U^{\otimes (r-1)}\right] \label{eq:3}\\
%&= \E_{P \gets \mathcal{P}_{m/2}} \E_{U \gets SU(d)} \frac{1}{2^{m/2}} \sum_{i,j} PU\ket{i}\bra{j}U^\dagger P^\dagger \ot \ket{i} \bra{j} \ot \phi_U^{\ot r-1} \nonumber \\
&= \E_{P \gets \mathcal{P}_{m/2}} \E_{U \gets SU(d)} \frac{1}{2^{m/2}} \sum_{i,j} PU\ket{i}\bra{j}U^\dagger P^\dagger \ot \ket{i} \bra{j} \ot \phi_U^{\ot r-1} \nonumber \\
&=  \frac{\mathds{1}}{2^{m}} \ot \E_{U \gets SU(d)} \phi_U^{\ot r-1} \,, \label{eq:4}
\end{align}
\fi
where the last line follows by the Pauli Twirl (Lemma \ref{lem:quat-otp}). Recall that the ``closeness'' in the approximation of Equation \eqref{eq:3} is $\frac{r^2}{2^{m/2}}$ (from Lemma \ref{lem:harrow-tech-overview}). We emphasize the crucial step in the last equality: thanks to the maximal entanglement between the two halves of the first register, the QOTP on the first half actually causes \emph{both} halves to become maximally mixed.

It follows that, given $r=\poly(m)$ copies of an $m$-qubit Haar random state, applying a QOTP on the first $m/2$ qubits of the first copy is enough to make the first copy maximally mixed, even given the other $r-1$ copies. This gets us closer to our goal, but we are not there yet: we are still using an $m$-bit seed to obtain an $m$-qubit state.

\paragraph{Second key idea: quantum one-time pad on \emph{slightly less} than half of the qubits.}

If a QOTP on slightly less than half of the qubits were sufficient, this would solve our problem. We show that this is indeed the case!

The key technical ingredient in our proof can be viewed as a sort of ``stretching'' result, which may be useful elsewhere. Consider an $m$-qubit common Haar random state. Very informally, the ``stretching'' result says the following: if there is a way to obtain ``$m-1$ qubits of single-copy pseudorandomness'' from $n$ bits of classical randomness (where $n$ should be thought of as being linear in $m$), then one can also obtain ``$m$ qubits of single-copy pseudorandomness'' from $n$ bits of classical randomness, with a slight loss in statistical distance (i.e.\ it is possible to get one extra qubit of pseudorandomness!).
The loss is small enough that the stretching can be applied repeatedly to get up to $m$ qubits of pseudorandomness from $c \cdot n$ bits of classical randomness, for some $0.9<c<1$, while keeping the statistical loss exponentially small in $m$.

Crucially, this stretching result also applies to our base result of Equation \eqref{eq:4} (where $n = m$). More precisely, we have the following.
\begin{theorem}[Informal]\label{thm:ammplify-random-informal}
    Let $r, n,m \in \mathbb{N}$. Let $\{U_k\}_{k \in \{0,1\}^n}$ be a set of $(m-1)$-qubit unitaries. Then,
    \begin{align*}
        &\norm{\E_{k} \E_{\ket{\psi}} (\I\ot U_k) \psi (\I\ot U_k^\dagger) \ot \psi^{\ot r-1} -  \frac{\I}{2^{m}} \ot  \E_{\ket{\psi}}\psi^{\ot r-1} } \\&\leq 5\norm{\E_k \E_{\ket{\psi'}} U_k \psi' U_k^\dagger \ot \psi'^{\ot r-1 } - \frac{\I}{2^{m-1}} \ot  \E_{\ket{\psi'}}  \psi'^{\ot r-1}} + O\left(\frac{r\sqrt{m}}{2^{m/2}}\right)\,,
    \end{align*}
    where $\ket{\psi}$ and  $\ket{\psi'}$ are Haar random states on $m$ and $(m-1)$ qubits, respectively.
\end{theorem}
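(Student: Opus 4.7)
I would prove the inequality by a hybrid argument that progressively reduces the $m$-qubit analysis on the left-hand side to the $(m-1)$-qubit analysis on the right-hand side. The guiding intuition is that for a Haar random $m$-qubit state $\ket{\psi}$, the reduced marginal on any single qubit concentrates around $\I/2$, and in the $r$-copy average $\E_{\ket{\psi}} \psi^{\otimes r}$ the first qubit of the first copy is ``nearly decoupled'' from the remaining $rm-1$ qubits, up to a trace-distance error controlled by Haar-measure concentration. Since $\I \otimes U_k$ acts as the identity on the first qubit, that qubit's contribution can be separated out as an exact $\I/2$ factor, leaving an $(m-1)$-qubit analysis on the remaining qubits of the first copy that matches the structure of the right-hand side.

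\textbf{Key tool and sketch of hybrids.} The central tool is Harrow's Lemma (\cref{lem:harrow-tech-overview}), which replaces $\E_{\ket\psi} \psi^{\otimes r}$ by the structured ensemble $\E_V \phi_V^{\otimes r}$, with $\phi_V = (V \otimes \I)\ket{\Phi_{2^{m/2}}}$ and $V$ Haar on $m/2$ qubits, at cost $O(r^2/2^{m/2})$. In the structured picture, the maximally entangled state $\ket{\Phi_{2^{m/2}}}$ has exactly maximally mixed marginals on its two halves, making single-qubit decoupling rigorous. I would apply Harrow's Lemma once to pass from the $m$-qubit Haar ensemble to a structured ensemble, use the resulting entanglement structure (together with the ricochet identity $(W \otimes \I)\ket{\Phi_d} = (\I \otimes W^T)\ket{\Phi_d}$) to isolate the first qubit as $\I/2$, and then apply Harrow's Lemma a second time, going back, to convert the residual $(m-1)$-qubit subexpression into a form involving $(m-1)$-qubit Haar states. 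The triangle inequality over these intermediate hybrids, each bounded either by the right-hand side quantity or by a Harrow-type error term, would yield the factor-$5$ multiplier in front of the right-hand side together with the additive $O(r\sqrt{m}/2^{m/2})$ error.

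\textbf{Main obstacle.} The principal difficulty is the mismatch between the bipartition underlying Harrow's Lemma (two equal halves of $m/2$ qubits each) and the bipartition implicit in $\I \otimes U_k$ (one qubit versus $m-1$ qubits). Resolving this will likely require either an adapted Harrow-style approximation for unbalanced bipartitions---with one side of dimension $2$ and the other of dimension $2^{m-1}$---or a careful rearrangement of the maximally entangled structure so that the first qubit's contribution factors cleanly. A secondary subtlety is that all $r$ copies of $\psi$ share the same underlying randomness, so decoupling the first qubit of one copy without disturbing the others must be done in a way that does not accumulate error linearly in $r$; achieving the claimed $O(r\sqrt{m}/2^{m/2})$ bound, rather than a quadratic-in-$r$ bound, requires a global (rather than per-copy) concentration-of-measure argument on the Haar ensemble.
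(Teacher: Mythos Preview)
Your high-level intuition---that the first qubit of a Haar random $m$-qubit state is ``almost maximally mixed'' so it can be peeled off, reducing to the $(m-1)$-qubit problem---is correct and is exactly the intuition behind the paper's proof. However, your proposed execution via Harrow's lemma has a genuine gap, and you are missing the two ingredients the paper actually uses.

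\textbf{Harrow's lemma is the wrong tool here.} You already identify the obstacle: Harrow's approximation $\E_\psi \psi^{\otimes r} \approx \E_V \phi_V^{\otimes r}$ refers to the \emph{balanced} $(m/2,m/2)$ bipartition, not the $(1,m-1)$ split that $\I \otimes U_k$ respects. Even after passing to $\phi_V$, the first qubit sits inside the left $m/2$-qubit block on which the Haar unitary $V$ acts, so it is still scrambled together with $m/2-1$ other qubits; there is no ricochet move that isolates it. So the step ``use the resulting entanglement structure \ldots\ to isolate the first qubit as $\I/2$'' does not go through. Appealing to an ``adapted Harrow-style approximation for unbalanced bipartitions'' is not a plan; it is a restatement of the problem.

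\textbf{What the paper actually does.} The paper bypasses Harrow entirely for this step and uses two ideas you do not mention. First, a direct L\'evy-lemma argument (Lemma~\ref{lem:haar-coeff-estimate}) shows that $\E_\psi \psi^{\otimes r}$ is $O(r\sqrt{m}/2^{m/2})$-close to $\E_{\psi_1,\psi_2}(\psi')^{\otimes r}$, where $\ket{\psi'} = \tfrac{1}{\sqrt{2}}(\ket{0}\ket{\psi_1}+\ket{1}\ket{\psi_2})$ with independent $(m-1)$-qubit Haar states $\psi_1,\psi_2$; this is exactly the unbalanced decoupling you want, proved by concentration on the single real function $f(\ket{\psi}) = \|(\bra{0}\otimes\I)\ket{\psi}\|^2$. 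Second---and this is the step your sketch lacks---an elementary algebraic lemma (Lemma~\ref{lem:qubit-amplify-ineq}): for Hermitian $A$, bounding the four blocks $\|\bra{a}_1 A \ket{a}_1\|$ for $\ket{a}\in\{\ket{0},\ket{1},\ket{+},\ket{+i}\}$ by $\epsilon$ yields $\|A\| \le 10\epsilon$. By Haar invariance all four blocks have the same norm, so it suffices to bound $\bra{0}_1 A \ket{0}_1$; projecting onto $\bra{0}_1$ kills the off-diagonal cross terms that your sketch never addresses, and on $\ket{\psi'}$ this projection yields exactly $\tfrac{1}{2}U_k\psi_1 U_k^\dagger$, which connects directly to the RHS. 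The factor $5$ in the statement is precisely $10 \times \tfrac{1}{2}$ from this mechanism.
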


In words, this says that if $\{U_k\}_{k \in \{0,1\}^n}$ generates a (single-copy) $(m-1)$-qubit pseudorandom state when applied to an $(m-1)$-qubit Haar random state, then applying $U_k$ to the last $m-1$ qubits of an $m$-qubit Haar random state (and ignoring the first qubit) also suffices to achieve the same, up to a small statistical loss. 

Applying Theorem \ref{thm:ammplify-random-informal} $l$ times, gives:
\begin{corollary}[Informal]
\label{cor:ammplify-random-informal}
    Let $\ell <m$. Let $\{U_k\}_{k \in \{0,1\}^n}$ be a set of $(m-\ell)$-qubit unitaries. Then,
    \begin{align*}
        &\norm{\E_{k} \E_{\ket{\psi}} (\I\ot U_k) \psi (\I\ot U_k^\dagger) \ot \psi^{\ot r-1} -  \frac{\I}{2^{m}} \ot  \E_{\ket{\psi}}\psi^{\ot r-1} } \\&\leq 5^\ell\norm{\E_k \E_{\ket{\psi'}} U_k \psi' U_k^\dagger \ot \psi'^{\ot r-1 } - \frac{\I}{2^{m-\ell}} \ot  \E_{\ket{\psi'}}  \psi'^{\ot r-1}} + O\left(\frac{r\sqrt{m}\,5^\ell}{2^{(m-\ell)/2}}\right)\,,
    \end{align*}
    where $\ket{\psi}$ and  $\ket{\psi'}$ are Haar random states on $m$ and $(m-\ell)$ qubits, respectively.
\end{corollary}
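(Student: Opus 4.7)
The plan is to prove the corollary by induction on $\ell$, using Theorem~\ref{thm:ammplify-random-informal} as both the base case ($\ell=1$) and the engine of the inductive step. Given a family $\{U_k\}$ of $(m-\ell)$-qubit unitaries, I would first rewrite $\I_\ell \ot U_k = \I \ot (\I_{\ell-1} \ot U_k)$, viewing $\I_{\ell-1} \ot U_k$ as an $(m-1)$-qubit unitary (identity on the first $\ell-1$ qubits, $U_k$ on the remaining $m-\ell$). Applying Theorem~\ref{thm:ammplify-random-informal} to this $(m-1)$-qubit unitary gives
\begin{align*}
&\norm{\E_{k} \E_{\ket{\psi}} (\I_\ell \ot U_k)\, \psi\, (\I_\ell \ot U_k^\dagger) \ot \psi^{\ot r-1} -  \frac{\I}{2^{m}} \ot  \E_{\ket{\psi}}\psi^{\ot r-1} } \\
&\qquad \leq \; 5\norm{\E_k \E_{\ket{\psi'}} (\I_{\ell-1} \ot U_k)\, \psi'\, (\I_{\ell-1} \ot U_k^\dagger) \ot \psi'^{\ot r-1 } - \frac{\I}{2^{m-1}} \ot  \E_{\ket{\psi'}}  \psi'^{\ot r-1}} + O\!\left(\frac{r\sqrt{m}}{2^{m/2}}\right),
\end{align*}
where $\ket{\psi'}$ is an $(m-1)$-qubit Haar state. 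The first term on the right-hand side is exactly the quantity controlled by the inductive hypothesis at parameters $(m-1,\,\ell-1)$.

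Iterating this reduction $\ell$ times, the multiplicative factors compound to $5^\ell$ and the additive errors accumulate as
\[
\sum_{i=0}^{\ell-1} 5^i \cdot O\!\left(\frac{r\sqrt{m-i}}{2^{(m-i)/2}}\right) \;\leq\; \frac{r\sqrt{m}}{2^{m/2}} \cdot O\!\left(\sum_{i=0}^{\ell-1} (5\sqrt{2})^i\right) \;=\; O\!\left(\frac{5^\ell\, r \sqrt{m}}{2^{(m-\ell)/2}}\right),
\]
using $\sqrt{m-i}\leq \sqrt{m}$ and $2^{(m-i)/2} = 2^{m/2}/2^{i/2}$, together with the fact that the geometric series of common ratio $5\sqrt{2}>1$ is dominated (up to a constant factor) by its largest term. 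This matches the claimed error bound.

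I expect the only potential obstacle to be bookkeeping: one must verify that $\I_{\ell-1} \ot U_k$ is a legitimate $(m-1)$-qubit unitary so that Theorem~\ref{thm:ammplify-random-informal} applies without modification---which it does, since the theorem makes no assumption about the internal structure of the unitary family---and that the inductive hypothesis can then be invoked on this family at reduced parameters $(m-1,\ell-1)$. No new technical ingredient beyond Theorem~\ref{thm:ammplify-random-informal} and a simple geometric series bound is required.
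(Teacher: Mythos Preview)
Your proposal is correct and follows exactly the paper's approach: the paper simply states that the corollary follows by applying Theorem~\ref{thm:ammplify-random-informal} recursively $\ell$ times, and your write-up spells out precisely this induction together with the geometric-series bookkeeping for the additive error.
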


At first, the reader might be slightly worried about the exponential blow-up of the RHS in terms of $\ell$. However, this is counteracted by the trace distance term, which, for the base case, is exponentially small in the number of qubits. Thus, there is actually a regime of $\ell$ linear in $m$ for which the upper bound is exponentially small in $m$. In more detail, we apply Corollary \ref{cor:ammplify-random-informal} to our base result of Equation \eqref{eq:4} (replacing $m$ with $m-\ell$ there). Let $L_{m-\ell}$ be the statistical closeness (in trace distance) between the two sides of Equation \eqref{eq:4}. Then we have the following: applying a QOTP to $\frac{m-\ell}{2}$ qubits of an $m$-qubit Haar random state suffices to yield a (single-copy) pseudorandom state, with a statistical loss of $L_{m-\ell} \cdot 5^\ell  + O\left(\frac{r\sqrt{m}5^\ell}{2^{(m-\ell)/2}}\right)$. Recall from earlier that $L_{m-\ell} = O\left(\frac{r^2}{2^{(m-\ell)/2}}\right)$, and so the total statistical loss is $O\left(\frac{r^2}{2^{(m-\ell)/2}} \cdot 5^\ell\right)  + O\left(\frac{r\sqrt{m}5^\ell}{2^{(m-\ell)/2}}\right)$.

Notice crucially that, when $\ell$ is too large, the factor of $5^\ell$ dominates $L_{m-\ell}$! However, when $\ell = 0.1 m$, the loss is $%O\left(\frac{t^2}{2^{0.45m}} \cdot 5^{0.1m}\right)  + 
O\left(\frac{(r^2+r\sqrt{m})5^{0.1m}}{2^{0.45m}}\right)$, which is still exponentially small in $m$. Thus, interestingly, our construction works as long as the QOTP is applied on $0.45m$ qubits (a constant fraction less than half), but it does not seem to work for much smaller constant fractions\footnote{We are unsure whether this regime is tight or not. Settling this is an interesting open question.}.

The high-level intuition for the result is that a typical Haar random state on $m$ qubits is ``close'' to being maximally entangled across the $(1, m-1)$ bipartition (i.e.\ the bipartition that considers the first qubit as the ``left'' register, and the remaining $m-1$ qubits as the ``right'' register). More concretely, the mixed state obtained by sampling a Haar random $m$-qubit state is close (in trace distance) to the state obtained by sampling two Haar random $(m-1)$-qubit states $\ket{\psi_1}$ and $\ket{\psi_2}$, and outputting $ \ket{\psi'}= \frac{1}{\sqrt{2}} \ket{0}\ket{\psi_1} + \frac{1}{\sqrt{2}} \ket{1}\ket{\psi_2}$, i.e.
$$
\E_{\psi} [\psi] \approx \E_{\psi_0, \psi_1} [\psi'] \,.
$$
Note that in the state $\ket{\psi'}$ the two coefficients are exactly $\frac{1}{\sqrt{2}}$ (while, for a Haar random $m$-qubit state, each coefficient would instead come from a \emph{distribution} which concentrates at $\frac{1}{\sqrt{2}}$). This observation also holds for $r>1$ copies of $\psi$ and $\psi'$, respectively, at the cost of a factor of $r$ loss in trace distance. 

How does this help? The crucial point is that if $\{U_k\}$ is a family of ``twirling'' unitaries, i.e.\ a family of unitaries such that the channel $\mathbb{E}_k \, U_k (\cdot ) U_k^{\dagger}$ maps the ``right'' register to the maximally mixed state (when also taking into account the averaging over $\psi'$), then, similarly as in the calculation of \cref{eq:4}, the ``left'' register also becomes maximally mixed (due to the fact that the two registers were originally maximally entangled). We refer the reader to \cref{ssec:ampli-random} for more details.

\begin{remark}
    The reader may wonder whether constructing a $\oprs$ can be achieved more easily or with better parameters by leveraging, for example, the following result from Dickinson and Nayak \cite{dickinson2006approximate}. This says that $n + 2\log{\frac{1}{\epsilon}} + 4$ bits of key length are sufficient to encrypt an $n$-qubit state so that it is $\epsilon$-close (in trace distance) to the maximally mixed state (rather than $2n$ bits for $n$ qubits using the standard QOTP). While the result seems potentially very useful, it does not seem to help: crucially, when we invoke the Pauli twirl property in Equation~\eqref{eq:4}, we rely on the fact that it makes the cross terms vanish \emph{perfectly}. If cross terms vanished only \emph{approximately}, the double sum over $i,j$ would cause the error to blow up (given the tradeoff between key length and precision).
\end{remark}

\paragraph{Commitment in the CHRS model} As a direct corollary, we can construct an unconditional quantum bit commitment protocol in the CHRS model. We first recall Morimae and Yamakawa's scheme~\cite{MY22a}. To commit to the bit $b\in \{0,1\}$, the sender generates
\begin{equation*}
    \ket{\psi_b}:=\frac{1}{\sqrt{2^{2m+n}}} \sum_{x,z \in \{0,1\}^m} \sum_{k \in \{0,1\}^n } \ket{x,z,k} \otimes P_{x,z}^b \ket{\phi_k},
\end{equation*}
where $\{ \ket{\phi_k}\}_k$ is the $\oprs$ family, with key-size $n$ and outputs size $m$, and $P_{x,z}:=\bigotimes_{j=1}^{m} X_j^{x_j}Z_j^{z_j}$. 
To commit, only the right register is sent to the receiver. (The hiding property can be seen easily: note that if $b=1$, the state is maximally mixed by the properties of the quantum one-time pad, and if $b=0$, the state is a random $\oprs$ state; these two cases are  indistinguishable, by the $\oprs$ property.) 
To reveal, the committer sends the rest of the state and the bit $b$. The receiver applies $V_b^\dagger$, where $V_b\ket{0 \ldots 0}=\ket{\psi_b}$, measures all the qubits, and accepts if and only if the outcome is $0\ldots 0$. 
As mentioned, the problem is that Applying $V_b^\dagger$ requires the inverse transformation of the one generating the $\oprs$ state and cannot be done in a black-box manner.

\bnote{New paragraph on commitment from unitary simulation oracle}  Recall that our $\oprs$ takes the form $\ket{\phi_k} = (X^aZ^b \otimes I) \ket{\psi}$. Thus, to invert the generation algorithm of our $\oprs$, we need to map $\ket{\psi}$ to $\ket{0}$. In~\cref{sec:tech-state-to-unitary}, we show that our separation between $\oprs$ and $\prs$ also holds relative to a self-inverse unitary oracle. Therefore, applying the inverse transformation can be done efficiently\footnote{In a prior version of this work, a construction in the CHRS model was shown by adapting the commitment construction of~\cite{morimae2023unconditionally}, which does not use the inverse.}. %Such a map can be approximated by using a technique similar to one introduced by \cite{JLS18}. We discuss this in much more detail in \cref{sec:tech-state-to-unitary}.

\subsection{Oracle separation between \texorpdfstring{$\prs$}{PRS} and \texorpdfstring{$\oprs$}{1PRS}}
\label{sec:separation-techoverview}
We now describe an oracle relative to which $\oprs$ exist, but $\prs$ do not. We consider the CHRS model augmented with quantum oracle access to a unitary QPSPACE machine\footnote{As mentioned previously, the CHRS oracle, which provides copies of the Haar random state, can be thought of as implementing an \emph{isometry}. This is spelled out in \cref{sec:1prs_CHRS}. On the other hand, the QPSPACE machine takes as input a state $\ket{\alpha}$, and the description of a unitary circuit $C$ computable in ``polynomial space'', and returns $C\ket{\psi}$. For a precise definition, we refer the reader to \ifproceedingelse{the full version}{the start of \cref{sec:oralce-sep}}.}. Going forward, we refer to the former as the ``CHRS oracle'' and to the latter as the ``QPSPACE oracle''. We refer the reader to \ifproceedingelse{the full version}{the start of \cref{sec:oralce-sep}} for a precise definition of the QPSPACE oracle.

The existence of $\oprs$ in this model follows immediately from the fact that our construction in the CHRS model achieves \emph{statistical}, rather than computational, security when the adversary has polynomially many copies of the common Haar random state. Thus, the QPSPACE oracle (which is independent of the sampled Haar random state), does not help the adversary.

On the other hand, we show that a $\prs$ does not exist in this model. We describe an explicit attack on any $\prs$ construction.
\paragraph{Breaking $\prs$ security via the ``Quantum OR Lemma''.} Notice that, in this model, since the CHRS oracle is \emph{input-less}, we can assume, without loss of generality, that any algorithm that uses the CHRS oracle makes all of its calls to it at the start, i.e.\ the algorithm first obtains all of the copies of $\ket{\psi}$ that it needs, and then proceeds without making any additional call to the CHRS oracle. Thus, any $\prs$ construction takes the following form\footnote{Again, technically, the construction could make use of states $\ket{\psi_m}$ for different values of $m$ (at most polynomially different values). This does not affect the argument very much, and, for simplicity, in this technical overview, we consider constructions that use only copies of $\ket{\psi_m}$ for a single $m$.}. Let $\ket{\psi}$ be the common Haar random state. Then, the family of pseudorandom states is $\{\ket{\phi_k}\}_{k \in \{0,1\}^n}$, with 
$$\ket{\phi_k} = \gen_k \parens{\ket{\psi}^{\otimes r} \otimes \ket{0^t}} \,,$$
for some $r$ and $t$ polynomial in $n$, and $\gen_k$ a unitary that is efficiently computable given access to the QPSPACE oracle.

\vspace{2mm}
%Denote by $m$ the number of qubits of $\ket{\psi_k}$. 
The problem of breaking the PRS is then the following: given polynomially many copies of $\ket{\tilde{\phi}}$, where either (i) $\ket{\tilde{\phi}} = \ket{\phi_k}$ for some $k$, or (ii)  $\ket{\tilde{\phi}}$ is Haar random (independent of $\ket{\psi}$), decide which is the case. Notice that this problem can be recast as follows, for some appropriate projections $\{\Lambda_k \}_{k \in \{0,1\}^n}$, and some constants $a,b$ with $b-a >0$.

Given $\ket{\tilde{\phi}}$ as above, and $r$ copies of $\ket{\psi}$, determine whether 
\begin{itemize}
\if\widemargin=0
\item[(i)] There exists $k \in \{0,1\}^n$ such that $\tr\left[\Lambda_k  \left(\ket{\tilde{\phi}}\bra{\tilde{\phi}}\otimes (\ket{\psi}\bra{\psi})^{\otimes r} \otimes (\ket{0}\bra{0})^{\otimes t}\right)\right] > b$, or
\else
\item[(i)] There exists $k \in \{0,1\}^n$ such that \[\tr\left[\Lambda_k  \left(\ket{\tilde{\phi}}\bra{\tilde{\phi}}\otimes (\ket{\psi}\bra{\psi})^{\otimes r} \otimes (\ket{0}\bra{0})^{\otimes t}\right)\right] > b\text{ , or}\] 
\fi
\item[(ii)] For all $k \in \{0,1\}^n$, $\tr\left[\Lambda_k  \left(\ket{\tilde{\phi}}\bra{\tilde{\phi}}\otimes (\ket{\psi}\bra{\psi})^{\otimes r} \otimes (\ket{0}\bra{0})^{\otimes t}\right)\right]  <a$.
\end{itemize}
What are the projections $\Lambda_k$? For clarity, let's denote the registers in $\ket{\tilde{\phi}}\bra{\tilde{\phi}}\otimes (\ket{\psi}\bra{\psi})^{\otimes r} \otimes (\ket{0}\bra{0})^{\otimes t}$ as $ \ket{\tilde{\phi}}\bra{\tilde{\phi}}_{\mathsf{A}}\otimes (\ket{\psi}\bra{\psi})^{\otimes r}_{\mathsf{B}}\otimes (\ket{0}\bra{0})^{\otimes t}_{\mathsf{C}}$. Then, in words, $\Lambda_k$ applies $\gen_k$ to registers $\mathsf{BC}$, followed by a ``swap test'' between $\mathsf{A}$ and $\mathsf{BC}$ (projecting onto the ``accept'' outcome of the swap test). Formally,
$$\Lambda_k = (I_{\mathsf{A}} \otimes \gen_{k,\mathsf{BC}}) \, \Pi_{sym}^2 \,(I_{\mathsf{A}} \otimes \gen_{k,\mathsf{BC}}) \,,$$
where $\Pi_{sym}^2$ is the projection onto the symmetric subspace over $\mathsf{A}$ and $\mathsf{BC}$.

Importantly, the latter problem takes a form that is \emph{almost} amenable to the ``quantum OR lemma''~\cite{harrow2017sequential}. The version of the ``quantum OR lemma'' that is relevant here informally says that there is an algorithm that requires only a \emph{single} copy of $\ket{\tilde{\phi}} \ket{\psi}^{\otimes r} \ket{0}^{\otimes t}$ such that:
\begin{itemize}
\item in case (i), outputs 0 with probability at least $b^2/7$.
\item in case (ii), outputs 0 with probability at most $4 \cdot 2^n \cdot a$.
\end{itemize}
Moreover, the algorithm uses a number of auxiliary qubits that is \emph{logarithmic} in the number of projections. Since the number of projections is $2^n$, the number of auxiliary qubits is only polynomial in $n$, and thus the algorithm can be implemented by invoking the QPSPACE oracle\footnote{For the algorithm to be implementable by a QPSPACE machine, we additionally need that each measurement $\{\Lambda_k, I-\Lambda_k\}$ be also implementable by a QPSPACE machine, which is the case in this setting since $\gen(k)$ and the ``swap test'' are efficient. The attentive reader will notice that there is one subtlety about the latter, namely that $\gen(k)$ is itself allowed to make queries to the QPSPACE oracle! However, this is not an issue, since the resulting computation can still be simulated using a QPSPACE oracle. We again refer the reader to \ifproceedingelse{the full version}{the start of \cref{sec:oralce-sep}} for a definition of the QPSPACE oracle.}. 

Unfortunately, in the setting described above, $a,b$ are constant: in particular, $a$ is approximately $\frac12$, while $b=1$. Thus, the guarantee above is not useful because of the factor of $2^n$! There is a natural way to get around this, which is to use ``parallel repetition'': the projections $\Lambda_k$ should act on $\poly(n)$ copies of the state considered above, and perform $\poly(n)$ swap tests. As a result of the amplification, we then have $a = 2^{-\poly(n)}$, which is sufficient to give an exponentially small upper bound in case (ii), and to distinguish between cases (i) and (ii), thus breaking security of the $\prs$. Crucially, this attack can be carried out because the security game of a $\prs$ allows the adversary access to polynomially many copies of $\ket{\tilde{\phi}}$. The same attack does not work in the case of a $\oprs$!

\begin{remark}
One might wonder whether a different attack based on shadow tomography would work here (along the lines of the attack described by Kretschmer in \cite[Subsection 1.3]{Kre21}). The issue is that here $\tr[\Lambda_k^2]$ is exponentially large, and so the estimation of the quantity $\tr[\Lambda_k \tilde{\phi}]$ given by shadow tomography has too large of a variance. Thus, shadow tomography does not seem to be sample-efficient in this setting.
\end{remark}

\subsection{Upgrading our separations from a ``state'' oracle to a unitary oracle}
\label{sec:tech-state-to-unitary}
Recall that the oracle separating $\oprs$ and $\prs$ in~\cref{sec:separation-techoverview} is an \emph{isometry}. In particular, the CHRS part of the oracle provides copies of a Haar random state. Thus, so far, such a separation only rules out a fully black-box construction of a $\prs$ from ``isometry access'' to a $\oprs$ (as defined precisely in \ifproceedingelse{the full version}{\cref{def:bb-isometry}}). Informally, such a black-box construction is only allowed to use the generation procedure of the $\oprs$ as an ``isometry'', i.e.\ it does not have the ability to initialize the auxiliary qubits in an arbitrary state.

In this section, we informally describe how our separation can be upgraded to be relative to a \emph{unitary} oracle (and its inverse). For the full details, see \ifproceedingelse{the full version}{\cref{sec:unitary-oracle-simulation}}. In particular, we introduce a unitary oracle, which is self-inverse, that is approximately equivalent to the isometry oracle that gives out copies of a Haar random state $\ket{\psi}$: access to this unitary oracle allows one to exactly simulate access to copies of $\ket{\psi}$, and, conversely, the unitary oracle can be simulated \emph{approximately} using copies of $\ket{\psi}$. Replacing the isometry oracle with the new unitary oracle, we are able to establish impossibility of the most general kind of a fully black-box construction of $\prs$ from $\oprs$ (as in \ifproceedingelse{the full version}{\cref{def:bb-unitary-and-inverse}}). Our technique is inspired by techniques by Ji, Liu, and Song~\cite{JLS18} and Zhandry~\cite{zhandry2024space}, with some differences, which we describe further in the full version.

\subsubsection{Unitary corresponding to a state}
\label{sec:unitaryoracle0-1}
Throughout the section, let $\ket{\psi}$ be an $n$-qubit state orthogonal to $\ket{0^n}$. In the CHRS model, the common Haar state $\ket{\psi}$ is not necessarily orthogonal to $\ket{0^n}$, but we take them to be be orthogonal at first for simplicity. The result we prove will extend straightforwardly to the case of arbitrary $\ket{\psi}$. For convenience of notation, we will write $\ket{0}$ instead of $\ket{0^n}$ (more generally, we will use $\ket{0}$ to denote the all zero state of a system whose dimension is clear from the context).

We define a corresponding unitary $\upsi$ as follows: $\upsi$ flips $\ket{0}$ and $\ket{\psi}$, and acts as the identity on everything orthogonal to the subspace spanned by $\ket{0}$ and $\ket{\psi}$, i.e.\ $\upsi\ket{0}=\ket{\psi}$, $\upsi\ket{\psi}=\ket{0}$, and $\upsi\ket{\phi}=\ket{\phi}$ for any $\ket{\phi}$ orthogonal to $\ket{0}$ and $\ket{\psi}$. Notice that $\upsi$ is self-inverse.

It is clear that access to $\upsi$ allows one to simulate the isometry oracle (which provides copies of $\ket{\psi}$), by simply applying $\upsi$ on copies of $\ket{0}$. However, the reduction in the other direction is nontrivial. First of all, notice that we cannot hope to simulate $\upsi$ in the most general sense using the isometry oracle alone, because the phase information is entirely lost: the states of the form $\alpha \ket{\psi}$, for $|\alpha|=1$, are all identical up to a global phase, and so $\alpha$ cannot be detected given only copies of the state. On the other hand, the unitaries of the form $\uapsi$ are in general very different from each other: applying $\uapsi$ or $U_{\alpha' \ket{\psi}}$ (for $\alpha \neq \alpha'$) to a superposition of $\ket{0}$ and $\ket{\psi}$ produces different states in general.

So, instead, our goal will be to show that $\upsi$ can be simulated using copies of $\ket{\psi}$ in a weaker sense, which will still be sufficient to upgrade our oracle separation results. Our simulation technique is similar to the one proposed by Zhandry~\cite{zhandry2024space}, with some differences which we remark in the full version. Our key observation is that, while a general simulation is not possible, one might be able to simulate the behaviour of $\uapsi$ ``on average over $\alpha$''. Consider an algorithm $\calA^{\upsi}$ that makes $T$ queries to $\upsi$, we will show that one can simulate $\calA^{\upsi}$ with $\epsilon$ precision given $O\left(\frac {T^2}{\epsilon^2}\right)$ copies of $\ket{\psi}$ in the following average sense. 

For any $\ket{\psi}$, and an arbitrary input state $\ket{\sigma}$, we can write the output of $\calA^{\upsi}$ as
\begin{equation*}
    \ket{\Psi_{\psi,T}} = B_T\upsi B_{T-1} \dots B_1 \upsi B_0 \ket{\sigma},
\end{equation*}
for some fixed unitaries $B_0, \dots, B_T$ that do not depend on $\ket{\psi}$. Then, we consider the average of this output over a uniformly random phase $\alpha$, namely $\alpha$ is sampled as a random point on the unit circle $|\alpha|=1$:
\begin{equation}
\label{eq:600-1}
    \rho_{\psi, T} = \E_\alpha \left[ \ket{\Psi_{\alpha{\ket{\psi}}, T}}\bra{\Psi_{\alpha\ket{\psi}, T}} \right].
\end{equation}
We establish that $\rho_{\psi,T}$ can be simulated approximately given copies of $\ket{\psi}$.
\begin{theorem}
\label{thm:isometry-to-unitary-1}
    Let $n\in \mathbb{N}$. Let $\ket{\psi}$ be any $n$-qubit state orthogonal to $\ket{0^n}$. Let $\epsilon >0$, and $T\in \mathbb{N}$. Let $\upsi$ be the $n$-qubit unitary defined as above, and let $\rho_{\psi, T}$ be as in Equation \eqref{eq:600-1}. For any oracle algorithm $\calA^{(\cdot)}$ making $T$ queries to $\upsi$, there is an algorithm $\wt{\cal{A}}$ that, with access to $O\left(\frac {T^2}{\epsilon^2}\right)$ copies of $\ket{\psi}$, outputs a state $\wt{\rho}_{\psi, T}$ that is $\eps$-close to $\rho_{\psi, T}$ in trace distance.
\end{theorem}

\begin{corollary}
\label{cor:20-1}
Let $n\in \mathbb{N}$. Let $\ket{\psi}$ be any $n$-qubit state.  Let $\epsilon >0$, and $T\in \mathbb{N}$. Define the $(n+1)$-qubit state $\ket{\psi'} = \ket{\psi} \otimes \ket{1}$. Let $U_{\ket{\psi'}}$ be the $(n+1)$-qubit unitary defined as above, and let $\rho_{\psi', T}$ be as in Equation \eqref{eq:600-1}.
For any oracle algorithm $\calA^{(\cdot)}$ making $T$ queries to $U_{\ket{\psi'}}$, there is an algorithm $\wt{\cal{A}}$ that, with access to $O\left(\frac {T^2}{\epsilon^2}\right)$ copies of $\ket{\psi}$, outputs a state $\wt{\rho}_{\psi', T}$ that is $\eps$-close to $\rho_{\psi', T}$ in trace distance.
\end{corollary}

Corollary \ref{cor:20-1} follows immediately from Theorem \ref{thm:isometry-to-unitary-1}. We prove Theorem \ref{thm:isometry-to-unitary-1} in \ifproceedingelse{the full version}{\cref{sec:unitary-oracle-simulation}}.

The proof proceeds in two steps. The first step \ifproceedingelse{(as in the full version)}{(\cref{sec:unitaryoracle1})} is to show that $\rho_{\psi, T}$ can be produced \emph{perfectly} with access to $T$ copies of $\ket{\psi}$ \emph{and} a certain auxiliary unitary oracle $C_{\ket{\psi}}$. The second step \ifproceedingelse{(as in the full version)}{(\cref{sec:unitaryoracle2})} is to show that $C_{\ket{\psi}}$ can be simulated approximately using copies of $\ket{\psi}$. In \ifproceedingelse{the full version}{\cref{sec:weak-simulation}}, we justify why the weak notion of simulation that we achieve is sufficient to lift our separation results to be relative to a unitary oracle. Our lifting result applies to any Common Reference Quantum State (CRQS) oracle (i.e.\ an oracle providing copies of a state -- not necessarily Haar random) which has a ``global-phase'' invariance -- see \ifproceedingelse{the full version~\cite{arxivfullversion}}{\cref{sec:unitary-oracle-simulation}}. Stated informally, we show the following.
\begin{theorem}[Informal]
Suppose $\oprs$ exist relative to a global-phase invariant state oracle $\mathcal{O}$, and $\prs$ with output length $\omega(\log n)$, do not. Then, there also exists a parametrized unitary oracle $\mathcal{U}$ relative to which $\oprs$ exist, but $\prs$, with output length $\omega(\log n)$ do not.
\end{theorem}
\begin{comment}
\begin{theorem}[Informal]
Let $\cal A$ and $\cal B$ be primitives with a security game consisting of a single round. Suppose primitive $\cal A$ exists relative to state oracle $\mathcal{O}$, and primitive $\cal B$ does not. Then, there also exists a unitary oracle $\mathcal{U}$ relative to which $\mathcal{A}$ exists, but $\mathcal{B}$ does not.
\end{theorem}
\end{comment}
\begin{comment}
The reader may notice the restriction to primitives whose security game consists of a single round. We discuss this restriction in detail in \ifproceedingelse{the full version}{\cref{sec:unitary-oracle-simulation}}. We remark that most Microcrypt primitives satisfy this property, and so our lifting result applies. There is an additional caveat relating to the ``correctness'' condition\footnote{By correctness condition, we mean any requirement involving interaction between ``honest'' algorithms.} of primitive $\mathcal{B}$, which we also discuss in \ifproceedingelse{the full version}{\cref{sec:unitary-oracle-simulation}}. We expect that in most cases these restrictions either do not apply or can be circumvented.
\end{comment}

%The full details appear in \cref{sec:oralce-sep}.

%This note mainly aims to show that given a public $n+1$ qubit random state $\ket{\psi}$ we can use $n$ classic bits to encrypt $\ket{\psi}$, generating a new pseudorandom state $\ket{\psi_k} = U_k \ket{\psi}$, which is informational theoretically indistinguishable from a new Haar random state even if the adversary can access polynomailly many copies of the public random state $\ket{\psi}$
\section{Preliminaries}
\label{sec:preliminaries}

\paragraph{Notation.}
We will use the letter $n$ to denote the security parameter. We denote by $\mu_d$ the Haar measure in $d$ dimensional Hilbert space. The notation $\ket{\psi}\gets \mu_d$ denotes sampling a state according to $\mu_d$. For any finite set $K$, we write $k \gets K$ to mean that $k$ is sampled uniformly at random from $K$. We use the notation $A^{(\cdot)}$ to refer to an algorithm (classical or quantum) that makes queries to an oracle. For an operator $H$, we use the notation $\norm{H}$ to denote its trace norm. For a pure state $\ket{\psi}$, we denote by $\psi$ the density matrix $\ket{\psi}\bra{\psi}$. We will use $\Pi^{sym}$ to refer to the projector corresponding to a swap test. The definition of swap test can be found, for example, in \cite{buhrman2001quantum}.

\begin{definition}[Pseudorandom States ($\prs$), adapted from~\cite{JLS18}]
  \label{def:prs}
A pseudorandom states family is a QPT algorithm $\gen$ 
 that, on input $k\in \{0,1\}^n$, outputs a pure state $\ket{\phi_k}$ consisting of $m=m(n)$ qubits. For security, we require the following pseudorandomness property: for any polynomial $t=t(n)$ and any QPT adversary $\mathcal A$, there exists a negligible function $\negl$
 such that for all $n$,

    \begin{equation}
      \abs{\Pr_{k\gets\{0,1\}^n } \bigl[ \mathcal A(\ket{\phi_k}^{\otimes t}) = 1 \bigr] -
        \Pr_{\ket{\phi}\gets \mu_{2^m} } \bigl[ \mathcal A(\ket{\phi}^{\otimes t}) = 1 \bigr] } =
      \negl(n),
      \label{eq:prs}
    \end{equation}
    where $\mu_{2^m}$ is the Haar measure on $m(n)$ qubit states. We say that the construction is \emph{statistically secure} if \cref{eq:prs} holds for computationally unbounded adversaries. We emphasize that these unbounded adversaries receive only polynomially many copies of the Haar random state.  
    For constructions relative to an oracle $\mathcal O$, both the generation algorithm $G$ and the adversary $\mathcal A$ get oracle access to $\mathcal O$.
\end{definition}

\begin{definition}[Single-copy Pseudorandom States ($\oprs$), adapted from ~\cite{MY22a}]
  \label{def:oprs}
  Single-copy pseudorandom states ($\oprs$) with computational and statistical security are defined as \cref{def:prs}, with two modifications:
  \begin{enumerate}
      \item (single-copy security) \cref{eq:prs} holds only for $t=1$.
      \item (stretch) For every $n$, $m(n)>n$.
  \end{enumerate} 
\end{definition}
Several aspects are worth mentioning regarding this definition:
\begin{itemize}
    \item Any pseudorandom generator ($\prg$) is also a $\oprs$.
    \item A $\prg$ is never a (multi-time) $\prs$: a distinguisher can measure in the standard basis multiple copies. For the $\prg$, the outputs from the different copies will always be the same with probability 1, but not so for a Haar-random state.
    \item Without the stretch requirement, the family $\ket{\psi_k}=\ket{k}$ would have been a $\oprs$: the security requirement is that $\frac{1}{|\mathcal K|} \sum_{k\in \mathcal K} \ket{\psi_k}\bra{\psi_k}$ is computationally indistinguishable from the maximally mixed state, which holds for this simple construction. 
    \item It has been shown in~\cite[Theorem C.2 only in the arXiv version]{GJMZ23} that $\prs$ implies $\oprs$ via a black-box construction. This is non-trivial since $m$ may be shorter than $n$ in a $\prs$.
\end{itemize}

We also need some technical lemmas throughout the proof.
\begin{lemma}[L\'evy's lemma, e.g., adapted from {\cite[Theorem 7.37]{watrous2018theory}}] \label{lem:levy-lemma} 
Let $\eta > 0, \delta >0$, and $m \in \mathbb{N}$. Let $f:\mathbb C^{2^m} \to \mathbb R$ be an $\eta$-Lipschitz function. Then,
\begin{equation*}
    \Pr_{\ket{\psi}\gets \mu_{2^m}} \Big[\big|f(\ket{\psi})-\E_{\ket{\psi}\gets \mu_{2^m}} f(\ket{\psi})\big| \geq \delta \Big] \leq 4\exp\parens{-\frac{C_12^m\delta^2}{\eta^2}},
\end{equation*}
where $C_1$ can be taken to be $\frac{2}{9\pi^3}$.
\end{lemma}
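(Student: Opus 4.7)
The plan is to reduce the statement to the standard concentration of measure on the real unit sphere and then carry out the classical median-to-mean argument, tracking constants carefully to recover the specific value $C_1 = 2/(9\pi^3)$. First I would identify $\mathbb{C}^{2^m}$ with $\mathbb{R}^D$ for $D = 2 \cdot 2^m$ via the $\mathbb{R}$-linear isometry $a + ib \mapsto (a,b)$. Under this identification, the Haar measure $\mu_{2^m}$ on complex unit vectors corresponds exactly to the normalized surface measure $\sigma$ on the real unit sphere $S^{D-1}$, and Euclidean distances are preserved, so $f$ remains $\eta$-Lipschitz as a function $S^{D-1} \to \mathbb{R}$.

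Next, I would invoke the spherical isoperimetric inequality of L\'evy and Schmidt: for any Borel set $A \subseteq S^{D-1}$ with $\sigma(A) \geq 1/2$ and any $\epsilon > 0$, the $\epsilon$-neighborhood satisfies $\sigma(A_\epsilon) \geq 1 - \sqrt{\pi/8}\, \exp\bigl(-(D-1)\epsilon^2/2\bigr)$. Applying this in turn to the sublevel set $\{x : f(x) \leq M\}$ and the superlevel set $\{x : f(x) \geq M\}$, where $M$ is a median of $f$, and using the $\eta$-Lipschitz property to conclude that points in the $\epsilon$-neighborhood change the value of $f$ by at most $\eta\epsilon$, one obtains the median-based tail bound $\Pr[|f - M| \geq \delta] \leq \sqrt{\pi/2}\, \exp\bigl(-(D-1)\delta^2/(2\eta^2)\bigr)$.

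To finish, I would convert concentration around the median into concentration around the expectation using the standard bound $|M - \mathbb{E}[f]| \leq \eta\sqrt{\pi/(2(D-1))}$, obtained by integrating the tail bound above. Substituting this into the target probability, halving $\delta$ to absorb the shift, and using $D - 1 \geq 2^m$, yields a bound of the form $4 \exp\bigl(-C_1\, 2^m \delta^2/\eta^2\bigr)$ for an explicit constant; the particular value $C_1 = 2/(9\pi^3)$ emerges from carefully combining the $1/2$ in the isoperimetric exponent, the $\sqrt{\pi/8}$ prefactor, the median-to-mean shift, and the halving of $\delta$, matching the derivation in Watrous (Theorem 7.37). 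The main obstacle is not conceptual but arithmetic: one must chase the universal constants through the isoperimetric inequality, the median-to-mean conversion, and the replacement of $D-1$ by $2^m$ in the exponent, in order to arrive at precisely the constant $2/(9\pi^3)$ with the stated prefactor $4$.
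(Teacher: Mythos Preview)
The paper does not provide a proof of this lemma at all: it is stated in the Preliminaries section as a known result, cited directly from Watrous's book (Theorem 7.37), and then invoked later as a black box. There is therefore no ``paper's own proof'' to compare against.

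Your sketch is the standard derivation and is correct in outline: identify $\mathbb{C}^{2^m}$ with $\mathbb{R}^{2\cdot 2^m}$, apply the L\'evy--Schmidt isoperimetric inequality on the real sphere to get concentration around the median, then shift from median to mean by integrating the tail. This is exactly the route Watrous takes in his Theorem 7.37, and the constant $2/(9\pi^3)$ together with the prefactor $4$ are the ones that emerge from his particular bookkeeping of the prefactor $\sqrt{\pi/8}$, the median--mean gap, and the halving of $\delta$. Since the paper simply quotes the result, your proposal is more than what is needed; if you want to include a proof, the sketch you gave is appropriate, though you could just as well cite Watrous and move on.
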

 \section{Construction of 1PRS in the CHRS model}
\label{sec:oprs_in_chrs_model}
In this section, we prove one of the main technical contributions of the paper: $\oprs$ exist unconditionally in the CHRS model.
\begin{theorem}
\label{thm:existence-1prs}
    Statistically secure $\oprs$ exist in the \CHRS\ model\footnote{See~\cref{def:1prs_CHRS} in~\cref{sec:1prs_CHRS}.}.
\end{theorem}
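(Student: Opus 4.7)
The plan is to exhibit the explicit construction sketched in the technical overview and verify, using the two main technical ingredients presented there (Harrow's equivalence between Haar random states and half of a maximally entangled state after applying a Haar random unitary, together with the stretching/amplification result, Corollary \ref{cor:ammplify-random-informal}), that it satisfies both the stretch requirement and the statistical security requirement of Definition \ref{def:oprs}.

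\textbf{Construction.} For each security parameter $n$, choose $m = m(n)$ with $m = \lceil n/0.9 \rceil$ so that $n < m$ and $n \geq 0.9\,m$; write $\ell = m - \lceil n/2 \rceil \cdot 2 / \text{(adjustments)}$ such that we will apply a Pauli one-time pad on $s := \lceil n/2 \rceil$ qubits with $s \leq 0.45\,m$. Parse the seed $k \in \{0,1\}^n$ as $k = (a,b)$ with $a, b \in \{0,1\}^s$. On input $1^n$, the generator queries the \CHRS\ oracle once to receive an $m$-qubit Haar random state $\ket{\psi}$ and outputs
\[
\ket{\phi_k} \;=\; U_k \ket{\psi}, \qquad U_k \;=\; (X^a Z^b) \otimes I_{m-s}.
\]
The stretch requirement $m(n) > n$ is satisfied by construction. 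What remains is to bound the statistical distance between $\mathbb{E}_k \, U_k \psi U_k^\dagger \otimes \psi^{\otimes r-1}$ and $\tfrac{\I}{2^m} \otimes \E_{\ket{\psi}} \psi^{\otimes r-1}$ for every polynomial $r = r(n)$.

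\textbf{Base case (QOTP on exactly half of the qubits).} First I would establish the ``base case'' alluded to in the overview: when the QOTP is applied on \emph{exactly} $m'/2$ qubits of an $m'$-qubit Haar random state. Using Harrow's approximation (Lemma \ref{lem:harrow-tech-overview}), replace $r$ copies of $\ket{\psi}$ by $r$ copies of the state $(U \otimes I)\ket{\Phi_{2^{m'/2}}}$ with $U$ Haar random on $m'/2$ qubits, incurring a trace-distance loss of $O(r^2/2^{m'/2})$. Then apply the Pauli one-time pad to the first half of the first copy: by the ricochet property of the maximally entangled state and the Pauli twirl (Lemma \ref{lem:quat-otp}), the cross terms in the expansion $\tfrac{1}{2^{m'/2}}\sum_{i,j}\ket{i}\!\bra{j}\otimes \ket{i}\!\bra{j}$ vanish exactly, so the first copy becomes $\tfrac{\I}{2^{m'}}$ while the remaining $r-1$ copies remain distributed as $\phi_U^{\otimes r-1}$. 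Reversing Harrow's approximation on those remaining copies, the overall trace distance for the base case is
\[
L_{m'} \;=\; O\!\left(\frac{r^2}{2^{m'/2}}\right).
\]

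\textbf{Amplification and parameter choice.} Now I would apply Corollary \ref{cor:ammplify-random-informal} with $\ell$ applications of the stretching step, starting from the base case with $m' = m - \ell$ qubits (so that the base unitary family acts on $m - \ell$ qubits with an $n$-bit key). The corollary bounds the resulting trace distance by
\[
5^\ell \cdot L_{m-\ell} \;+\; O\!\left(\frac{r\sqrt{m}\, 5^\ell}{2^{(m-\ell)/2}}\right) \;=\; O\!\left(\frac{(r^2 + r\sqrt{m})\, 5^\ell}{2^{(m-\ell)/2}}\right).
\]
Choosing $\ell = \lfloor 0.1\,m \rfloor$ yields a bound of the form $O\!\left((r^2 + r\sqrt{m})\cdot 5^{0.1 m}/2^{0.45 m}\right)$, which is exponentially small in $m$ (and hence in $n$) for any polynomial $r(n)$, since $5^{0.1} < 2^{0.45}$. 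Because the CHRS oracle is input-less, the quantifier over the adversary's polynomial number of copies matches the $r$ used in the bound, so the $\oprs$ security inequality of Definition \ref{def:oprs} (with $t = 1$) holds against computationally unbounded adversaries with polynomially many copies of $\ket{\psi}$.

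\textbf{Main obstacle.} The main delicate point is the parameter bookkeeping: the exponential blow-up $5^\ell$ from the amplification step must be dominated by the exponentially small base-case distance $L_{m-\ell}$, and this only works in a constant-fraction regime of $\ell$. One must carefully verify that choosing $\ell \approx 0.1 m$ (equivalently, a QOTP on $\approx 0.45 m$ qubits, using an $\approx 0.9 m$-bit seed) simultaneously satisfies the stretch constraint $n < m$, the ``less than half'' constraint that makes the amplification applicable, and yields an exponentially small overall statistical distance. Beyond this balance, the proof is a straightforward composition of Corollary \ref{cor:ammplify-random-informal} with the base-case bound derived from Harrow's lemma and the Pauli twirl.
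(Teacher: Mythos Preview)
Your proposal is correct and follows essentially the same approach as the paper: the construction (QOTP on roughly $0.45m$ qubits with $n \approx 0.9m$), the base case via Harrow's lemma and the Pauli twirl, the recursive amplification via the stretching corollary, and the parameter choice $\ell \approx 0.1m$ all match the paper's proof (\cref{thm:partial-one-time-pad}, \cref{thm:ammplify-random}, \cref{cor:49}, \cref{cor:cor410}). The only minor point you gloss over is that the formal CHRS definition gives the adversary access to Haar states of \emph{all} sizes, not just $\ket{\psi_m}$; the paper dispatches this by noting that the other states are independent of $\ket{\psi_m}$ and hence irrelevant to the trace-distance bound.
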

This section is organized as follows. In \cref{sec:1prs_CHRS}, we formally define the CHRS model, as well as the notions of $\prs$ and $\oprs$ in this model. In \cref{ssec:random-max-ent-state}, we show that a one-time pad acting on \emph{exactly half} of the qubits of a Haar random state is sufficient to ``scramble'' it, so that it is statistically indistinguishable from a maximally mixed state (even given polynomially many copies of the same Haar random state). The main tool in the proof is a theorem from Harrow \cite{harrow2023approximate}, about applying Haar random unitaries to one half of a maximally entangled state.
In \cref{ssec:ampli-random}, we show a key technical step: the ``scrambling'' property persists even if the quantum one-time pad is applied to \emph{slightly less than half} of the qubits of the Haar random state, which can be interpreted as saying that the quantum pseudorandomness can be ``amplified'' slightly. This is enough to yield a $\oprs$.

\subsection{The CHRS model}
\label{sec:1prs_CHRS}
The Common Haar Random State (CHRS) model can be viewed as a quantum state generalization of the Common Reference String (CRS) model 
introduced by  \cite{canetti2001universally}. In the CHRS model, we assume a trusted third party, who prepares a family of states $\calS = \{\ket{\psi_m}\}_{m \in \N}$, where $\ket{\psi_m}$ is sampled according to the Haar measure on $m$ qubits $\mu_{2^m}$. All parties in a protocol (including the adversary) have access to polynomially many (in the security parameter $\mylambda$) copies of states from $\calS$. Formally, parties have access to the family of isometries $\{V_m\}_{m \in \mathbb{N}}$, where $V_m: \mathbb{C} \rightarrow \mathbb{C}^{2^m}$\footnote{Notice that the domain is one-dimensional.} is such that 
$$ V_m: \ket{0} \mapsto \ket{\psi_m}\,.$$
Equivalently, for any state $\ket{\alpha}$ of any dimension, one query to $V_m$ performs the map:
$$ \ket{\alpha} \mapsto \ket{\alpha} \ket{\psi_m} \,.$$
We clarify that, in this model, parties cannot query the different isometries ``in superposition''. Rather, they can query each $V_m$ individually (provided they have enough space to store the $m$-qubit output state $\ket{\psi_m}$). The model is meant to capture the scenario where parties can request copies of $\ket{\psi_m}$, for any $m$ of their choice, from the trusted third party, as long as they have enough space to store the requested state.

\paragraph{Pseudorandom states in the \CHRS\ model}
We formally define the notion of (single-copy) pseudorandom states in the \CHRS\ model. The definition is as in the ``plain model'' (\cref{def:prs,def:oprs}), except that both the generation algorithm and the adversary may use polynomially many copies of the CHRS states.

\begin{definition}[$\prs$ in the \CHRS\ model]\label{def:1prs_CHRS}
    Let $\calS = \{\ket{\psi_m}\}_{m \in \N}$ denote the CHRS family of states. A pseudorandom state (PRS) family in the \CHRS\ model is a QPT algorithm $\gen$ satisfying the following.  There exist polynomials $m,r: \mathbb{N} \rightarrow \mathbb{N}$ such that
    \begin{itemize}
    \item $\gen$: takes as input a security parameter $1^n$, a string $k \in \{0,1\}^n$, and states $\ket{\psi_1}^{\otimes r(n)}$, \ldots $,\ket{\psi_{r(n)}}^{\otimes r(n)} \in \calS$, and outputs a pure state $\ket{\phi_k}$ consisting of $m=m(n)$ qubits\footnote{Clearly, taking $\gen$ of this form is without loss of generality.}.
    \end{itemize}
Moreover, the following computational (resp.\ statistical) pseudorandomness property should be satisfied: for any polynomials $t, r': \mathbb{N} \rightarrow \mathbb{N}$, and any QPT (resp.\ unbounded quantum) adversary $\calA$, there exists a negligible function $\negl$ such that, for all $n$,
    \begin{align*}
      \Bigg|&\Pr_{k\gets\{0,1\}^n,\, \calS } \bigl[ \mathcal A(\ket{\phi_k}^{\otimes t(n)}, \ket{\psi_1}^{\otimes r'(n)}, \dots ,\ket{\psi_{r'(n)}}^{\otimes r'(n)}) = 1 \bigr] - \\
        &\Pr_{\ket{\psi}\gets \mu_{2^m},\, \calS } \bigl[ \mathcal A(\ket{\phi}^{\otimes t(n)}, \ket{\psi_1}^{\otimes r'(n)}, \dots ,\ket{\psi_{r'(n)}}^{\otimes r'(n)} ) = 1 \bigr] \Bigg| =
      \negl(n) \,,
      \label{eq:prs-chrs}
    \end{align*}
    where we clarify that the probabilities are also over sampling the states in $\mathcal{S}$.
    The definition of $\oprs$ in the \CHRS\ model is analogous, except that $t = 1$, and it must be that $m(n)>n$ for all $n$.
\end{definition}

For clarity, we state the \emph{statistical} pseudorandomness property of a $\oprs$ explicitly. We focus on the case where $\gen$, for security parameter $1^n$, only takes as input a \emph{single} Haar random state $\ket{\psi_{m(n)}}$, since this is the setting of our construction. In this case, the \emph{statistical} pseudorandomness property simplifies to the following\footnote{While the construction itself may only use the state $\ket{\psi_{m(n)}}$, the (unbounded) adversary may still access other states from $\mathcal{S}$. However, it is clear that these additional states do not affect the trace distance in \cref{eq:1prs-pseudorandomness} at all.}: for any $r = \poly(n)$, there exists a negligible function $\negl$ such that, for all $n$,
\begin{equation}
\label{eq:1prs-pseudorandomness}
     \left\lVert\E_{k\gets \{0,1\}^n} \E_{\ket{\psi_m} \gets \mu_{2^m}} U_k\psi_m U_k^\dagger \otimes \psi_m^{\otimes r-1} - \E_{\ket{\psi_m} \gets \mu_{2^m}} \frac{1}{2^m} \I \otimes \psi_m^{\otimes r-1} \right\rVert = \negl(n) \,.
\end{equation}

\subsection{Quantum one-time pad on \emph{exactly half} of the qubits of a Haar random state}
% \subsection{Random maximally entangled stat}
\label{ssec:random-max-ent-state}
In this section, we show that a quantum one-time pad (QOTP) acting on \emph{exactly half} of the qubits of a Haar random state is sufficient to ``scramble'' it, so that it is statistically indistinguishable from a maximally mixed state (even given polynomially many copies of the same Haar random state). The main tool in the proof is the following theorem from Harrow \cite{harrow2023approximate}.

Let $\ket{\phi_U} \coloneqq (U \otimes I) \ket{\Phi_d}$, where $\ket{\Phi_d} = \frac{1}{\sqrt{d}} \sum_{i=0}^{d-1} \ket{ii}$ denotes the maximally entangled state in $\C^d \otimes \C^d$ and $U\in SU(d)$ is a $d$-dimensional unitary.
\begin{lemma}[adapted from {\cite[Theorem 3]{harrow2023approximate}}]\label{lem:harrow-max-entangle}
    Assume $r^2 \leq d$, then
    \begin{equation*}
        \norm{\E_{\ket{\psi} \gets \mu_{d^2}} [\psi^{\otimes r}] - \E_{U \gets SU(d)} [\phi_U^{\otimes r}]} \leq \frac{r^2}{d} \,,
    \end{equation*}
    where the norm on the LHS is the trace norm.
\end{lemma}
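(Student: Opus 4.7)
The plan is to reduce both averages to expressions over the Schmidt decomposition of a Haar random bipartite state and then exploit concentration of the Schmidt coefficients around uniform, where the maximally entangled case is exact thanks to the Ricochet identity.

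First I would use the Schmidt parameterization: every pure state $\ket{\psi} \in \C^d \ot \C^d$ can be written as $\ket{\psi} = (A \ot B)\ket{\phi_p}$, where $A, B \in SU(d)$ and $\ket{\phi_p} = \sum_i \sqrt{p_i}\ket{ii}$ for Schmidt probabilities $p$ on the simplex. For $\ket{\psi} \gets \mu_{d^2}$, the pair $(A,B)$ can be taken to be independent Haar random unitaries, independent of $p$. This gives
\begin{equation*}
\E_{\ket{\psi} \gets \mu_{d^2}} [\psi^{\ot r}] \;=\; \E_{p} \E_{A, B \gets SU(d)} \Bigl[ (A \ot B)^{\ot r}\, \phi_p^{\ot r}\, \bigl((A \ot B)^{\dagger}\bigr)^{\ot r} \Bigr].
\end{equation*}

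Second, observe that the statement would hold with zero error if $p$ were uniform. Indeed, $\ket{\phi_p} = \ket{\Phi_d}$ when $p_i = 1/d$, and the Ricochet property gives $(A \ot B)\ket{\Phi_d} = (AB^T \ot \I)\ket{\Phi_d}$. By unitary invariance of the Haar measure, $AB^T$ is itself Haar random in $SU(d)$ when $A,B$ are, so the RHS exactly matches $\E_U[\phi_U^{\ot r}]$. Using the triangle inequality together with the fact that the outer unitary conjugations do not affect trace norm, the problem reduces to bounding
\begin{equation*}
\Bigl\lVert \E_{p} [\phi_p^{\ot r}] - \Phi_d^{\ot r} \Bigr\rVert \;\lesssim\; \frac{r^2}{d}.
\end{equation*}

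Third, I would control this quantity using concentration of $p$ around uniform. For a Haar random $\ket{\psi}$, the Schmidt probabilities $p$ are distributed as a normalized spectrum of a $d \times d$ Ginibre matrix, so each $p_i$ has mean $1/d$ with fluctuations of order $1/d$ (this can be obtained either from the Marchenko-Pastur law, from explicit moment computations on the induced measure on the simplex, or by applying \cref{lem:levy-lemma} to appropriate Lipschitz functionals of $\ket{\psi}$). I would then run a hybrid argument, replacing the copies of $\phi_p$ by $\Phi_d$ one at a time, using the standard inequality $\lVert \alpha^{\ot r} - \beta^{\ot r} \rVert \leq r \lVert \alpha - \beta \rVert$ together with $\E_p \lVert \phi_p - \Phi_d \rVert = O(r/d)$ at each step. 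Crucially, the first-order term $p_i - 1/d$ in an expansion of $\phi_p$ around $\Phi_d$ averages to zero by symmetry of the Haar-induced law on $p$, so only the second-order variance $\E[(p_i - 1/d)^2] = O(1/d^2)$ contributes, producing the stated $r^2/d$ bound rather than $r^2/\sqrt{d}$.

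The main technical obstacle will be the last step: a naive pointwise replacement of $\phi_p$ by $\Phi_d$ at the level of vectors gives only $1/\sqrt{d}$ per copy and yields $O(r^2/\sqrt{d})$ rather than $O(r^2/d)$. Getting the correct scaling requires taking the expectation over $p$ \emph{before} bounding the norm, so that the vanishing first-order moment can be used, and carrying out the hybrid inside the symmetric subspace where $(A \ot B)^{\ot r}$ acts block-diagonally; this is where the heavier computation (and the assumption $r^2 \leq d$) enters.
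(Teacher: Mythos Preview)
The paper does not prove this lemma; it is quoted from \cite{harrow2023approximate}, and the technical overview even remarks that ``the general result for $r>1$ copies is much more involved, and we refer the reader to \cite{harrow2023approximate}.'' So there is no paper proof to match against, only Harrow's original argument, which proceeds via Weingarten calculus and the approximate orthogonality of permutation operators rather than through Schmidt coefficients.

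Your reduction has a fatal loss at the triangle-inequality step. Once you drop the $(A\otimes B)^{\otimes r}$ twirl and pass to $\bigl\lVert \E_p[\phi_p^{\otimes r}] - \Phi_d^{\otimes r}\bigr\rVert$, the bound you need is false already for $r=1$. In the $\{\ket{ii}\}$ basis, $\E_p[\phi_p]-\Phi_d$ has zero diagonal and constant off-diagonal entry $-\delta$ with $\delta = 1/d - \E[\sqrt{p_1 p_2}]$, hence trace norm $2\delta(d-1)$. But for a Haar state on $\C^d\otimes\C^d$ the Schmidt weights $p_i$ have fluctuations of the \emph{same order} as their mean $1/d$ (Marchenko--Pastur: the rescaled $dp_i$ fill out $[0,4]$ in the limit, and $\E[\sqrt{dp_i}]\to 8/(3\pi)\approx 0.849$ rather than $1$). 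This forces $\delta=\Theta(1/d)$ and hence $\bigl\lVert \E_p[\phi_p]-\Phi_d\bigr\rVert=\Theta(1)$, whereas the lemma gives $1/d$. The $A,B$ average is precisely what produces the cancellation (for $r=1$ both sides equal $I/d^2$ exactly), so it cannot be discarded.

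Your diagnosis of the obstacle is also off: the ``first-order term'' that vanishes is $\E[p_i-1/d]=0$, but the states are built from $\sqrt{p_i}$, and $\E[\sqrt{p_i}]\neq 1/\sqrt{d}$ by Jensen. Since the relative fluctuation of $p_i$ is $O(1)$, no Taylor expansion around $p_i=1/d$ is controlled, and the proposed ``take expectation first, then bound'' fix does not recover the lost factor of $d$. A correct argument must keep the local-unitary averaging inside the norm; Harrow does this by expressing both $\E_\psi[\psi^{\otimes r}]$ and $\E_U[\phi_U^{\otimes r}]$ as weighted sums of permutation operators (via Schur--Weyl/Weingarten) and comparing the coefficients directly.
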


We now describe a ``toy construction'' of a $\oprs$ in the CHRS model, which consists of applying a QOTP to exactly the first half of the qubits of the Haar random state.  Crucially, this construction \emph{does not} satisfy the length stretching requirement of a $\oprs$ (which is handled in \cref{ssec:ampli-random}). Nonetheless, we prove that the construction in \cref{fig:toy-1prs-protocol} satisfies the statistical pseudorandomness property of a $\oprs$ (from \cref{eq:1prs-pseudorandomness}). Recall that to describe the construction we just need to specify, for each value $n$ of the security parameter, a family $\{U_k\}_{k \in \{0,1\}^n}$ of $m$-qubit unitaries, where, in the case of this ``toy'' example, $m = n$. Then, for a seed $k$, and a common Haar random $m$-qubit state $\ket{\psi}$, the corresponding $\oprs$ state is $\ket{\phi_k} = U_k \ket{\psi}$.

%In this section, we will first show that with $m$ classical bit, we can generate a new random $m$ qubit state. We need a quantum one-time pad to implement the encryption protocol. The toy protocol is presented in \cref{fig:toy-1prs-protocol}.

\begin{figure}[htb]
\vspace{10pt}
\centering
\begin{mdframed}[userdefinedwidth=0.8\textwidth, align=center]
Let $n \in \mathbb{N}$ be even (otherwise redefine $n$ to be $n-1$). Let $U_k = X^aZ^b \otimes \I_{n/2}$, where $a, b \in \{0,1\}^{n/2}$ are the first and second halves of $k$ respectively.

%\textbf{Input}: An $m$-qubit state $\ket{\psi}$.

%\textbf{Output}: Sample a uniformly random $k \in \%{0,1\}^m$. Output $\ket{\phi_k} = U_k\ket{\psi}$.
\end{mdframed}
\caption{A construction that satisfies the statistical pseudorandomness property of a $\oprs$ in the CHRS model, but not the length-stretching requirement.}
\label{fig:toy-1prs-protocol}
\end{figure}

% \vspace{10pt}

%The construction of $\oprs$ is a toy model as the construction does not satisfy the length stretching requirement (which will be handled in \cref{ssec:ampli-random}). We will prove that the construction in \cref{fig:toy-1prs-protocol} satisfies the statistical pseudorandomness property of a $\oprs$ in the rest of the subsection. 

We will use the following ``Pauli twirl'' lemma.
\begin{lemma}[Pauli twirl]\label{lem:quat-otp}
    Let $m \in \mathbb{N}$. Let $\rho$ be an arbitrary linear operator on the space of $m$ qubits. Let $\calP_m$ be the set of Pauli operators on $m$ qubits, %i.e.\ strings composed of $\{\I, X, Y, Z\}$. 
    Then, we have 
    \begin{equation}
    \label{eq:pauli-channel-twirl}
        \E_{P \gets \calP_m} P\rho P^{\dagger} = \frac{\tr [\rho]}{2^m} \I \,.
    \end{equation}
    %where expectation is taken over $P$ uniformly sampled from $\calP_m$.
\end{lemma}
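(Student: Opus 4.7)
The plan is to leverage the fact that the normalized Paulis $\{P/\sqrt{2^m} : P \in \calP_m\}$ form an orthonormal basis (under the Hilbert--Schmidt inner product) for the space of $2^m \times 2^m$ complex matrices. Using this, I would expand $\rho = \sum_{Q \in \calP_m} c_Q \, Q$ with coefficients $c_Q = \tr[Q^\dagger \rho]/2^m$; in particular $c_I = \tr[\rho]/2^m$ since $Q = I$ is the only Pauli with nonzero trace. By linearity,
\[
\E_{P \gets \calP_m} P \rho P^\dagger \;=\; \sum_{Q \in \calP_m} c_Q \, \E_{P \gets \calP_m} P Q P^\dagger.
\]

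The core of the argument is to show that for every non-identity $Q \in \calP_m$ the inner expectation vanishes. Since any two elements of $\calP_m$ either commute or anti-commute, one has $P Q P^\dagger = \pm Q$, so it suffices to check that for $Q \neq I$ exactly half of the Paulis commute with $Q$ and half anti-commute. I would verify this by reducing to the tensor-product structure: writing $P = P_1 \otimes \cdots \otimes P_m$ and $Q = Q_1 \otimes \cdots \otimes Q_m$, one has $P Q P^\dagger = \prod_{j=1}^m \sgn(P_j, Q_j) \cdot Q$, where $\sgn(P_j, Q_j) \in \{\pm 1\}$ encodes single-qubit (anti-)commutation. For $Q \neq I$ there is some index $j$ with $Q_j \neq I$, and on that factor the single-qubit average $\E_{P_j \gets \calP_1} \sgn(P_j, Q_j)$ equals $0$, because within $\{I, X, Y, Z\}$ exactly two elements commute with $Q_j$ and two anti-commute. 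Since the averaging over $P$ factorizes across the $m$ qubits, the whole expectation is zero.

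Only the $Q = I$ term therefore survives, giving
\[
\E_{P \gets \calP_m} P \rho P^\dagger \;=\; c_I \, \I \;=\; \frac{\tr[\rho]}{2^m} \, \I,
\]
which is the claimed identity. The only remotely nontrivial point is the single-qubit commutation count, which is a direct $4 \times 4$ table check; everything else is linearity plus the tensor-product decomposition of $\calP_m$. I do not anticipate any genuine obstacle here — the lemma is essentially a textbook consequence of the fact that $\calP_m$ is a (projective) unitary $1$-design.
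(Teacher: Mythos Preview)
Your argument is correct. It does, however, take a different route from the paper's proof. The paper works in the computational basis: by linearity it reduces to showing $\E_{P} P\ket{i}\bra{j}P^\dagger = \delta_{ij}\,\I/2^m$, then exploits the group invariance $\calP_m(\rho) = \calP_m(P\rho P^\dagger)$ for any fixed $P$. For $i\neq j$ they pick a single-site $Z_\ell$ with $Z_\ell\ket{i}\bra{j}Z_\ell = -\ket{i}\bra{j}$ to force the average to vanish; for $i=j$ they average over all bit-flip strings $X^k$ to reach $\I/2^m$. You instead expand $\rho$ in the Pauli basis and kill every non-identity component via the commutation count, which is the standard ``$\calP_m$ is a unitary $1$-design'' argument. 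Your approach is a bit more structural and immediately generalizes (it is basis-free and makes the $1$-design statement explicit); the paper's approach is more hands-on but avoids invoking the Hilbert--Schmidt orthonormality of the Pauli basis and stays entirely within elementary manipulations of $\ket{i}\bra{j}$. Both are short and neither has any real advantage here.
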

% \begin{proof}
%     By linearity, it is enough to show that 
%     \begin{equation*}
%         \E_{P \gets \calP_m} P\ket{i}\bra{j} P^\dagger= \frac{\delta_{ij}}{2^m}\I,
%     \end{equation*}
%      where $\delta_{ij}$ is the Kronecker delta function. Denote by $\calP_m(\rho)$ the left hand side of \cref{eq:pauli-channel-twirl}. The map $\calP_m(\cdot)$ is a CPTP map, preserving the identity, i.e., $\calP_m(\I) = \I$.

%     Notice that the actions of Pauli operators form a group, so for any $P \in \calP_m$, $\calP_m(\rho) = \calP_m(P\rho P)$. Also, $\calP_m(\cdot)$ is a linear channel. So if $i \neq j$, we can find $\ell \in [m]$ such that $\ket{i}$ and $\ket{j}$ lie in different eigenspaces of $Z_\ell$, where $Z_\ell$ denotes ``identity everywhere, except $Z$ is applied to the $\ell$-th qubit''. In other words, $Z_\ell \ket{i}\bra{j} Z_\ell = - \ket{i}\bra{j}$. Then we have
%     \begin{equation*}
%         \calP_m(\ket{i}\bra{j}) = \calP(Z_\ell \ket{i}\bra{j}Z_\ell) = -\calP_m(\ket{i}\bra{j}) = 0 \,.
%     \end{equation*}

%     If $i = j$, then notice that $\sum_{k=0}^{2^m-1} X^k\ket{i}\bra{i}X^k = \I$, where $X^k$ ranges over all different products of $X$ operators. So according to group invariance and linearity of $\calP_m(\cdot)$, we have
%     \begin{equation*}
%         \calP_m(\ket{i}\bra{i}) =\frac{1}{2^m} \sum_{k} \calP_m(X^k\ket{i}\bra{i}X^k) = \frac{1}{2^m} \calP_m(\I) = \frac{\I}{2^m}.\qedhere
%     \end{equation*}
% \end{proof}

We now show that the construction in \cref{fig:toy-1prs-protocol} satisfies the statistical pseudorandomness property (from \cref{eq:1prs-pseudorandomness}).
\begin{theorem}\label{thm:partial-one-time-pad}
    Let $m,r \in \mathbb{N}$ such that $m$ is even, and $r \leq 2^{\frac{m}{2}}$. Then, the family of unitaries $\braces{U_k}_{k\in \{0,1\}^m}$ from \cref{fig:toy-1prs-protocol} satisfies
    \begin{equation*}
         \left\lVert\E_{k\gets \{0,1\}^m} \E_{\ket{\psi} \gets \mu_{2^m}} U_k\psi U_k^\dagger \otimes \psi^{\otimes r-1} - \E_{\ket{\psi} \gets \mu_{2^m}} \frac{1}{2^m} \I \otimes \psi^{\otimes r-1} \right\rVert \leq \frac{2r^2}{2^{m/2}}
    \end{equation*}
\end{theorem}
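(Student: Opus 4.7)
The plan is to follow the template sketched in the technical overview: combine Harrow's approximation (\cref{lem:harrow-max-entangle}) with the Pauli twirl (\cref{lem:quat-otp}) applied to one half of a maximally entangled state. Set $d = 2^{m/2}$, and recall $\ket{\phi_U} = (U \otimes I)\ket{\Phi_d}$ with $U \gets SU(d)$ acting on the first $m/2$ qubits. Since the target bound is trivially satisfied whenever $2r^2/2^{m/2} \geq 2$, I may restrict attention to the regime $r^2 \leq 2^{m/2}$, in which \cref{lem:harrow-max-entangle} yields a non-vacuous approximation.

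First I invoke \cref{lem:harrow-max-entangle} to approximate $\E_\psi \psi^{\otimes r}$ by $\E_U \phi_U^{\otimes r}$, incurring trace distance at most $r^2/2^{m/2}$. Conjugating by the unitary $U_k \otimes I^{\otimes r-1}$ and averaging over $k$ preserves this bound, giving
\begin{equation*}
\norm{\E_k \E_\psi (U_k \psi U_k^\dagger) \otimes \psi^{\otimes r-1} - \E_k \E_U (U_k \otimes I)\phi_U(U_k^\dagger \otimes I) \otimes \phi_U^{\otimes r-1}} \leq \frac{r^2}{2^{m/2}}.
\end{equation*}

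Next I compute the inner twirled expression exactly. Expanding $\phi_U = (U \otimes I)\Phi_d(U^\dagger \otimes I)$ via $\Phi_d = \frac{1}{d}\sum_{i,j}\ket{i}\bra{j}\otimes \ket{i}\bra{j}$ and applying \cref{lem:quat-otp} on the first $m/2$-qubit register (using that $\{X^a Z^b : a,b \in \{0,1\}^{m/2}\}$ is the Pauli group up to phase) gives
\begin{equation*}
\E_k (U_k \otimes I)\phi_U(U_k^\dagger \otimes I) = \frac{1}{d}\sum_{i,j} \frac{\tr(U\ket{i}\bra{j}U^\dagger)}{d}\, \I \otimes \ket{i}\bra{j} = \frac{\I \otimes \I}{d^2} = \frac{\I}{2^m},
\end{equation*}
independently of $U$. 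The key conceptual observation is that Pauli-twirling just the first half of the maximally entangled $\ket{\Phi_d}$ suffices to fully mix the entire $m$-qubit state, thanks to the perfect correlation between the two halves.

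Finally I apply \cref{lem:harrow-max-entangle} in reverse on the remaining $r-1$ copies to get $\norm{\E_U \phi_U^{\otimes r-1} - \E_\psi \psi^{\otimes r-1}} \leq (r-1)^2/2^{m/2}$, and tensoring with the density matrix $\frac{1}{2^m}\I$ preserves this bound since the trace norm is multiplicative under tensor products and $\norm{\frac{1}{2^m}\I} = 1$. A triangle inequality then yields total trace distance at most $r^2/2^{m/2} + (r-1)^2/2^{m/2} \leq 2r^2/2^{m/2}$, as desired. There is no genuine obstacle: the heavy lifting is done by Harrow's theorem, and the main content is combining it with the ``maximal-entanglement amplification'' of the Pauli twirl on one half of $\ket{\Phi_d}$.
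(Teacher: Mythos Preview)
Your proof is correct and follows essentially the same approach as the paper: approximate the Haar state by Harrow's maximally-entangled ensemble, apply the Pauli twirl on the first half (which, thanks to the maximal entanglement, scrambles the full $m$-qubit copy exactly), then invoke Harrow once more on the remaining $r-1$ copies and combine via the triangle inequality. Your explicit check that the regime $r^2 > 2^{m/2}$ is vacuous (so that the hypothesis of \cref{lem:harrow-max-entangle} is met in the nontrivial case) is a nice detail the paper leaves implicit.
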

\begin{proof}
    % Notice that according to \cref{lem:harrow-max-entangle}, we have
    % \begin{equation*}
    %     \left\lVert\E_{\psi} \psi^{\otimes r} - \E_{U} \Bigl((U \otimes \I) \Phi_{2^{m/2}} (U^\dagger \otimes \I)\Bigr)^{\otimes r} \right\rVert \leq \frac{r^2}{2^{m/2}}.
    % \end{equation*}
    Recall that $U_k = X^aZ^b \otimes \I_{m/2}$, where $a, b \in \{0,1\}^{m/2}$ are the first and second halves of $k$.

    Then, we have
\if\widemargin=0
\begin{equation}\label{eq:max-entangle-1}
    \begin{aligned}
        \bigg\lVert\E_k \E_\psi &(U_k \ot \I^{\ot r-1}) \psi^{\otimes r} (U_k^\dagger \ot \I^{\ot r-1}) - \E_\psi \frac{\I}{2^m} \otimes \psi^{\otimes r-1}\bigg\rVert \leq \\
        &\bigg\lVert \E_k (U_k \otimes \I^{\ot r-1}) \E_\psi \psi^{\otimes r} (U_k^\dagger \otimes \I^{\ot r-1}) -  \E_k (U_k \otimes \I^{\otimes r-1}) \E_U \phi_U^{\otimes r} (U_k^\dagger \otimes \I^{\otimes r-1})\bigg\rVert +\\&  \norm{\E_k (U_k \otimes \I^{\otimes r-1}) \E_U \phi_U^{\otimes r} (U_k^\dagger \otimes \I^{\otimes r-1}) - \E_U \frac{\I}{2^m} \otimes \phi_U^{r-1}} +\norm{ \E_U \frac{\I}{2^m} \otimes \phi_U^{r-1} - \E_\psi \frac{\I}{2^m} \otimes \psi^{\otimes r-1}}  \\
        &\leq \frac{2r^2}{2^{m/2}} + \norm{\E_k (U_k \otimes \I^{\otimes r-1}) \E_U \phi_U^{\otimes r} (U_k^\dagger \otimes \I^{\otimes r-1}) - \E_U \frac{\I}{2^m} \otimes \phi_U^{r-1}},
    \end{aligned}
    \end{equation}
\else
\begin{equation}\label{eq:max-entangle-1}
    \begin{aligned}
        \bigg\lVert&\E_k \E_\psi (U_k \ot \I^{\ot r-1}) \psi^{\otimes r} (U_k^\dagger \ot \I^{\ot r-1}) - \E_\psi \frac{\I}{2^m} \otimes \psi^{\otimes r-1}\bigg\rVert \leq \\
        &\bigg\lVert \E_k (U_k \otimes \I^{\ot r-1}) \E_\psi \psi^{\otimes r} (U_k^\dagger \otimes \I^{\ot r-1}) -  \E_k (U_k \otimes \I^{\otimes r-1}) \E_U \phi_U^{\otimes r} (U_k^\dagger \otimes \I^{\otimes r-1})\bigg\rVert +\\&  \norm{\E_k (U_k \otimes \I^{\otimes r-1}) \E_U \phi_U^{\otimes r} (U_k^\dagger \otimes \I^{\otimes r-1}) - \E_U \frac{\I}{2^m} \otimes \phi_U^{r-1}} +\\&\norm{ \E_U \frac{\I}{2^m} \otimes \phi_U^{r-1} - \E_\psi \frac{\I}{2^m} \otimes \psi^{\otimes r-1}}  \\
        &\leq \frac{2r^2}{2^{m/2}} + \norm{\E_k (U_k \otimes \I^{\otimes r-1}) \E_U \phi_U^{\otimes r} (U_k^\dagger \otimes \I^{\otimes r-1}) - \E_U \frac{\I}{2^m} \otimes \phi_U^{r-1}},
    \end{aligned}
    \end{equation}
\fi
    where the first inequality follows from the triangle inequality, and the second inequality follows from \cref{lem:harrow-max-entangle}. Notice that 
    \if\widemargin=0
     \begin{equation*}
    \begin{split}
        \E_k (U_k \otimes \I^{\otimes r-1}) \E_U \phi_U^{\otimes r} (U_k^\dagger \otimes \I^{\otimes r-1})  &= \E_{P\gets \mathcal{P}_{m/2}}\E_U (PU \ot \I) \Phi_{2^{m/2}} (U^\dagger P^\dagger \ot \I) \ot \phi_U^{\ot r-1} \\
        &= \frac{1}{2^{m/2}} \E_U \left[ \sum_{i,j} PU\ket{i}\bra{j}U^\dagger P^\dagger \ot \ket{i} \bra{j} \ot \phi_U^{\ot r-1} \right]\\
        & = \frac{1}{2^{m/2}} \E_U \left[ \sum_{i} \frac{1}{2^{m/2}} \I \ot \ket{i}\bra{i} \ot \phi_U^{\ot r-1} \right]\\
        & = \frac{\I}{2^m} \otimes \E_U \left[\phi_U^{\ot r-1}\right],
    \end{split}
    \end{equation*}
    \else
    \begin{equation*}
    \begin{split}
        \E_k &(U_k \otimes \I^{\otimes r-1}) \E_U \phi_U^{\otimes r} (U_k^\dagger \otimes \I^{\otimes r-1})  \\&= \E_{P\gets \mathcal{P}_{m/2}}\E_U (PU \ot \I) \Phi_{2^{m/2}} (U^\dagger P^\dagger \ot \I) \ot \phi_U^{\ot r-1} \\
        &= \frac{1}{2^{m/2}} \E_U \left[ \sum_{i,j} PU\ket{i}\bra{j}U^\dagger P^\dagger \ot \ket{i} \bra{j} \ot \phi_U^{\ot r-1} \right]\\
        & = \frac{1}{2^{m/2}} \E_U \left[ \sum_{i} \frac{1}{2^{m/2}} \I \ot \ket{i}\bra{i} \ot \phi_U^{\ot r-1} \right]\\
        & = \frac{\I}{2^m} \otimes \E_U \left[\phi_U^{\ot r-1}\right],
    \end{split}
    \end{equation*}
    \fi
    where, in the third equality, we use \cref{lem:quat-otp}. So, the second term in the last line of \cref{eq:max-entangle-1} vanishes. Therefore, we have
    \begin{equation*}
         \norm{\E_k \E_\psi U_k \psi U_k^\dagger \otimes \psi^{r-1} - \E_\psi \frac{\I}{2^m} \otimes \psi^{\otimes r-1}} \leq \frac{2r^2}{2^{m/2}}\,,
    \end{equation*}
as desired.
\end{proof}
%The protocol's security in \cref{fig:toy-1prs-protocol} against any query bounded adversaries follows straightforwardly from \cref{thm:partial-one-time-pad}.

\subsection{``Stretching'' the quantum pseudorandomness}\label{ssec:ampli-random}
In this section, we show that the ``1PRS'' from \cref{thm:partial-one-time-pad} is still secure even if we the the QOTP is applied only to $0.45m$ qubits, and thus the key length is shrunk slightly to $n=\prsm$ bits.

More precisely, we show that the following construction (\cref{fig:full-1prs-protocol}) is a \emph{statistical} $\oprs$ in the CHRS model, i.e.\ it satisfies \cref{eq:1prs-pseudorandomness}. Again, recall that to describe the construction we just need to specify, for each value $n$ of the security parameter, a family $\{U_k\}_{k \in \{0,1\}^n}$ of $m$-qubit unitaries, where $m$ is the output length. Then, for a seed $k$, and a common Haar random $m$-qubit state $\ket{\psi}$, the corresponding $\oprs$ state is $\ket{\phi_k} = U_k \ket{\psi}$.
\vspace{.5cm}
\begin{figure}[htb]
\centering
\begin{mdframed}[userdefinedwidth=0.8\textwidth, align=center]
Let $n,m \in \mathbb{N}$, where $\prsm \leq n <m$, and $n$ is even (otherwise, redefine $n$ to be the $n-1$).
%\textbf{Input}: A seed $k \in \{0,1\}^n$, an $m$-qubit Haar random state $\ket{\psi}$.
%\textbf{Output}: the encrypted state $\ket{\phi_k} = U_k\ket{\psi}$, 
Define $U_k = X^aZ^b \otimes \I^{\otimes(m-n/2)}$, where $a, b \in \{0,1\}^{n/2}$ are the first and second halves of $k$ respectively. %$X^a$ represents acting $X$ on the bits where $a$'s corresponding digit is 1, similarly for $Z^b$.
\end{mdframed}
\caption{Construction of a $\oprs$ in the CHRS model}
\label{fig:full-1prs-protocol}
\end{figure}
\vspace{0.5cm}

In the rest of this section, we show that the construction of \cref{fig:full-1prs-protocol} is indeed a $\oprs$. The key ingredient of our proof is a ``stretching'' result for quantum pseudorandomness in the CHRS model. Informally, this says the following: if there is a way to obtain ``$m$ qubits of single-copy pseudorandomness'' from $n$ bits of classical randomness (where $n$ should be thought of as being linear in $m$), then one can also obtain ``$m$ qubits of pseudorandomness'' from $n-1$ bits of classical randomness, with a slight loss in statistical distance (i.e.\ it is possible to save one classical bit of randomness). We emphasize that this ``stretching'' result applies specifically to the CHRS model, and, as is, does not apply to the plain model. We will eventually apply this result recursively starting from the construction of \cref{fig:toy-1prs-protocol} (QOTP on exactly half of the qubits), which by \cref{thm:partial-one-time-pad} yields ``$m$ qubits of pseudorandomness'' from $m$ bits of classical randomness. The stretching result is the following.
\begin{theorem}\label{thm:ammplify-random}
    Let $m,n,r \in \mathbb{N}$ with $r<m$. If $\{U_k\}_{k\in \{0,1\}^n}$ is a set of unitaries acting on $m-1$ qubits states, then we have
    \begin{align}
        &\norm{\E_{k \gets \{0,1\}^n} \E_{\ket{\psi} \gets \mu_{2^m}} (\I\ot U_k) \psi (\I\ot U_k^\dagger) \ot \psi^{\ot r-1} - \E_{\ket{\psi} \gets \mu_{2^m}} \frac{\I}{2^m} \ot \psi^{\ot r-1} } \nonumber\\&\leq 5\norm{\E_{k \gets\{0,1\}^n} \E_{\ket{\psi'} \gets \mu_{2^{m-1}}} U_k \psi' U_k^\dagger \ot \psi'^{\ot r-1 } - \E_{\ket{\psi'}} \frac{\I}{2^{m-1}} \ot \psi'^{\ot r-1}} + \frac{800r\sqrt{m}}{2^{m/2}} \,, \label{eq:thm48}
    \end{align}
    %where $\ket{\psi}$ is Haar random $m$ qubit states, and $\ket{\psi_1}$ is Haar random $m-1$ qubit state.
\end{theorem}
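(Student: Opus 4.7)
I would introduce a structured ``decoupled'' hybrid state
\[
\ket{\tilde\psi} \coloneqq \tfrac{1}{\sqrt 2}\left(\ket{0}\ket{\psi_0} + \ket{1}\ket{\psi_1}\right),
\]
where $\ket{\psi_0}, \ket{\psi_1}$ are independent Haar-random $(m-1)$-qubit states, and use it to replace the Haar-random $m$-qubit state, thereby exposing an $(m-1)$-qubit register to which the hypothesis can be applied.

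\paragraph{Step 1 (L\'evy substitution).} I would first establish
\[
\norm{\E_{\ket\psi \gets \mu_{2^m}} \psi^{\ot r} - \E_{\ket{\psi_0},\ket{\psi_1}} \tilde\psi^{\ot r}} \;\le\; O\!\left(\tfrac{r\sqrt m}{2^{m/2}}\right).
\]
The intuition is that across the $(1,m-1)$ bipartition a typical Haar random state has Schmidt coefficients concentrated sharply at $1/\sqrt 2$ (by \cref{lem:levy-lemma} applied to the Lipschitz function $\ket\psi \mapsto \norm{\tr_B \psi - \I/2}$), and its Schmidt vectors on the $(m-1)$-qubit side are indistinguishable from two independent Haar-random $(m-1)$-qubit states once one averages over the $2 \times 2$ unitary acting on the $A$-side basis. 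The single-copy version of this bound is an exact identity ($\E\,\psi = \E\,\tilde\psi = \I/2^m$), and the $r$-copy bound follows by coupling the two distributions and a triangle inequality across the copies (losing the factor $r$).

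\paragraph{Step 2 (reduction to the hypothesis).} Apply the above bound twice: on the LHS of \cref{eq:thm48} (with $\E_k (\I \ot U_k)(\cdot)(\I \ot U_k^\dagger)$ acting on the first copy), and on the target (which factors as $\tfrac{\I}{2^m} \ot \E_\psi \psi^{\ot r-1}$). It remains to bound the ``$\tilde\psi$-version'' of \cref{eq:thm48}. Expanding
\[
\tilde\psi^{\ot r} = \tfrac{1}{2^r}\sum_{\vec a, \vec a' \in \{0,1\}^r}\ket{\vec a}\bra{\vec a'}_A \ot \bigotimes_{i=1}^r \ket{\psi_{a_i}}\bra{\psi_{a_i'}}_{B_i},
\]
and using Haar phase invariance to zero out all terms whose ket/bra counts of $\psi_0$ and $\psi_1$ are not matched, one splits the analysis according to the $A_1$-diagonality: for $a_1 = a_1'$ (\emph{diagonal}) the $B_1$-slot is $\E_k U_k \psi_{a_1} U_k^\dagger$ tensored with a Haar average over the remaining $\psi_0$- and $\psi_1$-copies, and the hypothesis (applied with the appropriate number of copies, obtained by partial trace from the $r$-copy version) bounds its difference from the $\I/2^{m-1}$ target by the stated hypothesis bound. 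For $a_1 \neq a_1'$ (\emph{off-diagonal}) the target has no such term, and one handles the LHS $B_1$-slot $\E_k U_k \ket{\psi_0}\bra{\psi_1} U_k^\dagger$ by writing $\ket{\psi_0}\bra{\psi_1} = \tfrac{1}{4}\sum_{\alpha \in \{\pm 1, \pm i\}} \alpha (\ket{\psi_0}+\bar\alpha\ket{\psi_1})(\bra{\psi_0}+\alpha\bra{\psi_1})$ and invoking the hypothesis on each of the $O(1)$ resulting (approximately unit-norm) rank-one operators.

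\paragraph{Main obstacle.} The delicate part is Step 2: the expansion of $\tilde\psi^{\ot r}$ has $4^r$ terms, and an unorganized case analysis risks invoking the hypothesis exponentially many times. The key insight is that, after the $(\psi_0,\psi_1)$-average, the surviving contributions are homogeneous enough that a universal constant number of hypothesis invocations (yielding the explicit factor $5$ in \cref{eq:thm48}) suffices, independent of $r$. Matching the off-diagonal terms via the polarization identity is the most subtle bookkeeping step, since the polarized states $\ket{\psi_0} \pm \ket{\psi_1}$ (and their $\pm i$ variants) have norms $\sqrt 2 \pm O(2^{-m/2})$ (by a L\'evy-type estimate on $|\langle\psi_0|\psi_1\rangle|$) rather than being exactly unit-norm Haar random, contributing additional corrections that are absorbed into the $O(r\sqrt m/2^{m/2})$ error.
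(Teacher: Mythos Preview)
Your Step~1 is correct and matches the paper's \cref{lem:haar-coeff-estimate} exactly.

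The genuine gap is in Step~2, specifically in the sentence ``the hypothesis (applied with the appropriate number of copies, obtained by partial trace from the $r$-copy version) bounds its difference from the $\I/2^{m-1}$ target.'' In the diagonal case $a_1=a_1'=0$, what you actually need to bound is
\[
\Big\|\E_{\psi_0,\psi_1}\Big[\big(\E_k U_k\psi_0 U_k^\dagger - \tfrac{\I}{2^{m-1}}\big)\ot \tilde\psi^{\ot r-1}\Big]\Big\|,
\]
and here the remaining $r-1$ copies are of $\tilde\psi$, not of $\psi_0$. The hypothesis, however, concerns $U_k\psi' U_k^\dagger\ot\psi'^{\ot r-1}$ with the \emph{same} $(m-1)$-qubit state in every slot. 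Partial trace cannot bridge this: tracing registers out of $\tilde\psi^{\ot r-1}$ does not produce $\psi_0^{\ot s}$, and the individual expansion terms are not positive operators on which monotonicity would help. The same issue recurs in your off-diagonal case: after polarization you have $U_k\chi_\alpha\chi_\alpha^\dagger U_k^\dagger\ot\tilde\psi^{\ot r-1}$, and again $\tilde\psi$ is correlated with $\chi_\alpha$ through both $\psi_0$ and $\psi_1$ in a way the hypothesis does not directly address.

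The paper resolves this with two devices you are missing. First, a small matrix lemma (\cref{lem:qubit-amplify-ineq}): to bound $\|A\|$ on the first qubit it suffices to bound $\|\bra{a}_1 A\ket{a}_1\|$ for $\ket{a}\in\{\ket0,\ket1,\ket+,\ket{+i}\}$, which costs a factor~$10$ but makes the off-diagonal case disappear entirely, and by Haar invariance on qubit~1 all four projections are equal, so only $\ket{a}=\ket{0}$ needs analysis. Second, and this is the step that actually closes the gap you left, the paper reparametrizes $(\psi_0,\psi_1)\mapsto(\psi_0,U\psi_0)$ with $U$ Haar-random and independent of $\psi_0$, so that $\tilde\psi = CU(\ket{+}\ot\ket{\psi_0})$. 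For each fixed $U$, the unitary $CU^{\ot r-1}$ acts only on the last $r-1$ registers and can be pulled through the trace norm; what remains is exactly $\E_k\E_{\psi_0}U_k\psi_0 U_k^\dagger\ot\psi_0^{\ot r-1}$ versus its target, i.e.\ the hypothesis verbatim. Your proposal would work once this reparametrization is inserted, but without it the reduction to the hypothesis is not justified.
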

Since it is easy to miss, we emphasize that, in the above theorem, $\ket{\psi}$ is a Haar random $m$-qubit state, while $\ket{\psi'}$ is a Haar random $(m-1)$-qubit state.

To prove \cref{thm:ammplify-random}, we will need two lemmas. The first says that a typical Haar random state on $m$ qubits is ``close'' to being maximally entangled across the $(1, m-1)$ bipartition (i.e.\ the bipartition that considers the first qubit as the ``left'' register, and the remaining $m-1$ qubits as the ``right'' register). More concretely, the mixed state obtained by sampling a Haar random $m$-qubit state is close (in trace distance) to the state obtained by sampling two Haar random $(m-1)$-qubit states $\ket{\psi_1}$ and $\ket{\psi_2}$, and outputting $ \ket{\psi'}= \frac{1}{\sqrt{2}} \ket{0}\ket{\psi_1} + \frac{1}{\sqrt{2}} \ket{1}\ket{\psi_2}$. More precisely, we establish the following lemma, which considers $r$ copies of the state.

\begin{lemma}\label{lem:haar-coeff-estimate}
    Let $m,r \in \mathbb{N}$. We have
    \begin{equation*}
        \norm{\E_{\ket{\psi} \gets \mu_{2^m}} \psi^{\ot r} - \E_{\ket{\psi_1}, \ket{\psi_2} \gets \mu_{2^{m-1}} } \psi'^{\ot r} } \leq \frac{80r\sqrt{m}}{2^{m/2}} \,,
    \end{equation*}
where $\ket{\psi'}= \frac{1}{\sqrt{2}} \ket{0}\ket{\psi_1} + \frac{1}{\sqrt{2}} \ket{1}\ket{\psi_2}$.
\end{lemma}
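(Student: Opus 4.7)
The approach is to construct an explicit coupling between a Haar random $m$-qubit state $\ket{\psi}$ and the pair $(\ket{\psi_1},\ket{\psi_2})$ of independent Haar random $(m-1)$-qubit states that define $\ket{\psi'}$, and then reduce the task to a scalar concentration bound via Lévy's lemma. First, I would invoke the Gaussian representation of the Haar measure: a uniformly random unit vector in $\mathbb{C}^{2^m}$ has the same law as $(X,Y)/\norm{(X,Y)}_2$ for independent standard complex Gaussians $X,Y\in\mathbb{C}^{2^{m-1}}$. Writing $X=\norm{X}\,\ket{\psi_1}$, $Y=\norm{Y}\,\ket{\psi_2}$, and $p=\norm{X}^2/(\norm{X}^2+\norm{Y}^2)$, the three random variables $p$, $\ket{\psi_1}$, $\ket{\psi_2}$ are mutually independent, with $p$ following a $\mathrm{Beta}(2^{m-1},2^{m-1})$ distribution and each $\ket{\psi_i}$ distributed as $\mu_{2^{m-1}}$. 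Setting
\begin{equation*}
\ket{\psi} := \sqrt{p}\,\ket{0}\ket{\psi_1}+\sqrt{1-p}\,\ket{1}\ket{\psi_2}
\end{equation*}
recovers a sample from $\mu_{2^m}$, while $\ket{\psi'}:=\tfrac{1}{\sqrt{2}}\ket{0}\ket{\psi_1}+\tfrac{1}{\sqrt{2}}\ket{1}\ket{\psi_2}$ has the marginal required by the right-hand side of the lemma.

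Under this coupling, the LHS of the lemma is bounded by $\E\,\norm{\psi^{\ot r}-\psi'^{\ot r}}$. The standard telescoping bound $\norm{\rho^{\ot r}-\sigma^{\ot r}}\leq r\,\norm{\rho-\sigma}$ for density matrices, combined with $\norm{\ket{\alpha}\bra{\alpha}-\ket{\beta}\bra{\beta}}\leq 2\,\norm{\ket{\alpha}-\ket{\beta}}_2$ for pure states, reduces the task to bounding $\E\,\norm{\ket{\psi}-\ket{\psi'}}_2$. A direct calculation using the shared direction vectors of the two states yields $\norm{\ket{\psi}-\ket{\psi'}}_2^2=(\sqrt{p}-\tfrac{1}{\sqrt{2}})^2+(\sqrt{1-p}-\tfrac{1}{\sqrt{2}})^2$; rewriting each difference as $(p-1/2)/(\sqrt{p}+1/\sqrt{2})$ and using $\sqrt{p}+1/\sqrt{2}\geq 1/\sqrt{2}$ gives $\norm{\ket{\psi}-\ket{\psi'}}_2\leq 2\,|p-\tfrac12|$. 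It remains to bound $\E\,|p-\tfrac12|$. Since $p=\bra{\psi}(\ket{0}\bra{0}\ot\I)\ket{\psi}$ is a $2$-Lipschitz function of $\ket{\psi}$ with mean $\tfrac12$ under $\mu_{2^m}$ (because $\E_\psi\psi=\I/2^m$), Lévy's lemma gives $\Pr[|p-\tfrac12|\geq\delta]\leq 4\exp(-C_1 2^{m-2}\delta^2)$. Splitting the expectation at $\delta=\Theta(\sqrt{m/2^m})$, chosen so that the tail is at most $O(2^{-m})$, yields $\E\,|p-\tfrac12|=O(\sqrt{m}/2^{m/2})$; chaining the inequalities produces the desired bound $O(r\sqrt{m}/2^{m/2})$, with the prefactor $80$ obtained by plugging in the explicit value of $C_1$ from Lévy's lemma.

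The main obstacle is setting up the coupling correctly, i.e.\ verifying that the Gaussian-sampling parameterization genuinely yields \emph{independent} Haar-random marginals $\ket{\psi_1},\ket{\psi_2}$ (including the relative phases between them, which matter because $\psi'$ depends on these states as vectors and not merely as density matrices), so that the induced marginal of $\ket{\psi'}$ truly matches the distribution appearing on the RHS of the lemma. Once the coupling is in place, the remainder is a routine chain of telescoping, a one-line distance computation, and a single application of Lévy's lemma.
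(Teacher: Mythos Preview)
Your proposal is correct and follows essentially the same approach as the paper's proof. The only cosmetic differences are that you justify the coupling via the Gaussian representation of the Haar measure (the paper uses unitary invariance and a controlled-unitary argument to the same end), and you route the single-copy bound through the Euclidean distance $\norm{\ket{\psi}-\ket{\psi'}}_2$ (the paper expands $\norm{\psi-F(\psi)}$ directly into four terms); both arrive at the identical inequality $\norm{\psi-\psi'}\leq 4\,|p-\tfrac12|$ and finish with the same application of L\'evy's lemma to the $2$-Lipschitz function $p=\norm{\braket{0_1|\psi}}^2$.
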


The proof of \cref{lem:haar-coeff-estimate} can be found in \ifproceedingelse{the full version~\cite{arxivfullversion}}{\cref{sec:proof-sandwich-lemma}}. We also need the following technical lemma, whose proof can also be found in \ifproceedingelse{the full version~\cite{arxivfullversion}}{\cref{sec:proof-sandwich-lemma}}.

\begin{lemma}\label{lem:qubit-amplify-ineq}
    For a Hermitian matrix $A$, if the inequality $\norm{\bra{a}_1 A \ket{a}_1} < \epsilon$ holds for all $\ket{a} \in \{\ket{0}, \ket{1}, \ket{+}, \ket{+i}\}$, then $\norm{A} < 10 \eps$.
\end{lemma}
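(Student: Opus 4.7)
The plan is to decompose $A$ into its four blocks with respect to the first qubit and then bound each block in trace norm using the four given expectations. Write
\[
A = \ket{0}\bra{0} \otimes A_{00} + \ket{0}\bra{1} \otimes A_{01} + \ket{1}\bra{0} \otimes A_{10} + \ket{1}\bra{1} \otimes A_{11},
\]
where, by Hermiticity of $A$, the diagonal blocks $A_{00}, A_{11}$ are Hermitian and $A_{10} = A_{01}^\dagger$. Since $\bra{0}_1 A \ket{0}_1 = A_{00}$ and $\bra{1}_1 A \ket{1}_1 = A_{11}$, the hypothesis immediately yields $\|A_{00}\|, \|A_{11}\| < \epsilon$.

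Next, compute
\[
\bra{+}_1 A \ket{+}_1 = \tfrac{1}{2}(A_{00} + A_{01} + A_{10} + A_{11}), \qquad \bra{+i}_1 A \ket{+i}_1 = \tfrac{1}{2}(A_{00} + i A_{01} - i A_{10} + A_{11}).
\]
By hypothesis, each of these has trace norm less than $\epsilon$, so $\|A_{00} + A_{01} + A_{10} + A_{11}\| < 2\epsilon$ and $\|A_{00} + i(A_{01} - A_{10}) + A_{11}\| < 2\epsilon$. Subtracting the diagonal contributions via the triangle inequality gives $\|A_{01} + A_{10}\| < 4\epsilon$ and $\|A_{01} - A_{10}\| < 4\epsilon$. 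Combining these via $A_{01} = \tfrac{1}{2}[(A_{01}+A_{10}) + (A_{01}-A_{10})]$ yields $\|A_{01}\| < 4\epsilon$, and likewise $\|A_{10}\| < 4\epsilon$.

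Finally, using the triangle inequality together with the fact that $\|\ket{i}\bra{j} \otimes B\|_1 = \|B\|_1$ for any $B$, we obtain
\[
\|A\| \leq \|A_{00}\| + \|A_{01}\| + \|A_{10}\| + \|A_{11}\| < \epsilon + 4\epsilon + 4\epsilon + \epsilon = 10\epsilon,
\]
as desired. There is no real obstacle here: the proof is just a matter of being careful with the Pauli basis $\{I,X,Y,Z\}$ expansion of $A$ on the first qubit (which is what the four expectations effectively reveal) and tracking constants. The only minor point to verify is that the trace norm is indeed block-subadditive in the sense used above, which follows from the triangle inequality and multiplicativity of the trace norm under tensor products.
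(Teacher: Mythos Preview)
Your proof is correct and follows essentially the same approach as the paper: decompose $A$ into $2\times 2$ blocks on the first qubit, use the four hypotheses to bound each block in trace norm, and sum. The only cosmetic difference is that the paper extracts $\|A_{01}\|$ via a single complex linear combination to get $(2+\sqrt{2})\epsilon$ (hence $\|A\|\le(6+2\sqrt{2})\epsilon$), whereas your two-step combination gives $4\epsilon$ and the final bound lands exactly at $10\epsilon$; both suffice for the stated lemma.
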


The proof of \cref{lem:qubit-amplify-ineq} can be found in \ifproceedingelse{the full version~\cite{arxivfullversion}}{\cref{sec:proof-sandwich-lemma}}. We are now ready to prove \cref{thm:ammplify-random}.
\begin{proof}[Proof of \cref{thm:ammplify-random}]
    According to \cref{lem:qubit-amplify-ineq}, it suffices to show that, for all $\ket{a} \in \{\ket{0}, \ket{1}, \ket{+}, \ket{+i}\}$,
    \begin{align*}
        &\norm{\E_{k} \E_{\psi} \bra{a}_1 (\I\ot U_k) \psi (\I\ot U_k^\dagger) \ket{a}_1 \ot \psi^{\ot r-1} - \E_{\psi} \bra{a}_1  \frac{\I}{2^{m}} \ket{a}_1 \ot \psi^{\ot r-1} } \leq 
        \\
        &\quad\frac{1}{2}\norm{\E_k \E_{\psi_1} U_k \psi_1 U_k^\dagger \ot \psi_1^{\ot r-1 } - \E_{\psi_1} \frac{\I}{2^{m-1}} \ot \psi_1^{\ot r-1}} + \frac{80r\sqrt{m}}{2^{m/2}} \,.
    \end{align*}
By the unitary invariance of the Haar measure, the LHS is identical for all $\ket{a} \in \{\ket{0}, \ket{1}, \ket{+}, \ket{+i}\}$. Thus, it suffices to show that 
\begin{align*}
        &\norm{\E_{k} \E_{\psi} \bra{0}_1 (\I\ot U_k) \psi (\I\ot U_k^\dagger) \ket{0}_1 \ot \psi^{\ot r-1} - \E_{\psi} \bra{0}_1  \frac{\I}{2^{m}} \ket{0}_1 \ot \psi^{\ot r-1} } \leq 
        \\&\quad\frac{1}{2}\norm{\E_k \E_{\psi_1} U_k \psi_1 U_k^\dagger \ot \psi_1^{\ot r-1 } - \E_{\psi_1} \frac{\I}{2^{m-1}} \ot \psi_1^{\ot r-1}} + \frac{80r\sqrt{m}}{2^{m/2}} \,.
    \end{align*}
To keep the notation simple in the next calculations, we write $\E_{\ket{\psi_1}, \ket{\psi_2}}$ as short for $\E_{\ket{\psi_1}, \ket{\psi_2} \gets \mu_{2^{m-1}}}$, and we denote $\ket{\psi'} = \frac{1}{\sqrt{2}}\ket{0}\ket{\psi_1} + \frac{1}{\sqrt{2}}\ket{1}\ket{\psi_2}$. For $U \in U(2^{m-1})$, define the controlled-$U$ gate $CU = \ket{0}\bra{0} \ot \I + \ket{1}\bra{1} \ot U$. For convenience, we denote $\ket{\psi_{CU}} = CU \ket{+} \ket{\psi_1}$, and we write $\E_{U}$ as short for $E_{U \gets SU(2^{m-1})}$ respectively. We have
\ifnum\widemargin=0
\begin{align*}
    &\norm{\E_{k} \E_{\psi} \bra{0}_1 (\I\ot U_k) \psi (\I\ot U_k^\dagger) \ket{0}_1 \ot \psi^{\ot r-1} - \E_{\psi} \bra{0}_1  \frac{\I}{2^{m}} \ket{0}_1 \ot \psi^{\ot r-1} } \leq  \\  &\Bigg \| \E_{k} \E_{\psi} \bra{0}_1 (\I\ot U_k) \psi (\I\ot U_k^\dagger) \ket{0}_1 \ot \psi^{\ot r-1} - \E_{k} \E_{\ket{\psi_1}, \ket{\psi_2}} \bra{0}_1 (\I\ot U_k) \psi' (\I\ot U_k^\dagger) \ket{0}_1 \ot \psi'^{\ot r-1} \Bigg \|  \\ &+ \Bigg\| \E_{k} \E_{\ket{\psi_1}, \ket{\psi_2} } \bra{0}_1 (\I\ot U_k) \psi' (\I\ot U_k^\dagger) \ket{0}_1 \ot \psi'^{\ot r-1} -\E_{\ket{\psi_1}, \ket{\psi_2}} \bra{0}_1  \frac{\I}{2^{m}} \ket{0}_1 \ot \psi_{CU}^{\ot r-1} \Bigg\| \leq  \\
    & \frac{80 r\sqrt{m}}{2^{m/2}} + \Bigg\| \E_{k} \E_{\ket{\psi_1}, \ket{\psi_2}} \bra{0}_1 (\I\ot U_k) \psi' (\I\ot U_k^\dagger) \ket{0}_1 \ot \psi'^{\ot r-1} -\E_{\ket{\psi_1}, \ket{\psi_2} } \bra{0}_1  \frac{\I}{2^{m}} \ket{0}_1 \ot \psi'^{\ot r-1} \Bigg\|
\end{align*}
\else

\begin{align*}
    &\norm{\E_{k} \E_{\psi} \bra{0}_1 (\I\ot U_k) \psi (\I\ot U_k^\dagger) \ket{0}_1 \ot \psi^{\ot r-1} - \E_{\psi} \bra{0}_1  \frac{\I}{2^{m}} \ket{0}_1 \ot \psi^{\ot r-1} } \leq  \\  
    &\Bigg \| \E_{k} \E_{\psi} \bra{0}_1 (\I\ot U_k) \psi (\I\ot U_k^\dagger) \ket{0}_1 \ot \psi^{\ot r-1} - \\&\qquad\qquad\qquad\E_{k} \E_{\ket{\psi_1}, \ket{\psi_2}}  \bra{0}_1 (\I\ot U_k) \psi' (\I\ot U_k^\dagger) \ket{0}_1 \ot \psi'^{\ot r-1} \Bigg \|  \\ 
    &+ \Bigg\| \E_{k} \E_{\ket{\psi_1}, \ket{\psi_2} } \bra{0}_1 (\I\ot U_k) \psi' (\I\ot U_k^\dagger) \ket{0}_1 \ot \psi'^{\ot r-1} \\&\qquad\qquad\qquad-\E_{\ket{\psi_1}, \ket{\psi_2}} \bra{0}_1  \frac{\I}{2^{m}} \ket{0}_1 \ot \psi_{CU}^{\ot r-1} \Bigg\| \leq  \\
    & \frac{80 r\sqrt{m}}{2^{m/2}} + \Bigg\| \E_{k} \E_{\ket{\psi_1}, \ket{\psi_2}} \bra{0}_1 (\I\ot U_k) \psi' (\I\ot U_k^\dagger) \ket{0}_1 \ot \psi'^{\ot r-1} \\&\qquad\qquad\qquad-\E_{\ket{\psi_1}, \ket{\psi_2} } \bra{0}_1  \frac{\I}{2^{m}} \ket{0}_1 \ot \psi'^{\ot r-1} \Bigg\|
\end{align*}
\fi
where the first inequality is by a triangle inequality, and the second uses \cref{lem:haar-coeff-estimate} combined with the fact that the trace norm is decreasing under taking projections. Thus, it suffices for us to show that
\begin{align}
    %\label{eq:266}
        &\Bigg\| \E_{k} \E_{\ket{\psi_1}, \ket{\psi_2}} \bra{0}_1 (\I\ot U_k) \psi' (\I\ot U_k^\dagger) \ket{0}_1 \ot \psi'^{\ot r-1} -\E_{\ket{\psi_1}, \ket{\psi_2} } \bra{0}_1  \frac{\I}{2^{m}} \ket{0}_1 \ot \psi'^{\ot r-1} \Bigg\| \nonumber \\ &\leq \frac{1}{2}\norm{\E_k \E_{\psi_1} U_k \psi_1 U_k^\dagger \ot \psi_1^{\ot r-1 } - \E_{\psi_1} \frac{\I}{2^{m-1}} \ot \psi_1^{\ot r-1}} \,, \label{eq:266}
\end{align}

Now, notice that the distribution of states $\ket{\psi'} =\frac{1}{\sqrt{2}}(\ket{0}\ket{\psi_1} + \ket{1}\ket{\psi_2})$, where $\ket{\psi_1}, \ket{\psi_2} \gets \mu_{2^{m-1}}$, is identical to the distribution of states $\ket{\psi'} = CU \ket{+} \ket{\psi_1} = \frac{1}{\sqrt{2}}(\ket{0}\ket{\psi_1} + \ket{1}U\ket{\psi_1}) $, where $\ket{\psi_1} \gets \mu_{2^{m-1}}$ and $U \gets SU(2^{m-1})$ (this equivalence implicitly uses the unitary invariance of the Haar measure). %For convenience, we denote $\ket{\psi_{CU}} = CU \ket{+} \ket{\psi_1}$, and we write $\E_{\psi_1}$ and $\E_{U}$ as short for $E_{\ket{\psi_1} \gets \mu_{2^{m-1}}}$ and $E_{U \gets SU(2^{m-1})}$ respectively. 
Thus, \cref{eq:266} is equivalent to 
\begin{align*}
    \label{eq:26}
        &\norm{\E_{k} \E_{\psi_1, U} \bra{0}_1 (\I\ot U_k) \psi_{CU} (\I\ot U_k^\dagger) \ket{0}_1 \ot \psi_{CU}^{\ot r-1} - \E_{\psi_1, U} \bra{0}_1  \frac{\I}{2^{m}} \ket{0}_1 \ot \psi_{CU}^{\ot r-1} } \\&\leq \frac{1}{2}\norm{\E_k \E_{\psi_1} U_k \psi_1 U_k^\dagger \ot \psi_1^{\ot r-1 } - \E_{\psi_1} \frac{\I}{2^{m-1}} \ot \psi_1^{\ot r-1}} \,,
\end{align*}
So, we are left with showing that the latter inequality is true, which is equivalent to
\ifnum\widemargin=0
    \begin{equation*}
        \norm{\E_k \E_{\psi_1, U} U_k\psi_1U_k^\dagger \otimes \psi_{CU}^{\ot r-1} -  \E_{\psi_1, U} \frac{\I}{2^{m-1}} \ot \psi_{CU}^{\ot r-1}} \leq \norm{\E_k \E_{\psi_1} U_k \psi_1 U_k^\dagger \ot \psi_1^{\ot r-1 } - \E_{\psi_1} \frac{\I}{2^{m-1}} \ot \psi_1^{\ot r-1}}
    \end{equation*}
\else
    \begin{align*}
        &\norm{\E_k \E_{\psi_1, U} U_k\psi_1U_k^\dagger \otimes \psi_{CU}^{\ot r-1} -  \E_{\psi_1, U} \frac{\I}{2^{m-1}} \ot \psi_{CU}^{\ot r-1}} \\&\quad\leq \norm{\E_k \E_{\psi_1} U_k \psi_1 U_k^\dagger \ot \psi_1^{\ot r-1 } - \E_{\psi_1} \frac{\I}{2^{m-1}} \ot \psi_1^{\ot r-1}}
    \end{align*}
\fi
    Let us denote $\ket{\tilde{\psi_1}} =\ket{+}\ket{\psi_1}$. Notice that
    \begin{equation*}
    \begin{split}
         &\norm{\E_k \E_{\psi_1, U} U_k\psi_1U_k^\dagger \otimes \psi_{CU}^{\ot r-1} - \E_{\psi_1, U} \frac{\I}{2^{m-1}} \ot \psi_{CU}^{\ot r-1}}\\& = 
          \norm{\E_U \left(\E_k\E_{\psi_1} U_k \psi_1 U_k^\dagger \otimes (CU \tilde{\psi_1} CU^\dagger)^{\ot r-1} - \E_{\psi_1} \frac{\I}{2^{m-1}} \ot (CU \tilde{\psi_1}CU^\dagger)^{\ot r-1}\right)} \\
         &\leq \E_U\norm{ \E_k\E_{\psi_1} U_k \psi_1 U_k^\dagger \otimes  (CU \tilde{\psi_1} CU^\dagger)^{\ot r-1} - \E_{\psi_1} \frac{\I}{2^{m-1}} \ot (CU \tilde{\psi_1}CU^\dagger)^{\ot r-1}} \\
         &= \E_U\norm{ \E_k\E_{\psi_1} U_k \psi_1 U_k^\dagger \otimes \tilde{\psi_1}^{\ot r-1} - \E_{\psi_1} \frac{\I}{2^{m-1}} \ot \tilde{\psi_1}^{\ot r-1}} \\
         &= \norm{\E_k \E_{\psi_1} U_k \psi_1 U_k^\dagger \ot \psi_1^{\ot r-1 } - \E_{\psi_1} \frac{\I}{2^{m-1}} \ot \psi_1^{\ot r-1}} \,.
    \end{split}
    \end{equation*}
    This concludes the proof of \cref{thm:ammplify-random}.
\end{proof}

We now have all the ingredients to show that the $\oprs$ construction from \cref{fig:full-1prs-protocol} is secure. %From \cref{thm:partial-one-time-pad}, for a random $m$ qubit state, act one time pad on $\frac{m}{2}$ of the qubits would give a state which could be indistinguishable for any adversary except for $2r^2/2^{m/2}$ probability.

\begin{corollary}
\label{cor:49}
Let $n,m \in \mathbb{N}$, where $\prsm \leq n <m$, and $n$ is even. Let $\{U_k\}_{k \in \{0,1\}^{n}}$ be the family of $m$-qubit unitaries from \cref{fig:full-1prs-protocol}, i.e.\ $U_k  = X^a Z^b \otimes \mathds{1}^{\otimes (m-n/2)}$, where $a,b \in \{0,1\}^{\frac{n}{2}}$ are the first and second halves of $k$. Then, for any $r < 2^{\frac{m}{2}}$,
\ifnum\widemargin=0
\begin{equation}
\label{eq:cor49}
\norm{\E_{k \gets \{0,1\}^n} \E_{\ket{\psi} \gets \mu_{2^m}} (\I\ot U_k) \psi (\I\ot U_k^\dagger) \ot \psi^{\ot r-1} - \E_{\ket{\psi} \gets \mu_{2^m}} \frac{\I}{2^m} \ot \psi^{\ot r-1} } \leq \frac{(2r^2+800rm\sqrt{m})5^{0.1m}}{2^{0.45m}}  \,. 
\end{equation}
\else
\begin{equation}
\label{eq:cor49}
\begin{split}
&\norm{\E_{k \gets \{0,1\}^n} \E_{\ket{\psi} \gets \mu_{2^m}} (\I\ot U_k) \psi (\I\ot U_k^\dagger) \ot \psi^{\ot r-1} - \E_{\ket{\psi} \gets \mu_{2^m}} \frac{\I}{2^m} \ot \psi^{\ot r-1} } \\&\quad\leq \frac{(2r^2+800rm\sqrt{m})5^{0.1m}}{2^{0.45m}}  \,. 
\end{split}
\end{equation}
\fi
\end{corollary}
\begin{proof}
Let $\ell = m-n$. Recursively apply \cref{thm:ammplify-random} $\ell$ times, using \cref{thm:partial-one-time-pad} to bound the RHS of \cref{eq:thm48} the first time that \cref{thm:ammplify-random} is applied.
\end{proof}

%If we just act one time pad on $\frac{m-t}{2}$ qubits, then according to \cref{thm:ammplify-random}, the state is indistinguishable from totally mixed state except for probability at most $O\parens{\frac{r^25^{t}}{2^{(m-t)/2}}}$. Specifically, pick $t=0.1m$; the two states' trace distance is negligible, so we have

\begin{corollary}\label{cor:cor410}
The construction from \cref{fig:full-1prs-protocol} is a $\oprs$ in the CHRS model (as in \cref{def:1prs_CHRS}).
\end{corollary}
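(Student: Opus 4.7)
The plan is straightforward: Corollary~\ref{cor:49} already does the heavy lifting, so the proof reduces to checking that the construction of Figure~\ref{fig:full-1prs-protocol} fits the formal template of Definition~\ref{def:1prs_CHRS} and that the trace-distance bound obtained in Corollary~\ref{cor:49} is negligible.

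First I would fix a concrete choice of the output length, namely $m(n) = n+1$ for all sufficiently large even $n$ (handling odd $n$ via the parity convention already built into the construction, which redefines $n$ as $n-1$). This choice satisfies both hypotheses needed downstream: the stretching requirement $m(n) > n$ of Definition~\ref{def:oprs} holds trivially, and for $n$ large enough we have $0.9\, m(n) = 0.9(n+1) \leq n$, so the side condition $\prsm \leq n < m$ of Corollary~\ref{cor:49} is satisfied. The generation algorithm $\gen(1^n, k)$ simply queries the CHRS oracle once to obtain $\ket{\psi_{m(n)}}$ and applies $U_k = X^a Z^b \otimes \I^{\otimes (m - n/2)}$ (with $k = (a,b)$), which is QPT since $U_k$ is a product of single-qubit Clifford gates controlled by $k$. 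Hence $\gen$ fits the syntactic template of Definition~\ref{def:1prs_CHRS}.

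Next I would verify the statistical pseudorandomness condition \eqref{eq:1prs-pseudorandomness}. The unbounded adversary in Definition~\ref{def:1prs_CHRS} may hold polynomially many copies of $\ket{\psi_{m'}}$ for any $m'$; however, for $m' \neq m(n)$ these states are sampled independently of $\ket{\psi_{m(n)}}$ and therefore appear as the same tensor factor on both sides of the trace distance. They cancel, and the security condition reduces exactly to the inequality in Corollary~\ref{cor:49}. It then remains to observe that for any polynomial $r = \poly(n)$ the RHS of \eqref{eq:cor49} is negligible: writing $5^{0.1m} = 2^{0.1 m \log_2 5}$ with $\log_2 5 < 2.33$, we get
\[
\frac{5^{0.1m}}{2^{0.45m}} \;\leq\; 2^{-0.217\, m},
\]
so the bound decays exponentially in $m$ (hence in $n$), while the polynomial prefactor $2r^2 + 800\, r m \sqrt{m}$ is of course swamped. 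This yields negligibility in $n$ and completes the verification.

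The main (minor) obstacle is the independent-copies reduction in the previous paragraph, which requires being careful that the definition of the CHRS model samples each $\ket{\psi_{m'}}$ independently, so that the joint density matrix factors across different values of $m'$; once this is noted, the other $\ket{\psi_{m'}}$ drop out by linearity and by the identity $\norm{A \otimes B} = \norm{A} \cdot \norm{B}$ applied to the difference $\rho_{\textsf{real}} - \rho_{\textsf{ideal}}$ on the $m(n)$-register. Aside from this bookkeeping, the corollary is an immediate consequence of Corollary~\ref{cor:49}.
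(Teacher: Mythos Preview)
Your proposal is correct and follows essentially the same approach as the paper's proof: invoke Corollary~\ref{cor:49}, verify the bound on the right-hand side is negligible via the numerical observation $5^{0.1m}/2^{0.45m} = (0.8598\ldots)^m$, and argue that the adversary's access to the independently sampled states $\ket{\psi_{m'}}$ for $m' \neq m(n)$ does not affect the trace distance. The only cosmetic difference is your choice $m(n) = n+1$ versus the paper's $n = 0.9m$ (i.e.\ $m \approx 1.11n$); both satisfy the constraint $0.9m \leq n < m$ of Corollary~\ref{cor:49} for large enough $n$, and the rest of the argument is identical.
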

\begin{proof}
Take $n = 0.9m$. Then, for any $r = \poly(m)$, and for all large enough $m$, the RHS of \cref{eq:cor49} is less than $0.86^m$ (since ${5^{0.1m}}/{2^{0.45m}} = (0.85987\dots)^m$). Note that, in \cref{cor:49}, the adversary only gets access to $r$ copies of a \emph{single} $m$-qubit Haar random state $\ket{\psi}$, whereas in the definition of a $\oprs$ in the CHRS model (\cref{def:1prs_CHRS}), the adversary has also access to the other states from $\mathcal{S}$. However, as pointed out earlier, since our construction only uses the $m$-qubit state (for output of length $m$), and all of the states in $\mathcal{S}$ are independently sampled, the security property of \cref{def:1prs_CHRS} is equivalent to that of \cref{eq:1prs-pseudorandomness}.
\end{proof}

Note that in our definition of $\oprs$ in the CHRS model (\cref{def:1prs_CHRS}), the security guarantee is ``on average over $\mathcal{S}$''. However, for the purpose of utilizing this result in the context of an oracle separation (as we will do in \ifproceedingelse{the full version~\cite{arxivfullversion}}{\cref{sec:oralce-sep}}), it is important that we can find a \emph{fixed} family of states $\mathcal{S}$ relative to which $\oprs$ exist. We show that this is the case: with probability $1$ over $\mathcal{S}$, the $\oprs$ security holds (against \emph{all} adversaries). See \ifproceedingelse{the full version~\cite{arxivfullversion} }{Section \ifnum\shortver=0 \ref{ssec:ampli-random}
\else\ref{sec:borel-cantelli}\fi}for more details.

\if\shortver=0
\if\shortver=0
\else
\section{Reduction from a distribution over oracles to a fixed oracle}
\label{sec:borel-cantelli}
\fi
\begin{corollary}
    Let $\calS = \{\ket{\psi_m}\}_{m \in \N}$ denote the CHRS family of states. Then, with probability $1$, $\calS$ satisfies the following property: for any adversary $\calA$ with access to polynomially many copies of states in $\calS$, there exists a negligible function $\negl$, such that, for all $m$, $$\left| \P[ \calA^{\calS}(1^m, \E_k U_k\psi_m U_k^\dagger) \to 1] - \P [ \calA^{\calS}(1^m, \frac{\I}{2^m}) ] \right| = \negl(m) \,,$$
    where the $U_k$ are as defined in \cref{fig:full-1prs-protocol}, and the notation $\calA^{\mathcal{S}}$ denotes that $\calA$ has access to polynomially many copies of states from $\mathcal{S}$.
\end{corollary}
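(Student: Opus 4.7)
The plan is to derandomize the $\oprs$ security guarantee from \cref{cor:cor410}, converting the ``average over $\mathcal{S}$'' statement into a ``with probability $1$ over $\mathcal{S}$'' one, via L\'evy-type concentration combined with a Borel--Cantelli argument.

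First, I would reduce the adversarial game to controlling a single scalar function of $\psi_m$. For any unbounded adversary that receives polynomially many copies of states in $\mathcal{S}$, the optimal distinguishing advantage equals one half of the trace norm of the difference of the two joint states. Copies of $\psi_{m'}$ for $m' \ne m$ are independent of everything and provide no signal, while copies of $\psi_m$ appear identically on both sides and factor out of the trace norm. Hence the worst-case advantage reduces to
\[
\Delta_m(\psi_m) \;:=\; \tfrac{1}{2}\big\|\E_k U_k \psi_m U_k^\dagger - \tfrac{\I}{2^m}\big\|_1,
\]
a function of $\psi_m$ alone. It therefore suffices, for each constant $c > 0$, to show that $\Delta_m(\psi_m) \le m^{-c}$ for all but finitely many $m$, almost surely over $\mathcal{S}$.

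Next, I would apply L\'evy's lemma (\cref{lem:levy-lemma}) to the map $\ket{\psi_m}\mapsto \Delta_m(\psi_m)$, which is Lipschitz with an $O(1)$ constant, since $\ket{\psi_m}$ appears bilinearly in the density matrix and the trace norm only contracts under the CPTP channel $\sigma\mapsto \E_k U_k\sigma U_k^\dagger$. To bound $\E_{\psi_m}[\Delta_m(\psi_m)]$, I would revisit the proof of \cref{thm:ammplify-random}/\cref{cor:49} and verify that its key intermediate identities -- in particular the reduction to a maximally-entangled-like state of \cref{lem:haar-coeff-estimate} and the Pauli-twirl identity of \cref{lem:quat-otp} -- can be arranged to give a pointwise (not just averaged) bound, yielding $\E_{\psi_m}[\Delta_m]\le \negl_0(m)$ for some negligible $\negl_0$. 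Combined with the Lipschitz property, L\'evy's inequality then produces a tail of the form $\Pr_{\psi_m}[\Delta_m(\psi_m) > 2\negl_0(m)] \le 4\exp(-2^{\Omega(m)})$.

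Since these tails are super-exponentially summable in $m$, Borel--Cantelli yields that, almost surely over $\mathcal{S}$, $\Delta_m(\psi_m)\le 2\negl_0(m)$ for all sufficiently large $m$. Taking a countable intersection of these probability-$1$ events over $c\in \mathbb{N}$ gives $\Delta_m = \negl(m)$ almost surely, and since $\Delta_m$ upper bounds the advantage of every adversary with access to polynomially many copies of states in $\mathcal{S}$, the corollary follows. The main obstacle I anticipate is the expectation bound in the second step: \cref{cor:49} controls $\|\E_{\psi_m}[\cdot]\|_1$ while we need $\E_{\psi_m}\|\cdot\|_1$, and by Jensen these can differ substantially. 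The delicate work is therefore the pointwise strengthening of the amplification arguments of \cref{ssec:ampli-random}; L\'evy concentration, Borel--Cantelli, and the countable intersection over $c$ are routine once the expectation is controlled.
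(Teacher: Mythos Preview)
Your reduction to $\Delta_m(\psi_m) = \tfrac12\norm{\E_k U_k\psi_m U_k^\dagger - \frac{\I}{2^m}}$ fails, because this quantity is deterministically close to $1$, not negligible. Indeed, $\E_k U_k\psi_m U_k^\dagger$ is a mixture of $2^n = 2^{0.9m}$ pure states and hence has rank at most $2^{0.9m}$; projecting onto its support already shows $\Delta_m(\psi_m) \ge 1 - 2^{-0.1m}$ for \emph{every} $\psi_m$. (Equivalently, the Pauli twirl on the first $0.45m$ qubits gives $\E_k U_k\psi_m U_k^\dagger = \frac{\I}{2^{0.45m}} \otimes \tr_A[\psi_m]$, where $A$ denotes those qubits, and $\tr_A[\psi_m]$ has rank at most $2^{0.45m}$ since $\psi_m$ is pure.) Thus $\E_{\psi_m}[\Delta_m] \ge 1 - 2^{-0.1m}$, and no ``pointwise strengthening'' of \cref{thm:ammplify-random} or \cref{cor:49} can yield a negligible bound --- the obstacle you anticipated is in fact insurmountable along this route.

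The underlying issue is an order-of-quantifiers slip. For each fixed $\psi_m$, the distinguisher achieving $\Delta_m(\psi_m)$ (e.g., the projection onto the support of $\tr_A[\psi_m]$) depends on $\psi_m$; by passing to the supremum over adversaries \emph{inside} the per-$\psi_m$ bound, you have discarded the crucial constraint that $\calA$ is a single machine chosen in advance, accessing $\calS$ only through polynomially many copies. The paper exploits precisely this: there are only countably many such adversaries (described by Turing machines). For a fixed $\calA$, its advantage \emph{averaged} over $\calS$ is at most $0.86^m$ directly by \cref{cor:cor410} --- no swapping of $\E$ and $\norm{\cdot}$ is needed. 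Markov's inequality then gives $\Pr_\calS[\adv(\calA^\calS) > 0.95^m] < 0.95^m$; Borel--Cantelli handles each fixed $\calA$; and a countable union over all Turing machines yields the probability-$1$ statement. No Lipschitz estimate or L\'evy concentration is required.
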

\begin{proof}
    As pointed out earlier, it suffices to consider the case where $\mathcal{A}^{\calS}(1^m, \cdot)$ only gets polynomially many copies of the \emph{single} state $\ket{\psi_{m}}$ (rather than various states in $\calS$). Any adversary can be described by a Turing machine that on input $1^m$ outputs a distinguishing quantum circuit. Denote the length of the Turing machine by $|\calA|$. %Let $\calA^{\calS}(1^m)$ be a universal quantum Turing machine taking $x$ as the advice of the algorithm and $\ket{\psi_m}$ as the common random state. 
    Then, by \cref{cor:cor410}, we know that for any adversary $\calA^{\calS}$ with access to $\poly(m)$ copies of $\ket{\psi_m}$,
    \begin{align*}
        \E_{\calS} \left( \adv (\calA^{\calS}(1^m,\cdot)) \right) < 0.86^m 
    \end{align*}
    for large enough $m$, where $\adv(\calA^\calS(1^m,\cdot))$ denotes $\calA$'s distinguishing advantage (since ${5^{0.1m}}/{2^{0.45m}} = (0.85987\dots)^m$). Thus, by an averaging argument, for any adversary $\calA$,
    \begin{equation}
        \P_{\calS} \Big[\adv ( \calA^{\calS}(1^m,\cdot) ) > 0.95^m \Big] < 0.95^m \label{eq:267}
    \end{equation}
    for large enough $m$. For an adversary $\calA$, let $E_{\calA, m}$ be the event that $\adv (\calA^\calS(1^m,\cdot))$ is greater than $0.95^m$. Then, \cref{eq:267} can be equivalently restated as: $\Pr_\calS E_{\calA, m} < 0.95^m$ holds for all but finite $m$. Thus $\sum \Pr_\calS E_{\calA, m}$ is finite. Hence, by the Borel-Cantelli lemma, with probability $1$ over randomly sampling $\calS$, the event $E_{\calA,m}$ happens only for finitely many $m$, i.e.\ $\adv (\calA^\calS(1^m,\cdot)) < 0.95^m$ holds for all large enough $m$. 
    
    For an adversary $\calA$, denote by $F_\calA$ the event that $E_{\calA, m}$ holds for infinitely many $m$. Then we can restate what we found above as $\Pr_\calS F_\calA = 0$. Now, notice that there are only countably many different adversaries $\calA$ (because $\calA$ can be described by a string of finite length). So, by a union bound, we have
    $$
    \Pr_\calS [ \exists \calA \,\textnormal{ s.t. }F_\calA \text{ happens} ] \leq \sum_\calA \Pr_\calS [F_\calA] = 0 \,.
    $$
All in all, we have established that, with probability $1$ over sampling $\mathcal{S}$, it holds that, for all adversaries $\calA$, $\adv (\calA^\calS(1^m,\cdot)) < 0.95^m$ holds for all large enough $m$.
\end{proof}

Thus, we have the following. 
\begin{corollary}\label{thm:1prs}
    With probability $1$ over sampling a family of Haar random states $\calS = \{\ket{\psi_m}\}_{m \in \mathbb{N}}$ where $\ket{\psi_m} \gets \mu_{2^m}$, the construction from \cref{fig:full-1prs-protocol} is a statistically secure $\oprs$ (relative to $\calS$).
\end{corollary}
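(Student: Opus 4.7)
The plan is to repackage the preceding corollary into the syntactic form required by \cref{def:1prs_CHRS}. First, I would verify the two formal requirements of a $\oprs$. For \emph{efficient generation}: on input $1^n$ and seed $k = (a,b) \in \{0,1\}^{n/2} \times \{0,1\}^{n/2}$, the generator queries the CHRS oracle once to obtain $\ket{\psi_m}$ (taking $m$ to be the smallest integer with $n \leq 0.9 m$, so that the parameters of \cref{fig:full-1prs-protocol} are satisfied), and then applies $X^a Z^b$ to the first $n/2$ qubits and the identity on the rest. This is a QPT procedure in the CHRS model. For the \emph{stretch} requirement: since $n \leq 0.9 m$, the output length $m$ strictly exceeds $n$.

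Next I would invoke the preceding corollary to supply the \emph{statistical pseudorandomness} property. That corollary establishes that, with probability $1$ over the draw of $\calS$, every (even unbounded) adversary $\calA$ satisfies $\adv(\calA^{\calS}(1^m, \cdot)) < 0.95^m$ for all sufficiently large $m$. Since $0.95^m$ is negligible in $m$ (equivalently in $n$), this matches the negligible-advantage condition in \cref{def:1prs_CHRS}. The translation between the two formulations is routine: the optimal distinguishing advantage on two states equals (half) their trace distance, which is exactly the quantity bounded in \cref{cor:49}.

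The one subtlety worth flagging is that in \cref{def:1prs_CHRS} the adversary may hold polynomially many copies of other CHRS states $\ket{\psi_{m'}}$ for $m' \neq m$, whereas the preceding corollary only considers adversaries with access to copies of $\ket{\psi_m}$. This is harmless: the states $\{\ket{\psi_{m'}}\}_{m' \neq m}$ are sampled independently of $\ket{\psi_m}$, and are therefore independent of both the challenge state $U_k \ket{\psi_m}$ and of the maximally mixed alternative, so they cannot help in distinguishing. Formally, one averages over the independent draws of $\ket{\psi_{m'}}$ to reduce to an adversary that ignores them and then applies the preceding corollary.

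I do not expect any real obstacle: the substantive work has already been done in \cref{thm:ammplify-random} (the stretching lemma), \cref{cor:49} (iterating from the half-QOTP base case of \cref{thm:partial-one-time-pad}), and \cref{cor:cor410} together with the preceding corollary (passing from an expectation over $\calS$ to a probability-one statement via Markov and Borel--Cantelli, and extending it uniformly over adversaries by a countable union bound). The final statement is essentially just a change of language.
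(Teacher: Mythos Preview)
Your proposal is correct and matches the paper's approach: the paper gives no separate proof of this corollary at all, simply writing ``Thus, we have the following'' after the Borel--Cantelli corollary, since all the work (efficient generation, stretch, and the probability-$1$ statistical security) has already been established. Your write-up is in fact more detailed than the paper's, explicitly checking the $\oprs$ syntactic requirements and spelling out why the independent states $\ket{\psi_{m'}}$ for $m' \neq m$ are irrelevant, which the paper only remarks on in passing (in the proof of \cref{cor:cor410}).
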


\fi
 
\ifnum\shortver=0
    \section{Oracle separation of \texorpdfstring{$\prs$}{PRS} and \texorpdfstring{$\oprs$}{1PRS}}\label{sec:oralce-sep}
In this section, we show that there is an oracle relative to which $\oprs$ exist, but $\prs$ do not. This implies that there does not exist a (certain variant of a) \emph{fully black-box construction} of a $\prs$ from a $\oprs$ (the precise variant is stated in Corollary \ref{cor:no-black-box-construction}, and a detailed explanation of the terminology is provided in \cref{sec:black-box-reductions}). We start by describing the separating oracle. 

\paragraph{Separating oracle} 
The separating oracle, which we denote as $\mathcal{O}$, consists of two oracles $\mathcal{O}_1$ and $\mathcal{O}_2$. The first oracle $\calO_1$ is identical to the oracle of the \CHRS\ model. This is best thought of as a distribution over oracles (although we show that it is possible to fix one particular instance from the distribution). To remind the reader, $\mathcal{O}_1$ is obtained by sampling a sequence of Haar random states $\{\ket{\psi_m}\}_{m=1}^\infty$, where $\ket{\psi_m}$ is on $m$ qubits. Then, given a unary input $1^m$, $\mathcal{O}_1$ outputs the state $\ket{\psi_m}$. We emphasize that $\mathcal{O}_1$ only takes inputs of the form $1^m$ (and not superpositions of these). Thus, formally, each call to the oracle can be thought of as applying an isometry (see \cref{sec:1prs_CHRS}). Informally, 
the second oracle $\calO_2$ is a quantum oracle that provides the ability to perform any quantum operations that a $\QPSPACE$ machine can apply: it receives as input a state $\ket{\alpha}$ on $s$ qubits, a concise description of a polynomial space quantum circuit $C$ acting on these $s$ qubits, and it returns the result of $C$ acting on $\ket{\alpha}$. Formally, $\calO_2$ acts as follows: 
the input consists of a quantum state $\ket{\alpha}$ on some number $s$ of qubits, a classical Turing Machine $M$, and 
a number $t$.
The oracle runs the classical Turing machine $M$ for $t$ steps. The output of the Turing machine should represent a quantum circuit $C$ that acts on exactly $s$ qubits. Note that since the Turing machine runs only for $t$ steps, clearly, the quantum circuit has at most $t$ gates. If the quantum circuit that was printed does not use exactly $s$ qubits, or if the Turing Machine does not terminate after $t$ steps, the oracle aborts (and outputs the $\bot$ symbol). Otherwise, the oracle applies the circuit $C$ on $\ket{\alpha}$, and returns the output.
 
%$$\ket{\alpha} \otimes \ket{C}  \mapsto  C \ket{\alpha}\otimes  \ket{C} \,.$$
We show the following.
%The generation and the distinguisher can use these two oracles. 
\begin{theorem}
\label{thm:oracle-sep}
With respect to $\mathcal{O} = (\mathcal{O}_1, \mathcal{O}_2)$, $\oprs$ exist, but $\prs$ (with output length at least $\log n + 10$, where $n$ is the seed length) do not.
\end{theorem}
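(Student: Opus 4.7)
The plan is to prove the two halves of Theorem \ref{thm:oracle-sep} independently.

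For the existence of $\oprs$, I would instantiate the construction from Figure \ref{fig:full-1prs-protocol}: use $\mathcal{O}_1$ to obtain one copy of $\ket{\psi_m}$ and then apply the seed-dependent Pauli unitary $X^a Z^b \otimes \mathds{1}$. By Corollary \ref{thm:1prs}, with probability $1$ over the sampling of $\mathcal{S}$ the resulting family is statistically $\oprs$-secure against computationally unbounded adversaries holding polynomially many copies of states in $\mathcal{S}$. The $\QPSPACE$ oracle $\mathcal{O}_2$ is sampled independently of $\mathcal{S}$ and merely provides unbounded computational power; since statistical security already quantifies over all such unbounded adversaries, any QPT adversary with access to $\mathcal{O} = (\mathcal{O}_1,\mathcal{O}_2)$ inherits the same negligible distinguishing bound.

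For the non-existence of $\prs$, I would first reduce any candidate to a canonical form. Because $\mathcal{O}_1$ accepts only the classical unary input $1^m$, every query to it can be deferred to the beginning of the generator, and because $\mathcal{O}_2$ can carry out any $\QPSPACE$ computation on its quantum input, the remainder of the generator can be absorbed into a single family $\{U_k\}$ of $\QPSPACE$-implementable unitaries. Thus, without loss of generality, any $\prs$ candidate has the form $\ket{\phi_k} = U_k\bigl(\ket{\psi}^{\otimes r} \otimes \ket{0^t}\bigr)$ with $r,t = \poly(n)$ (the case where the generator uses copies of $\ket{\psi_{m'}}$ for several distinct $m'$ is handled identically, since only polynomially many such $m'$ appear).

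The attack, as sketched in Section \ref{sec:separation-techoverview}, is the quantum OR lemma of \cite{harrow2017sequential} combined with parallel repetition. Given $q = \poly(n)$ copies of the challenge $\ket{\tilde\phi}$ and $qr$ copies of $\ket{\psi}$, define for each $k \in \{0,1\}^n$ the projection $\Lambda_k^{(q)}$ that regenerates $q$ copies of $\ket{\phi_k}$ via $U_k$ on the $\ket{\psi}$-registers and then performs $q$ parallel swap tests (projecting onto the all-accept outcome). In the pseudorandom case there exists $k$ with $\tr\bigl[\Lambda_k^{(q)} \rho\bigr] = 1$. In the Haar random case $\ket{\tilde\phi}$ is independent of $\ket{\psi}$, so for each fixed $k$ the overlap $|\braket{\phi_k|\tilde\phi}|^2$ has expectation $2^{-m}$ and concentrates sharply by L\'evy's lemma (Lemma \ref{lem:levy-lemma}); conditioning on the high-probability event that all $2^n$ such overlaps are $O(2^{-m})$, the $q$ swap tests all pass with probability at most $\bigl((1+O(2^{-m}))/2\bigr)^q$. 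Picking $q = 2n$, together with the hypothesis $m \ge \log n + 10$, makes this bound at most $2^{-(2-o(1))n}$ uniformly in $k$. Since each $\Lambda_k$ is $\QPSPACE$-computable (it invokes only $U_k$, $U_k^\dagger$, and swap tests) and the quantum OR lemma uses a single copy of the joint state plus $O(n)$ ancillary qubits, the whole procedure runs in QPT given access to $\mathcal{O}_2$; it outputs $0$ with probability $\ge 1/7$ in the pseudorandom case and at most $4 \cdot 2^n \cdot 2^{-(2-o(1))n} = \negl(n)$ in the Haar random case, thereby breaking $\prs$ security.

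Finally, to pass from ``with probability $1$ over $\mathcal{S}$, every fixed candidate is broken'' to ``with probability $1$ no candidate works'', I would invoke the Borel--Cantelli argument from the proof of Corollary \ref{thm:1prs} together with a countable union bound over all candidate generators (each described by a finite Turing machine). The main obstacle I anticipate is the uniform Haar-concentration step: one must control $|\braket{\phi_k|\tilde\phi}|^2$ simultaneously over all $2^n$ seeds, while the states $\ket{\phi_k}$ themselves depend on the shared state $\ket{\psi}$. Separating the two sources of randomness---first fixing $\ket{\psi}$ (and hence all $\ket{\phi_k}$), then sampling $\ket{\tilde\phi}$ independently in the Haar random branch---should make the concentration and union bound routine.
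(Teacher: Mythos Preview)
Your plan is essentially the paper's own proof: the $\oprs$ half via Corollary~\ref{thm:1prs} plus the observation that $\mathcal{O}_2$ is absorbed by statistical security, and the $\prs$ half via the quantum OR lemma applied to parallel swap tests against regenerated states. Your identification of the ``main obstacle'' and its resolution (fix $\ket{\psi}$ first, then sample $\ket{\tilde\phi}$ independently) is exactly how the paper proceeds.

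There is, however, a quantitative gap in your concentration step. You assert that with high probability \emph{all} $2^n$ overlaps $|\braket{\phi_k|\tilde\phi}|^2$ are $O(2^{-m})$. L\'evy's lemma does not give this: for the Lipschitz function $f(\ket{\tilde\phi})=|\braket{\phi_k|\tilde\phi}|^2$, the deviation bound is $\Pr[|f-\E f|\ge\delta]\le 4\exp(-C_1 2^m\delta^2/4)$, and taking $\delta = O(2^{-m})$ makes the exponent $O(2^{-m})$, i.e.\ a failure probability bounded \emph{away} from zero --- so the union bound over $2^n$ seeds is useless. With $m\ge\log n+10$ the most you can extract from the union bound is that every overlap is at most some constant like $1/2$ (this is the content of Lemma~\ref{lem:fidelity-concentration}), whence each swap test accepts with probability at most $3/4$. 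But then $(3/4)^{2n}\cdot 4\cdot 2^n = 4\cdot(9/8)^n$ diverges, so $q=2n$ is not enough. The paper takes $q=10n$ for this reason; any $q>n/\log_2(4/3)\approx 2.41n$ would do.

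A minor point: for the $\prs$ half you do not need Borel--Cantelli at all. Once you argue via ``fix $\ket{\psi}$, then sample $\ket{\tilde\phi}$'', the distinguishing advantage of your attack is large for \emph{every} fixed family $\mathcal{S}$, not merely with probability~$1$ over $\mathcal{S}$. The countable-union-over-generators step is likewise unnecessary on this side; Borel--Cantelli is used only for the $\oprs$ existence half, and that is already packaged inside Corollary~\ref{thm:1prs}.
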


The existence of $\oprs$ relative to $\mathcal{O} = (\mathcal{O}_1, \mathcal{O}_2)$ follows immediately from Corollary \ref{thm:existence-1prs}: the construction of the $\oprs$ is the same as in \cref{fig:full-1prs-protocol}, and Corollary \ref{thm:existence-1prs} says that the construction is \emph{statistically} secure against adversaries with polynomially many queries to $\mathcal{O}_1$. Since the $\QPSPACE$ machine is independent of the sampled Haar random state, it can be simulated by a computationally unbounded adversary. Note that, as argued in Corollary \ref{thm:existence-1prs}, the construction is a secure $\oprs$ \emph{with probability $1$} over sampling $\mathcal{O}_1$, i.e.\ over sampling the family of Haar random states.  %distinguish $\oprs$ in \cref{thm:1prs} from truly random state.

Thus the crux of this section is dedicated to showing that $\prs$ do not exist relative to the oracle. We show this by describing a concrete attack on any $\prs$ scheme, relative to $\mathcal{O}$. The attack breaks any PRS, \emph{with probability $1$} over sampling $\calO_1$.

%The rest of this section is organized as follows.
In \cref{sec:or-lemma}, we review the ``quantum OR lemma'', which is a key ingredient in our attack. In \cref{sec:oracle-sep}, we describe our attack, and in \cref{sec:black-box-reductions}, we provide a detailed discussion of the relation between black-box constructions and oracle separations in the quantum setting.

\subsection{Quantum OR lemma}\label{sec:or-lemma}
Informally, the ``quantum OR lemma'' says that there exists a quantum algorithm that takes as input a family of projectors, as well as a \emph{single copy} of a quantum state $\rho$, and decides whether either:
\begin{itemize}
\item $\rho$ has a significant overlap with one of the projectors, or
\item $\rho$ has small overlap with all of the projectors.
\end{itemize}
The space complexity of this quantum algorithm is especially important for us.

\begin{lemma}[{Quantum OR lemma, adapted from~\cite[Corollary 3.1]{harrow2017sequential}}]\label{lem:or-lemma}
Let $\Lambda_1, \dots, \Lambda_N$ be projectors, and fix real positive numbers $\eps \leq \frac{1}{2}$, and $\delta$. Let $\rho$ be a state such that either there exists $i \in [N]$ such that $\tr[\Lambda_i \rho] \geq 1-\eps$ (case 1) or, for all $i \in [N]$, $\tr[\Lambda_i \rho] \leq \delta$ (case 2).

Then, there is a quantum circuit, $C_{OR}$, which we refer to as the ``OR tester'', such that measuring the first qubit in case 1 yields:
\
\begin{equation*}
\Pr\left(C_{OR}(\rho) \to 1\right) \geq \frac{(1-\eps)^2}{7}
\end{equation*}
and in case 2:
\begin{equation*}
\Pr\left(C_{OR}(\rho) \to 1\right) \leq 4 N\delta.
\end{equation*}
%The circuit uses controlled-$\Lambda_i$ gates $Ctrl\textnormal{-}\Lambda_i=\ket{1}\bra{1} \otimes \Lambda_i - \ket{0}\bra{0} \otimes (\I - \Lambda_i)$, and an additional $O(\log N)$ qubits.
\end{lemma}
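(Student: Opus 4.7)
The plan is to construct the circuit $C_{\mathrm{OR}}$ by coherently combining the controlled gates $Ctrl\textnormal{-}\Lambda_i$ over all $i\in[N]$, rather than performing them sequentially and classically. I would introduce an index register $\mathsf{I}$ of $\log N$ qubits, a single test qubit $\mathsf{T}$, and let $\mathsf{R}$ denote the register holding $\rho$. First, prepare the index register in the uniform superposition $\ket{\eta}_{\mathsf{I}} = \frac{1}{\sqrt N}\sum_{i=1}^{N}\ket{i}$ and the test qubit in $\ket{+}_{\mathsf{T}}$. Next, apply the doubly-controlled operation $\sum_i \ket{i}\!\bra{i}_{\mathsf{I}}\otimes (Ctrl\textnormal{-}\Lambda_i)_{\mathsf{T,R}}$, which uses each of the provided gates once in a coherent superposition over $i$. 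Finally, uncompute the preparation of $\ket{\eta}$ (e.g.\ by Hadamards on $\mathsf{I}$), and output the first qubit (say, of $\mathsf{T}$ after a Hadamard) as the answer. The total ancilla count is $\log N + O(1)$, as required.

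The analysis then splits into the two cases. For completeness (case 1), suppose $\tr[\Lambda_{i^\star}\rho]\geq 1-\eps$ for some $i^\star$. Expanding the final state as a sum over $i$, the accept-branch amplitude decomposes as a $1/\sqrt N$-weighted combination of vectors, one of which ($i=i^\star$) has norm close to $\sqrt{1-\eps}$, while the remaining $N-1$ contributions lie in directions whose total $\ell_2$-mass can be controlled using the structure of $Ctrl\textnormal{-}\Lambda_i=\ket{1}\!\bra{1}\otimes\Lambda_i-\ket{0}\!\bra{0}\otimes(\I-\Lambda_i)$ (in particular, its eigenvalues lie in $\{-1,0,+1\}$, so each term has amplitude at most $1/\sqrt N$). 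A Cauchy--Schwarz bound on the orthogonal complement of the $i^\star$ component, combined with the triangle inequality to handle destructive interference, yields the lower bound $(1-\eps)^2/7$. For soundness (case 2), when $\tr[\Lambda_i\rho]\leq \delta$ for all $i$, every branch contributes accept-amplitude of magnitude at most $\sqrt{\delta}/\sqrt N$, so the total accept-amplitude is bounded by $\sqrt{N\delta}$ in $\ell_2$ norm; squaring and including the constant factors from the Hadamard gives the upper bound $4N\delta$.

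The main obstacle I expect is obtaining the constant $1/7$ in the completeness bound, rather than the weaker bound of order $(1-\eps)^2/N$ that a naive averaging would suggest. The $1/N$ loss would be fatal: it would require $\sqrt N$ rounds of amplitude amplification, which cannot be afforded with a single copy of $\rho$. The way around this, which is the core of the Harrow--Lin--Montanaro argument, is that $Ctrl\textnormal{-}\Lambda_i$ is not merely a controlled projector but acts with opposite signs on the $\Lambda_i$ and $\I-\Lambda_i$ subspaces. This sign asymmetry forces \emph{constructive} interference of the ``informative'' contributions across all $i$ when $\rho$ has large overlap with some $\Lambda_{i^\star}$, while the ``uninformative'' contributions from other indices $i\neq i^\star$ can interfere at worst in a controlled way. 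Carrying out this interference bookkeeping explicitly, and verifying that the residual cross-terms do not reduce the accept probability below $(1-\eps)^2/7$, is the step that requires the most care, but the rest of the argument is essentially a direct computation.
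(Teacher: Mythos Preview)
The paper does not prove this lemma; it is quoted from Harrow--Lin--Montanaro \cite{harrow2017sequential} and only a remark about its $\QPSPACE$-implementability is added. So there is no ``paper's proof'' to compare against, but your proposed construction has a genuine and fatal gap.

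Your circuit makes, in effect, a \emph{single} query to the family $\{\Lambda_i\}$: you prepare a uniform superposition over the index register and apply each $Ctrl\textnormal{-}\Lambda_i$ once, controlled on $\ket{i}_{\mathsf I}$. From the perspective of the data register $\mathsf R$, each branch $\ket{i}_{\mathsf I}$ sees exactly one projector. Now take the instance where $\rho=\ket{\psi}\!\bra{\psi}$, $\Lambda_{i^\star}=\ket{\psi}\!\bra{\psi}$, and $\Lambda_i=0$ for all $i\neq i^\star$. Distinguishing this (case~1 with $\eps=0$) from the instance where every $\Lambda_i=0$ (case~2 with $\delta=0$) is precisely unstructured search with one marked item versus none, and a single coherent query gives advantage at most $O(1/\sqrt N)$ by the BBBV lower bound. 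Hence no circuit of your form can output $1$ with probability $\geq 1/7$ in case~1 while outputting $1$ with probability $0$ in case~2. The ``constructive interference'' you invoke cannot materialise here: only the $i^\star$ branch carries any signal, its amplitude is $1/\sqrt N$, and the other $N-1$ branches contribute identically in both cases, so there is nothing distinguishing for them to interfere with.

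The actual Harrow--Lin--Montanaro algorithm is \emph{sequential}: it applies the binary measurements $\{\Lambda_i,\I-\Lambda_i\}$ one after another to the (evolving) state, in a suitably randomised order, and outputs $1$ the first time one accepts. Completeness comes from bounding how much the preceding rejections have disturbed the state by the time the sequence reaches $i^\star$; soundness is a quantum union bound over the $N$ measurements. The ``$O(\log N)$ additional qubits'' in the statement refers only to the space for a loop counter and a flag bit; the number of $Ctrl\textnormal{-}\Lambda_i$ applications is $\Theta(N)$, not $O(1)$. You have conflated the ancilla count with the query count.
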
 
\begin{remark}
\label{remark:53}
Even when the number of measurements, $N$, is exponential in the number of qubits of $\rho$, denoted $n$, the circuit $C_{OR}$ which is constructed in Ref.~\cite{harrow2017sequential} can be implemented by a unitary $\QPSPACE$ machine\footnote{i.e., the family of unitary circuits $C_{OR}$, indexed by $n$, is a uniform family of quantum unitary circuits using $\poly(n)$ qubits of space.} as long as each $\Lambda_i$ can be implemented by a $\QPSPACE$ machine, and the set of measurements has a concise polynomial description. \ifnum\shortver=0 We justify this claim in \cref{sec:OR_leamma_unitary}.\fi
\label{rem:or_lemma_qpspace}
\end{remark}

\subsection{An attack on any \texorpdfstring{$\prs$}{PRS} relative to the separating oracle}\label{sec:oracle-sep}
We describe an attack, based on the quantum OR lemma, that breaks \emph{any} $\prs$ relative to the oracle $\calO$ described at the beginning of the section. Before describing our attack, we first introduce some technical tools. First, we need the following concentration bound.

\begin{lemma}\label{lem:fidelity-concentration}
    Let $N \in \mathbb{N}$, and $\ket{\psi_0}$ a $N$-dimensional state. Then, 
    \begin{equation*}
        \Pr_{\ket{\psi} \gets \mu_N} \bracks{\abs{\braket{\psi|\psi_0}}^2 \geq \frac{1}{2}} < 8\exp\left(\frac{-N}{600}\right)
    \end{equation*}
\end{lemma}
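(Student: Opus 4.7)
The plan is to apply Lévy's lemma (Lemma 2.2) to the function $g: \ket{\psi} \mapsto |\braket{\psi_0|\psi}|$. The event $|\braket{\psi|\psi_0}|^2 \geq 1/2$ is equivalent to $g(\ket{\psi}) \geq 1/\sqrt{2}$, so a concentration inequality for $g$ around its mean will do the job, provided the mean is sufficiently below $1/\sqrt{2}$.

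First I would verify the Lipschitz constant. By Cauchy--Schwarz,
$$|g(\ket{\psi_1}) - g(\ket{\psi_2})| \leq |\braket{\psi_0|\psi_1 - \psi_2}| \leq \|\ket{\psi_1} - \ket{\psi_2}\|,$$
so $g$ is $1$-Lipschitz. Next, I would bound its expectation. By symmetry (summing $|\braket{\psi_i|\psi}|^2$ over any orthonormal basis $\{\ket{\psi_i}\}$ containing $\ket{\psi_0}$ and using unitary invariance), one has $\E_{\ket{\psi} \gets \mu_N}[|\braket{\psi_0|\psi}|^2] = 1/N$, and then Jensen's inequality yields $\E[g] \leq 1/\sqrt{N}$.

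With these two ingredients, I would set $\delta := 1/\sqrt{2} - 1/\sqrt{N}$, so that $g(\ket{\psi}) \geq 1/\sqrt{2}$ forces $g(\ket{\psi}) - \E[g] \geq \delta$. Applying Lévy's lemma with $\eta = 1$ and $C_1 = 2/(9\pi^3)$ gives
$$\Pr_{\ket{\psi} \gets \mu_N}\bigl[|\braket{\psi|\psi_0}|^2 \geq 1/2\bigr] \leq 4\exp\bigl(-C_1\, N\, \delta^2\bigr).$$

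The remaining work is just bookkeeping of constants, which I view as the only (very mild) obstacle. For $N$ at least a moderate threshold (for instance $N \geq 24$, where $1/\sqrt{N} \leq 1/\sqrt{2} - 1/2$), one has $\delta \geq 1/2$, hence $\delta^2 \geq 1/4$, and the bound simplifies to $4\exp(-N/(18\pi^3))$. Since $18\pi^3 < 600$ and $4 < 8$, this is at most $8\exp(-N/600)$. For $N$ below the threshold, the right-hand side $8\exp(-N/600)$ already exceeds $1$, so the claim is trivially satisfied. Combining the two cases closes the proof.
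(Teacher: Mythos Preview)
Your proof is correct. Both your argument and the paper's hinge on L\'evy's lemma, but they apply it to different functions. The paper splits $\braket{\psi_0|\psi}$ into its real and imaginary parts $f_1,f_2$, notes that each is $1$-Lipschitz with mean \emph{exactly} zero (via the symmetry $f_i(-\ket{\psi})=-f_i(\ket{\psi})$), applies L\'evy at $\delta=1/2$ to each, and finishes with a union bound (which is where the factor $8$ comes from). You instead work with the single function $g=|\braket{\psi_0|\psi}|$, bound its mean by $1/\sqrt{N}$ via Jensen, and then apply L\'evy once. Your route avoids the union bound but pays for it with the mean estimate and the separate handling of small $N$; the paper's route gets the mean for free but needs two applications of L\'evy. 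Both land on the same constant $18\pi^3<600$ in the exponent.
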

\begin{proof}
Let $\calS(N)$ be the unit $N$-dimensional sphere, i.e.\ the set of all $N$-dimensional pure states. Define functions $f_1, f_2: \calS(N) \to \R$ such that $f_1(\ket{\psi}) = \text{Re} \braket{\psi_0|\psi}$, and $f_2(\ket{\psi}) = \text{Im} \braket{\psi_0|\psi}$.

$f_1$ and $f_2$ are 1-Lipschitz functions. In fact, for any $N$-dimensional states $\ket{\psi_1}$ and $\ket{\psi_2}$
\ifnum\widemargin=0
\begin{equation*}
\abs{f_1(\ket{\psi_1}) - f_1(\ket{\psi_2})} = \abs{\text{Re}\left( \bra{0} (\ket{\psi_1} - \ket{\psi_2})\right)} \leq \abs{\bra{0} (\ket{\psi_1} - \ket{\psi_2})} \leq \norm{\ket{\psi_1} - \ket{\psi_2}} \,.
\end{equation*}
\else
\begin{align*}
\abs{f_1(\ket{\psi_1}) - f_1(\ket{\psi_2})} &= \abs{\text{Re}\left( \bra{0} (\ket{\psi_1} - \ket{\psi_2})\right)} \\ &\leq \abs{\bra{0} (\ket{\psi_1} - \ket{\psi_2})} \leq \norm{\ket{\psi_1} - \ket{\psi_2}} \,.
\end{align*}
\fi
Similarly for $f_2$. Now, notice that, for any $\ket{\psi}$, we have $f_1(\ket{\psi}) = - f_1(-\ket{\psi})$, and $f_2(\ket{\psi}) = - f_2(-\ket{\psi})$. This implies that $\E_{\ket{\psi}} f_1(\ket{\psi}) = \E_{\ket{\psi}} f_2(\ket{\psi}) = 0$. Hence, we can invoke Levy's lemma (\cref{lem:levy-lemma}) to deduce that 
\begin{equation*}
\Pr_{\ket{\psi}} \bracks{\abs{f_1(\ket{\psi})} \geq \frac{1}{2} } \leq 4\exp\parens{-\frac{N}{18\pi^3}} < 4\exp\parens{-\frac{N}{600}}.
\end{equation*}
A similar concentration bound holds for $f_2$. Note that $\abs{\braket{\psi|\psi_0}}^2 = f_1(\ket{\psi})^2 + f_2(\ket{\psi})^2$, and hence, by a union bound,
\begin{align*}
\Pr_{\ket{\psi}} \bracks{ \abs{\braket{\psi|\psi_0}}^2 \geq \frac{1}{2} } &= \Pr_{\ket{\psi}} \bracks{f_1^2+f_2^2 \geq \frac{1}{2}}\\
&\leq \Pr_{\ket{\psi}} \bracks{\abs{f_1(\ket{\psi})} \geq 1/2} + \Pr_{\ket{\psi}} \bracks{\abs{f_2(\ket{\psi})} \geq 1/2} \\
&< 8 \exp\parens{\frac{-N}{600}}. \qedhere
\end{align*}
\end{proof}

Now, we are ready to describe our attack, and complete the proof of \cref{thm:oracle-sep}.

\begin{proof}[Proof of \cref{thm:oracle-sep}]
Consider a $\prs$ relative to $\calO$. This consists of a generation procedure $\gen^{\calO}$ that takes as input a seed $k$, and outputs a state $\ket{\phi_k}$. We denote by $n$ the length of $k$, and by $m$ the number of qubits of $\ket{\phi_k}$. Recall that $\gen^{\calO} = (\mathcal O_1,\mathcal O_2)$, where $\mathcal O_1$ is an oracle that provides states from a family of Haar random states $\{\ket{\psi_m}\}$, and $\mathcal O_2$ is the $\QPSPACE$ machine oracle (see the start of \cref{sec:oralce-sep} for a precise definition).

Similarly as in \cref{def:1prs_CHRS}, without loss of generality, we can take the generation procedure to be of the following form: there is a polynomial $s = s(n)$ and a family $\{\gen^{\calO_2}_k\}_{k \in \{0,1\}^n}$ of efficiently generatable $poly(n)$-size unitary circuits that include calls to $\calO_2$ (but not $\calO_1$) such that
$$ \ket{\phi_k} = \gen^{\calO_2}_k (\ket{\psi_1}^{\otimes s} \otimes \ket{\psi_2}^{\otimes s}\ldots \otimes \ket{\psi_s}^{\otimes s}) \,.$$
In other words, the $\prs$ generation procedure first obtains polynomially many copies of states from the family $\{\ket{\psi_m}\}$, and then, on input $k$, applies an efficiently generatable unitary that makes calls to $\calO_2$ as a black-box. Note that $\gen_k$ may discard some of the qubits, and those would be traced out and not be considered as part of the output state $\ket{\phi_k}$, and therefore the entire transformation is not necessarily unitary.

We denote by $U_k$ the unitary implemented by $\gen_k^{\mathcal O_2}$ before tracing out some of the registers.\footnote{Note that the pseudorandom state must be a pure state; therefore, we can assume without loss of generality that the $\mathcal O_2$ QPSPACE machine does nor perform any measurements.} Recall that the number of qubits in $\ket{\phi_k}$ is denoted by $m$, and we name the output register as $\mathsf{A}$, and the register containing the qubits which are traced out by $\gen_k^{\mathcal O_2}$ is denoted by $\mathsf B$. We let $\mathsf{C}$ be another $m$-qubits register. 
Consider the family of projectors
\begin{equation}
\Pi_k = \left(\Big((U_k^\dagger)_{\mathsf{AB}} \otimes \I_{\mathsf{C}} \Big) (\Pi^{sym}_{\mathsf{AC}} \otimes \I_{\mathsf{B}} )\Big((U_k)_{\mathsf{AB}} \otimes \I_{\mathsf{C}}\Big)\right)^{\otimes 10n},
\label{eq:projectors_attack}
\end{equation}
where $\Pi^{sym}_{\mathsf{AC}}$ is the projection onto the symmetric subspace across the two registers $\mathsf{A}$ and $\mathsf{C}$.

The attack is the following: the adversary queries $\mathcal O_1$ to generate $(\ket{\psi_1}^{\otimes s} \otimes \ket{\psi_2}^{\otimes s}\ldots \otimes \ket{\psi_s}^{\otimes s})^{\otimes 10 n}$ and stores each copy in the $\mathsf{AB}$ register, and receives $10n$ copies of $\ket{\phi}$, where $\ket{\phi}$ is either a pseudorandom state or a Haar random state, which is stored in the $\mathsf C$ register. We denote this combined state as $\rho$.  It then uses the $\calO_2$ oracle (the $\QPSPACE$ machine) to run the ``OR tester'' from the quantum OR lemma (\cref{lem:or-lemma}), where, using the notation from \cref{lem:or-lemma}, with $\rho$ as defined above, and $\Lambda_k = \Pi_k$ as defined in \cref{eq:projectors_attack}. Recall that the ``OR tester'' can indeed be implemented by a $\QPSPACE$ machine, as discussed in \cref{rem:or_lemma_qpspace}. 

We now argue that the ``OR tester'' successfully distinguishes between pseudorandom and random $\ket{\phi}$.

\begin{itemize}
\item Suppose $\ket{\phi} = \ket{\phi_k}$ for some $k$. It is clear that the state $$\Big((\ket{\psi_1}^{\otimes s} \otimes \ket{\psi_2}^{\otimes s}\otimes \ldots \otimes \ket{\psi_s}^{\otimes s})_{\mathsf{AB} } \otimes \ket{\phi_k}_{\mathsf{C}}\Big)^{\otimes 10n}$$ lies in the range of $\Pi_k = \left(\Big((U_k^\dagger)_{\mathsf{AB}} \otimes \I_{\mathsf{C}} \Big) (\Pi^{sym}_{\mathsf{AC}} \otimes \I_{\mathsf{B}} )\Big((U_k)_{\mathsf{AB}} \otimes \I_{\mathsf{C}}\Big)\right)^{\otimes 10n}$. Thus, we are in ``case 1'' of \cref{lem:or-lemma} with $\eps=0$. Hence, the probability that the ``OR tester'' outputs $1$ is at least $1/7$. 
\item Suppose $\ket{\phi}$ is Haar random. Then, by Lemma \ref{lem:fidelity-concentration}, we have that, with probability at least \\ $1-8\exp(-\frac{2^{m}}{600})$, $$\abs{\braket{\phi|\phi_k}} \leq \frac{1}{\sqrt{2}}\,.$$
Notice that the probability that $\ket{\phi}\otimes \ket{\phi_k}$ passes the ``swap test'' (i.e.\ it is found to lie in the symmetric subspace across the two registers when the measurement $\{\Pi_{sym}, I - \Pi_{sym}\}$ is performed) is exactly $\frac{1}{2} + \frac{1}{2} \abs{\braket{\phi|\phi_k}}^2$ (cf. \cite{buhrman2001quantum}). Since $\Pi_k$ corresponds to a projection onto $10n$ such swap tests \emph{all} accepting, we have that, with probability at least $1-8\exp(-\frac{2^{m}}{600})$ over the sampling of $\ket{\phi}$, 
$$\tr[\Pi_k \rho] \leq \parens{\frac34}^{10n}\,.$$

Now, by a union bound over $k \in \{0,1\}^n$, we have that, except with probability at most $8\cdot 2^n \cdot \exp(-\frac{2^m}{600})$ over the sampling of $\ket{\phi}$, the inequality $\tr[\Pi_k \phi'] \leq \parens{\frac{3}{4}}^{10n}$ holds for all $k$, and we are in ``case 2'' of \cref{lem:or-lemma} with $\delta = \parens{\frac{3}{4}}^{10n}$. Hence, in this case, the ``OR tester'' outputs $1$ with probability at most $4 \cdot 2^n \cdot \left(\frac34\right)^{10n}$.
All in all, by a final union bound, the ``OR tester'' outputs $1$ with probability at most $8\cdot 2^n \cdot \exp(-\frac{2^m}{600}) + 4 \cdot 2^n \cdot \left(\frac34\right)^{10n}$, which is exponentially small in $n$ when $m > \log n + \log 600$ (note that here the base of $\exp$ is $e$, and the base of $\log$ is $2$). Notice that our attack breaks the $\prs$ regardless of what family of the reference states $\{\ket{\psi_m}\}_{m=1}^{\infty}$ is. Thus, the attack works not only with ``probability $1$'' over such families, but, in fact, \emph{for all} possible families $\{\ket{\psi_m}\}_{m=1}^\infty$. \qedhere
\end{itemize}
\end{proof}

\begin{remark}
The proof of Theorem~\ref{thm:oracle-sep} also shows that the $\oprs$ family generated in \cref{fig:full-1prs-protocol} is not statistically secure when we allow multiple-copy access to the generated state, i.e.\ the family in \cref{fig:full-1prs-protocol} is a $\oprs$ against query-bounded adversaries but not a $\prs$ against such adversaries.
\end{remark}

\begin{remark}
    The QPSPACE machine is quite a powerful oracle, and one might wonder whether a different attack based on shadow tomography would work here (along the lines of the attack described by Kretschmer in \cite[Subsection 1.3]{Kre21}). This would only require a $\mathsf{PP}$ oracle to carry out the classical post-processing. As pointed out earlier though, the issue is that here the projectors $\Pi_k = \left(\Big(\I_{\mathsf{A}} \otimes (U_k)_{\mathsf{A'B'}}^\dagger \Big) (\Pi^{sym}_{\mathsf{AA'}} \otimes \I_{\mathsf{B'}} )\Big(\I_{\mathsf{A}} \otimes (U_k)_{\mathsf{A'B'}}\Big)\right)^{\otimes 10n}$ have large 2-norm: $\tr \Pi_k^2$ is exponential in $n$~\cite{Har13}, and so the estimation of the quantity $\tr[\Lambda_k \tilde{\phi}]$ given by shadow tomography has too large of a variance. Thus, shadow tomography does not seem to be sample-efficient in our setting.
\end{remark}

\begin{remark}
    Our attack against $\prs$ is not relativizing: if a $\prs$ family is constructed relative to an oracle $\calO$, then our attack based on the OR lemma needs exponentially many queries to $\calO$, thus it cannot be simulated by a $\BQP$ adversary with access to a $\QPSPACE$ machine. Therefore, it does not violate the oracle construction of $\prs$ by Kretschmer~\cite{Kre21}, nor a conjecture by Kretschmer et al.~\cite[Sections 7.1--7.2 ]{KQST23} about the existence of $\prs$ relative to a classical oracle.
\end{remark}

A detailed discussion of the relation between black-box constructions and oracle separations in the quantum setting is postponed to \cref{sec:black-box-reductions}. Combining Theorem~\ref{thm:oracle-sep} with Theorem~\ref{thm:quantumIR} from \cref{sec:black-box-reductions} (and using the terminology introduced there), we immediately have:
\begin{corollary}
\label{cor:no-black-box-construction}
There is no fully black-box construction of a $\prs$ from isometry access to a $\oprs$ (as in Definition~\ref{def:bb-isometry}).
\end{corollary}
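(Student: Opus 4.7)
The plan is to combine the oracle separation of Theorem~\ref{thm:oracle-sep} with the meta-theorem (referenced as Theorem~\ref{thm:quantumIR}) from Section~\ref{sec:black-box-reductions}, which translates oracle separations in the quantum setting into impossibility results for fully black-box constructions. First, I would recall the setup provided by Theorem~\ref{thm:oracle-sep}. The oracle $\calO = (\calO_1, \calO_2)$ has two components with complementary roles. The first, $\calO_1$, is the CHRS oracle: on input $1^m$ it returns a fresh copy of the fixed Haar random state $\ket{\psi_m}$. As already emphasized in Section~\ref{sec:1prs_CHRS}, $\calO_1$ is naturally presented as an \emph{isometry} $V_m:\ket{0}\mapsto\ket{\psi_m}$, with no well-defined inverse available to any party. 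Relative to $\calO_1$ alone, the construction of Figure~\ref{fig:full-1prs-protocol} yields a statistically secure $\oprs$ (Corollary~\ref{thm:1prs}); adding $\calO_2$ (the QPSPACE machine) makes this secure even against QPT adversaries that are computationally universal, while Theorem~\ref{thm:oracle-sep} rules out any $\prs$ (with output length $\geq \log n+10$) relative to $\calO$.

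Next, I would invoke Theorem~\ref{thm:quantumIR}. This is the standard Impagliazzo--Rudich-style argument adapted to the quantum setting and made precise for \emph{isometry} access (Definition~\ref{def:bb-isometry}): a fully black-box construction of $\prs$ from isometry access to a $\oprs$, together with a fully black-box security reduction, would yield, for \emph{every} oracle-based implementation of a $\oprs$ as an isometry, a secure $\prs$ in the same oracle world. Instantiating such a hypothetical construction relative to $\calO$, using the CHRS-based $\oprs$ of Figure~\ref{fig:full-1prs-protocol} as the underlying isometry, would therefore produce a $\prs$ relative to $\calO$. This directly contradicts the non-existence part of Theorem~\ref{thm:oracle-sep}.

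The main subtlety — and what the \emph{isometry access} restriction in the statement is designed to accommodate — is that a more general black-box construction might query both the $\oprs$ generator and its inverse (i.e., treat it as a unitary). Such a construction cannot be instantiated relative to $\calO_1$, since $V_m$ has no inverse; this is precisely why Theorem~\ref{thm:quantumIR} is formulated for isometry access only. Verifying that the security reduction composes correctly in the quantum setting — in particular, that the QPT $\prs$-adversary produced by the ``attack side'' of our black-box reduction lifts to a valid $\calO$-relativized adversary, and that the QPT ``distinguishing'' attack constructed in the proof of Theorem~\ref{thm:oracle-sep} (which crucially uses $\calO_2$ to run the quantum OR tester) can be plugged in as the breaker — is the only nontrivial bookkeeping step, and is exactly what Theorem~\ref{thm:quantumIR} packages. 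Putting the two theorems together yields the corollary.
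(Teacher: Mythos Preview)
Your proposal is correct and follows exactly the paper's approach: the paper simply states that the corollary follows immediately by combining Theorem~\ref{thm:oracle-sep} with Theorem~\ref{thm:quantumIR}, and you have spelled out precisely this argument (with more detail than the paper provides). One small imprecision: Theorem~\ref{thm:quantumIR} is stated for both unitary and isometry access, not isometry only; the point is that since $\calO_1$ is genuinely an isometry (with no inverse available), only the isometry version of Theorem~\ref{thm:quantumIR} applies here, which is why only the isometry-access variant of black-box construction is ruled out.
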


\subsection{Clarifying the relationship between quantum oracle separations and black-box constructions}
\label{sec:black-box-reductions}

In this section, we clarify what we mean by a ``black-box construction'' of primitive $\mathcal{Q}$ from primitive $\mathcal{P}$ when the primitives involve \emph{quantum} algorithms (and possibly quantum state outputs). We also clarify the relationship between a \emph{quantum} oracle separation of $\mathcal{P}$ and $\mathcal{Q}$ and the (im)possibility of a black-box construction of one from the other. To the best of our knowledge, while black-box separations in the quantum setting have been the topic of several recent works, a somewhat formal treatment of the terminology and basic framework is missing. This section is a slightly extended version of a section that appears almost verbatim in the concurrent work \cite{coladangelo2024blackbox}.

In the quantum setting, it is not immediately obvious what the correct notion of ``black-box access'' is. There are a few reasonable notions of what it means for a construction to have ``black-box access'' to another primitive. We focus on three variants: \emph{unitary} access, \emph{isometry} access, and access to \emph{both the unitary and its inverse}.

The summary is that, similarly to the classical setting, a \emph{quantum} oracle separation of primitives $\mathcal{P}$ and $\mathcal{Q}$ (i.e.\ a quantum oracle relative to which $\mathcal{P}$ exists but $\mathcal{Q}$ does not) implies the impossibility of a black-box construction of $\mathcal{Q}$ from $\mathcal{P}$, but with one caveat: the type of oracle separation corresponds directly to the type of black-box construction that is being ruled out. For example, if one wishes to rule out black-box constructions of $\mathcal{Q}$ that are allowed to make use of the inverse of unitary implementations of $\mathcal{P}$, then the oracle separation needs to be ``closed under giving access to the inverse of the oracle'', i.e.\ the separation needs to hold relative to an oracle \emph{and} its inverse.

We start by introducing some terminology.

\paragraph{Terminology.} 
A quantum channel is a CPTP (completely-positive-trace-preserving) map. The set of quantum channels captures all admissible ``physical'' processes in quantum information, and it can be thought of as the quantum analogue of the set of functions  $f: \{0,1\}^* \rightarrow \{0,1\}^*$. 

For the purpose of this section, a quantum channel is specified by a family of unitaries $\{U_n\}_{n \in \mathbb{N}}$ (where $U_n$ acts on an input register of size $n$, and a work register of some size $s(n)$). The quantum channel maps an input (mixed) state $\rho$ on $n$ qubits to the (mixed) state obtained as follows: apply $U_n (\cdot) U_n^{\dagger}$ to $\rho \otimes (\ket{0}\bra{0})^{\otimes s(n)}$; measure a subset of the qubits; output a subset of the qubits (measured or unmeasured). We say that the family $\{U_n\}_{n \in \mathbb{N}}$ is a \emph{unitary implementation} of the quantum channel. We say that the quantum channel is QPT if it possesses a unitary implementation $\{U_n\}_{n \in \mathbb{N}}$ that is additionally a uniform family of efficiently computable unitaries. In other words, the quantum channel is implemented by a QPT algorithm.

One can also consider the family of isometries $\{V_n\}_{n \in \mathbb{N}}$ where $V_n$ takes as input $n$ qubits, and acts like $U_n$, but with the work register fixed to $\ket{0}^{s(n)}$, i.e.\
$V_n: \ket{\psi} \mapsto U_n(\ket{\psi}\ket{0}^{\otimes s(n)})$. We refer to $\{V_n\}_{n \in \mathbb{N}}$ as the \emph{isometry implementation} of the quantum channel. 

We will also consider QPT algorithms with access to some oracle $\mathcal{O}$. In this case, the unitary (resp. isometry) implementation $\{U_n\}_{n \in \mathbb{N}}$ should be \emph{efficiently computable given access to $\mathcal{O}$}.

Before diving into formal definitions, a bit informally, a \emph{primitive} $\mathcal{P}$ can be thought of as a set of conditions on tuples of algorithms $(G_1, \ldots, G_k)$. 
For example, for a digital signature scheme, a valid tuple of algorithms is a tuple $(\textit{Gen}, \textit{Sign}, \textit{Verify})$ that satisfies ``correctness'' (honestly generated signatures are accepted by the verification procedure with overwhelming probability) and ``security'' (formalized via an unforgeability game). Equivalently, one can think of the tuple of algorithms $(G_1, \ldots, G_k)$ as a \emph{single} algorithm $G$ (with an additional control input). %Likewise, we think of a unitary implementation $(U_1, \ldots, U_k)$, as a single family of unitaries $U$. 

A thorough treatment of black-box constructions and reductions in the classical setting can be found in \cite{RTV04}. Our definitions are a quantum analog of those found there. They follow the style of~\cite{RTV04} whenever possible and depart from it whenever necessary.

\begin{definition}
A \emph{primitive} $\mathcal{P}$ is a pair $\mathcal{P} = (\mathcal{F}_{\mathcal{P}}, \mathcal{R}_{\mathcal{P}})$\footnote{Here $\mathcal{F}_{\mathcal{P}}$ should be thought of as capturing the ``correctness'' property of the primitive, while $\mathcal{R}_{\mathcal{P}}$ captures ``security''.} where $\mathcal{F}_{\mathcal{P}}$ is a set of quantum channels, and $\mathcal{R}_{\mathcal{P}}$ is a relation over pairs $(G, A)$ of quantum channels, where $G \in \mathcal{F}_{\mathcal{P}}$.

A quantum channel $G$ is an \emph{implementation} of $\mathcal{P}$ if $G \in \mathcal{F}_{\mathcal{P}}$. If $G$ is additionally a QPT channel, then we say that $G$ is an \emph{efficient implementation} of $\mathcal{P}$ (in this case, we refer to $G$ interchangeably as a QPT channel or a QPT algorithm). 

A quantum channel $A$ (usually referred to as the ``adversary'') $\mathcal{P}$-breaks $G \in \mathcal{F}_{\mathcal{P}}$ if $(G, A) \in \mathcal{R}_{\mathcal{P}}$. We say that $G$ is a \emph{secure implementation} of $\mathcal{P}$ if $G$ is an implementation of $\mathcal{P}$ such that no QPT channel $\mathcal{P}$-breaks it. The primitive $\mathcal{P}$ \emph{exists} if there exists an efficient and secure implementation of $\mathcal{P}$.

Let $U$ be a unitary (resp.\ isometry) implementation of $G \in \mathcal{P}$. Then, we say that $U$ is a \emph{unitary (resp. isometry) implementation} of $\mathcal{P}$. For ease of exposition, we also say that quantum channel $A$ $\mathcal{P}$-breaks $U$ to mean that $A$ $\mathcal{P}$-breaks $G$.

\end{definition}
Since we will discuss oracle separations, we give corresponding definitions \emph{relative to an oracle}. Going forward, for ease of exposition, we often identify a quantum channel with the algorithm that implements it.
\begin{definition}[Implementations relative to an oracle]
\label{def:oracle-implementation}
Let $\mathcal{O}$ be a unitary (resp. isometry) oracle. An \emph{implementation} of primitive $\mathcal{P}$ relative to $\mathcal{O}$ is an oracle algorithm $G^{(\cdot)}$ such that $G^{\mathcal{O}} \in \mathcal{F}_\mathcal{P}$\footnote{We clarify that here $G^{\mathcal{O}}$ is only allowed to query the unitary $\mathcal{O}$, not its inverse. However, as will be the case later in the section, $\mathcal{O}$ itself could be of the form $\mathcal{O} = (W, W^{-1})$ for some unitary $W$.}. We say the implementation is efficient if $G^{(\cdot)}$ is a QPT oracle algorithm.

Let $U$ be a unitary (resp.\ isometry) implementation of $G^{\mathcal{O}}$. Then, we say that $U$ is a \emph{unitary (resp.\ isometry) implementation} of $\mathcal{P}$ \emph{relative to $\mathcal{O}$}.
\end{definition}

\begin{definition}
\label{def:exist-relative-to-oracle}
    We say that a primitive $\mathcal{P}$ exists relative to an oracle $\mathcal{O}$ if: 
\begin{itemize}
	\item[(i)] There exists an efficient implementation $G^{(\cdot)}$ of $\mathcal{P}$ relative to $\mathcal{O}$, i.e.\ $G^{\mathcal{O}} \in \mathcal{P}$ (as in Definition~\ref{def:oracle-implementation}).
	\item[(ii)] The security of $G^{\mathcal{O}}$ holds against all QPT adversaries that have access to $\mathcal{O}$. More precisely, for all QPT $A^{(\cdot)}$, $(G^{\mathcal O},A^{\mathcal O})\notin \mathcal{R}_{\mathcal{P}}$.
\end{itemize} 
\end{definition}

\vspace{3mm}
There are various notions of black-box constructions and reductions (see, for example, \cite{RTV04}). Here, we focus on (the quantum analog of) the notion of a \emph{fully black-box construction}. We identify and define three analogs based on the type of black-box access available to the construction and the security reduction.

\begin{definition}
\label{def:bb-isometry}
	A QPT algorithm $G^{(\cdot)}$ is a \emph{fully black-box construction of $\mathcal{Q}$ from \textbf{isometry access} to $\mathcal{P}$} if the following two conditions hold:
	\begin{enumerate}
	\item (\emph{black-box construction with isometry access}) For every isometry implementation $V$ of $\mathcal{P}$, $G^{V}$ is an implementation of $\mathcal{Q}$.
	\item (\emph{black-box security reduction with isometry access}) There is a QPT algorithm $S^{(\cdot)}$ such that, for every isometry implementation $V$ of $\mathcal{P}$, every adversary $A$ that $\mathcal{Q}$-breaks $G^V$, and every isometry implementation $\tilde{A}$ of $A$, it holds that $S^{\tilde{A}}$ $\mathcal P$-breaks $V$.
\end{enumerate}
\end{definition}

\begin{definition}
\label{def:bb-unitary}
A QPT algorithm $G^{(\cdot)}$ is a \emph{fully black-box construction of $\mathcal{Q}$ from \textbf{unitary access} to $\mathcal{P}$} if the following two conditions hold:
\begin{enumerate}
	\item (\emph{black-box construction with unitary access}) For every unitary implementation $U$ of $\mathcal{P}$, $G^{U}$ is an implementation of $\mathcal{Q}$.
	\item (\emph{black-box security reduction with unitary access}) There is a QPT algorithm $S^{(\cdot)}$ such that, for every unitary implementation $U$ of $\mathcal{P}$, every adversary $A$ that $\mathcal{Q}$-breaks $G^U$, and every unitary implementation $\tilde{A}$ of $A$, it holds that $S^{\tilde{A}}$ $\mathcal P$-breaks $U$.
\end{enumerate}
\end{definition}

\begin{definition}
\label{def:bb-unitary-and-inverse}
	A QPT algorithm $G^{(\cdot)}$ is a \emph{fully black-box construction of $\mathcal{Q}$ from $\mathcal{P}$ \textbf{with access to the inverse}} if the following two conditions hold:
	\begin{enumerate}
		\item (\emph{black-box construction with access to the inverse}) For every unitary implementation $U$ of $\mathcal{P}$, $G^{U, U^{-1}}$ is an implementation of $\mathcal{Q}$.
		\item (\emph{black-box security reduction with access to the inverse}) There is a QPT algorithm $S^{(\cdot)}$ such that, for every unitary implementation $U$ of $\mathcal{P}$, every adversary $A$ that $\mathcal{Q}$-breaks $G^{U, U^{-1}}$, and every unitary implementation $\tilde{A}$ of $A$, it holds that $S^{\tilde{A}, \tilde{A}^{-1}}$ $\mathcal P$-breaks $U$\footnote{One could define even more variants of "fully black-box constructions" by separating the type of access that $G$ has to the implementation of $\mathcal{P}$ from the type of access that $S$ has to $A$ (currently they are consistent in each of Definitions \ref{def:bb-unitary}, \ref{def:bb-isometry}, and \ref{def:bb-unitary-and-inverse}). Here, we choose to limit ourselves to the these three definitions.}.
	\end{enumerate}
\end{definition}
These three notions of black-box constructions are related to each other in the following (unsurprising) way.
\begin{theorem}
    If there is a fully black-box construction $G^{(\cdot)}$ of primitive $\mathcal{Q}$ from isometry access to primitive $\mathcal{P}$ (as in \cref{def:bb-isometry}), then there is a fully black-box construction $\tilde{G}^{(\cdot)}$ of $\mathcal{Q}$ from unitary access to $\mathcal{P}$ (as in \cref{def:bb-unitary}).
\end{theorem}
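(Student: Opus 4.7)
The plan is to show that unitary access is at least as powerful as isometry access: given a unitary implementation $U$ of the channel $G_{\mathcal{P}}$, one can simulate the corresponding isometry implementation $V$ by simply appending fresh ancilla qubits in state $\ket{0}$ and applying $U$. Concretely, I would define $\tilde G^{(\cdot)}$ to be the algorithm that, on unitary oracle access to $U$, runs $G$ verbatim, but whenever $G$ would invoke the isometry oracle $V$ on some register, $\tilde G$ instead appends $\ket{0}^{\otimes s(n)}$ and invokes $U$ on the combined register. Because the isometry implementation of $G_{\mathcal{P}}$ is \emph{by definition} $V:\ket{\psi}\mapsto U(\ket{\psi}\ket{0}^{\otimes s(n)})$, the oracle that $\tilde G^U$ simulates is exactly $V$, and hence $\tilde G^U$ and $G^V$ produce identical channels. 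In particular, by property~(1) of Definition~\ref{def:bb-isometry}, $\tilde G^U = G^V$ is an implementation of $\mathcal{Q}$, which gives black-box construction with unitary access in the sense of Definition~\ref{def:bb-unitary}~(1).

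For the security reduction, let $S^{(\cdot)}$ be the QPT reduction guaranteed by Definition~\ref{def:bb-isometry}~(2). I would define $\tilde S^{(\cdot)}$ so that, given unitary access to an adversary $\tilde A$ (a unitary implementation of $A$ acting on $n_A$ input qubits with $s_A(n_A)$ ancillas), it runs $S$, and whenever $S$ would query the isometry implementation of $A$, it appends $\ket{0}^{\otimes s_A(n_A)}$ and calls $\tilde A$. This exactly simulates the isometry implementation $V_A:\ket{\psi}\mapsto \tilde A(\ket{\psi}\ket{0}^{\otimes s_A(n_A)})$ of the adversary channel $A$. Now suppose $A$ is any adversary that $\mathcal{Q}$-breaks $\tilde G^U$, and $\tilde A$ is any unitary implementation of $A$. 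Since $\tilde G^U$ and $G^V$ are the \emph{same} channel, $A$ also $\mathcal{Q}$-breaks $G^V$. By property~(2) of Definition~\ref{def:bb-isometry} applied to the isometry implementation $V_A$ of $A$, $S^{V_A}$ $\mathcal{P}$-breaks $V$. Since ``$\mathcal{P}$-breaks $V$'' is defined to mean ``$\mathcal{P}$-breaks the underlying channel $G_{\mathcal{P}}$'', and $U$ is another implementation of the same channel, $S^{V_A}$ $\mathcal{P}$-breaks $U$. As $\tilde S^{\tilde A}$ perfectly simulates $S^{V_A}$, we conclude $\tilde S^{\tilde A}$ $\mathcal{P}$-breaks $U$, which is exactly Definition~\ref{def:bb-unitary}~(2).

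There is essentially no hard step here: the argument is almost entirely definitional unpacking, relying on the observation that the unitary and isometry implementations of any channel are canonically related by appending ancillas, and that ``$\mathcal{P}$-breaking'' and ``$\mathcal{Q}$-breaking'' are defined at the level of the underlying channel rather than the specific implementation. The only minor subtlety to keep in mind is that $\tilde S$ receives \emph{unitary} access to $\tilde A$ while the hypothesized $S$ expects \emph{isometry} access to $A$; this is resolved by the same ancilla-padding trick used for $\tilde G$. Efficiency of $\tilde G^{(\cdot)}$ and $\tilde S^{(\cdot)}$ is preserved because appending $\poly(n)$-many $\ket{0}$ ancillas per oracle query adds only polynomial overhead.
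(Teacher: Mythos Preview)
Your proposal is correct and follows essentially the same approach as the paper: define $\tilde{G}^U$ to run $G^V$ by simulating the isometry $V$ via appending $\ket{0}$-ancillas and calling $U$, and define $\tilde{S}$ analogously. The paper's proof is in fact considerably terser than yours; your additional remarks about the security reduction (that $\tilde{G}^U = G^V$ as channels, and that ``$\mathcal{P}$-breaks'' is defined at the channel level) correctly spell out details the paper leaves implicit.
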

\begin{proof}
$\tilde{G}$ is defined in a natural way: for a unitary implementation $U$ of $\mathcal{P}$, $\tilde{G}^{U}$ runs $G^V$, where $V$ is the isometry induced by $U$. The latter can of course be simulated with queries to $U$, by setting the work register to $\ket{0}$. An $\tilde{S}^{(\cdot)}$ satisfying item 2 of \cref{def:bb-unitary} can be defined analogously from an $S^{\cdot}$ satisfying item 2 of \cref{def:bb-isometry}.
\end{proof}
We also have the following.
\begin{theorem}
    A fully black-box construction $G^{(\cdot)}$ of primitive $\mathcal{Q}$ from isometry access to primitive $\mathcal{P}$ (as in \cref{def:bb-unitary}) is also a fully black-box construction of $\mathcal{Q}$ from $\mathcal{P}$ with access to the inverse (as in \cref{def:bb-unitary-and-inverse}).
\end{theorem}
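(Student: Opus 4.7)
The plan is to observe that unitary-with-inverse access is strictly more information than unitary access alone, so a construction/reduction that only consults the forward direction of the oracle can be lifted trivially by ignoring the inverse oracle. Concretely, given the fully black-box construction $G^{(\cdot)}$ of $\mathcal{Q}$ from unitary access to $\mathcal{P}$, I would define $\tilde{G}^{(\cdot, \cdot)}$ on input $(U, U^{-1})$ to simulate $G^{U}$ and discard the $U^{-1}$ interface. Since by hypothesis (item 1 of Definition~\ref{def:bb-unitary}) $G^U$ is an implementation of $\mathcal{Q}$ for every unitary implementation $U$ of $\mathcal{P}$, the same holds for $\tilde{G}^{U, U^{-1}}$, establishing item 1 of Definition~\ref{def:bb-unitary-and-inverse}.

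For the security reduction, let $S^{(\cdot)}$ be the QPT algorithm guaranteed by item 2 of Definition~\ref{def:bb-unitary}. I would define $\tilde{S}^{(\cdot, \cdot)}$ by $\tilde{S}^{\tilde{A}, \tilde{A}^{-1}} := S^{\tilde{A}}$, once again ignoring the inverse interface. If $A$ is any adversary that $\mathcal{Q}$-breaks $\tilde{G}^{U, U^{-1}}$, then by construction $A$ also $\mathcal{Q}$-breaks $G^U$; thus, by the black-box security guarantee of $S$, for every unitary implementation $\tilde{A}$ of $A$, $S^{\tilde{A}}$ $\mathcal{P}$-breaks $U$. Hence $\tilde{S}^{\tilde{A}, \tilde{A}^{-1}}$ also $\mathcal{P}$-breaks $U$, verifying item 2 of Definition~\ref{def:bb-unitary-and-inverse}.

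I do not anticipate any real obstacle. The only thing worth being careful about is making sure that the QPT status is preserved under the transformation, which is immediate since $\tilde{G}$ and $\tilde{S}$ do strictly less work than $G$ and $S$ respectively (they only need to route queries to the forward oracle and never invoke the inverse). In other words, the entire argument is a ``forgetful functor'' of oracle access, and no nontrivial simulation is required.
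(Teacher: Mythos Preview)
Your proposal is correct and follows the same idea as the paper's proof, which simply notes that Definition~\ref{def:bb-unitary-and-inverse} gives $G^{(\cdot)}$ and $S^{(\cdot)}$ access to strictly more (namely the inverses), so the same $G$ and $S$ work. Your version is just more explicit in wrapping $G$ and $S$ into two-oracle algorithms $\tilde{G},\tilde{S}$ that discard the inverse interface, which is a harmless (and arguably cleaner) way to handle the type mismatch between a one-oracle and a two-oracle algorithm.
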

\begin{proof}
This is immediate since \cref{def:bb-unitary-and-inverse} gives $G^{(\cdot)}$ and $S^{(\cdot)}$ access to strictly ``more'', namely the inverses.
\end{proof}

We thus point out that our separation result (\cref{thm:oracle-sep}) rules out only the strongest notion of fully black-box construction of $\prs$ from $\oprs$ (as in \cref{def:bb-isometry}), and thus is the ``weakest'' separating result that one could hope to obtain. 

As an example to help motivate these different definitions, the original construction of commitments from $\prs$ by Morimae and Yamakawa \cite{MY22a} is fully black-box, but \emph{with access to the inverse} (i.e.\ the weakest notion of fully black-box construction). This distinction is important, for example, when working in the CHRS model, or in the quantum auxiliary-input model considered in \cite{morimae2023unconditionally} and \cite{Qia23}: a construction of a $\prs$ in this model does not immediately yield a commitment scheme via the black-box construction of \cite{MY22a}, because the inverse of the $\prs$ generation procedure is not necessarily available in this model (since the generation procedure may use auxiliary states, and thus the ``inverse'' is not well-defined). On the other hand, the slight variation on the \cite{MY22a} construction, proposed in \cite{morimae2023unconditionally}, is fully black-box with unitary access (but without needing the inverse, as in \cref{def:bb-unitary}).

We now clarify the relationship between a \emph{quantum} oracle separation of primitives $\mathcal{P}$ and $\mathcal{Q}$ and the (im)possibility of a black-box construction of one from the other. 

The following is a quantum analog of a result by Impagliazzo and Rudich~\cite{IR89} (formalized in \cite{RTV04} using the above terminology).
\begin{theorem}
	\label{thm:quantumIR}
	 Suppose there exists a fully black-box construction of primitive $\mathcal{Q}$ from unitary (resp.\ isometry) access to primitive $\mathcal{P}$. Then, for every unitary (resp.\ isometry) $\mathcal{O}$, if $\mathcal{P}$ exists relative to $\mathcal{O}$, then $\mathcal{Q}$ also exists relative to $\mathcal{O}$.
\end{theorem}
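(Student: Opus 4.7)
The plan is to follow the standard Impagliazzo--Rudich style argument \cite{IR89,RTV04}, adapted to our quantum definitions; the proof will be a direct unpacking of Definitions~\ref{def:bb-unitary}, \ref{def:bb-isometry}, and~\ref{def:exist-relative-to-oracle}. I will treat the unitary case, since the isometry case is entirely analogous (just replace ``unitary implementation'' by ``isometry implementation'' throughout). Fix a unitary oracle $\mathcal{O}$ relative to which $\mathcal{P}$ exists, and let $P^{(\cdot)}$ be an efficient implementation of $\mathcal{P}$ relative to $\mathcal{O}$ whose security holds against all QPT adversaries with access to $\mathcal{O}$. Let $G^{(\cdot)}$ and $S^{(\cdot)}$ be the QPT construction and security reduction promised by the fully black-box construction of $\mathcal{Q}$ from $\mathcal{P}$. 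The candidate efficient implementation of $\mathcal{Q}$ relative to $\mathcal{O}$ will simply be the composition $Q^{(\cdot)} := G^{P^{(\cdot)}}$, i.e.\ the oracle algorithm that on oracle $\mathcal{O}$ runs $G$ using $P^{\mathcal{O}}$ in place of its $\mathcal{P}$-oracle.

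For the correctness side of Definition~\ref{def:exist-relative-to-oracle}, I would first observe that $Q^{(\cdot)}$ is QPT because both $G^{(\cdot)}$ and $P^{(\cdot)}$ are QPT, and that a unitary implementation of $Q^{\mathcal{O}}$ is obtained by plugging any unitary implementation $U_P$ of $P^{\mathcal{O}}$ (which we may build by dilating the work/measurement registers used by $P^{\mathcal{O}}$, giving a unitary that internally makes queries to $\mathcal{O}$) into the unitary implementation of $G^{(\cdot)}$. Since $U_P$ is, by construction, a unitary implementation of the quantum channel $P^{\mathcal{O}} \in \mathcal{F}_{\mathcal{P}}$, item~1 of Definition~\ref{def:bb-unitary} guarantees $G^{U_P} \in \mathcal{F}_{\mathcal{Q}}$, i.e.\ $Q^{\mathcal{O}}$ is indeed an implementation of $\mathcal{Q}$.

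For security, I proceed by contradiction: suppose $Q^{\mathcal{O}}$ is not secure relative to $\mathcal{O}$, so there is a QPT adversary $A^{(\cdot)}$ such that $A^{\mathcal{O}}$ $\mathcal{Q}$-breaks $Q^{\mathcal{O}} = G^{U_P}$. The key step is to produce from $A^{\mathcal{O}}$ a unitary implementation $\tilde{A}$ of the channel $A^{\mathcal{O}}$ whose queries can be simulated with oracle access to $\mathcal{O}$; this is done by taking the natural unitary dilation of $A$ and hard-wiring $\mathcal{O}$-queries into it, yielding a unitary $\tilde{A}$ that is efficiently implementable given one query to $\mathcal{O}$ per $\mathcal{O}$-query of $A$. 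Item~2 of Definition~\ref{def:bb-unitary} then gives a QPT $S^{(\cdot)}$ such that $S^{\tilde{A}}$ $\mathcal{P}$-breaks $U_P$, i.e.\ $\mathcal{P}$-breaks $P^{\mathcal{O}}$. Because each query of $S$ to $\tilde{A}$ can be answered by a QPT subroutine with oracle access to $\mathcal{O}$, the combined algorithm $B^{(\cdot)} := S^{\tilde{A}^{(\cdot)}}$ is a QPT oracle algorithm such that $B^{\mathcal{O}}$ $\mathcal{P}$-breaks $P^{\mathcal{O}}$, contradicting the assumed security of $P^{\mathcal{O}}$ relative to $\mathcal{O}$. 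The isometry case is identical, with ``unitary implementation'' replaced everywhere by ``isometry implementation'' and the dilation step omitted/modified so that the work register is fixed to $\ket{0}$ rather than traced out.

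The only non-trivial point to be careful about is the very last simulation argument: one must check that plugging an oracle machine $\tilde{A}^{(\cdot)}$ into $S^{(\cdot)}$ (which was designed to query an oracle, not an oracle machine) still yields a QPT oracle algorithm, i.e.\ that composition preserves QPT-ness with respect to the underlying oracle $\mathcal{O}$. This is routine, since each query of $S$ to $\tilde{A}$ incurs only a $\poly(n)$-size circuit making $\poly(n)$ queries to $\mathcal{O}$, and $S$ itself makes only $\poly(n)$ queries. I anticipate no further obstacles.
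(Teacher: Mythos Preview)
Your proposal is correct and follows essentially the same argument as the paper's own proof: fix an efficient unitary implementation $U$ of $\mathcal{P}$ relative to $\mathcal{O}$, take $G^{U}$ as the candidate implementation of $\mathcal{Q}$, and for security reduce any $\mathcal{O}$-adversary $A$ against $G^{U}$ to the QPT $\mathcal{O}$-algorithm $S^{\tilde{A}}$ breaking $\mathcal{P}$. If anything, you are slightly more explicit than the paper about passing to a unitary implementation $\tilde{A}$ of $A^{\mathcal{O}}$ before feeding it to $S$, and about why the resulting composition remains QPT relative to $\mathcal{O}$.
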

This implies that a unitary (resp.\ isometry) oracle separation (i.e.\ the existence of an oracle relative to which $\mathcal{P}$ exists but $\mathcal{Q}$ does not) suffices to rule out a fully black-box construction of $\mathcal{Q}$ from unitary (resp.\ isometry) access to $\mathcal{P}$.

\begin{proof}[Proof of Theorem \ref{thm:quantumIR}]
We write the proof for the case of unitary access to $\mathcal{P}$. The proof for the case of isometry access is analogous (replacing unitaries with isometries).
	Suppose there exists a fully black-box construction of $\mathcal{Q}$ from $\mathcal{P}$. Then, by definition, there exist QPT $G^{(\cdot)}$ and $S^{(\cdot)}$ such that:
	\begin{enumerate}
		\item (\emph{black-box construction}) For every unitary implementation $U$ of $\mathcal{P}$, $G^{U}$ is an implementation of $\mathcal{Q}$.
		\item (\emph{black-box security reduction}) For every implementation $U$ of $\mathcal{P}$, every adversary $A$ that $\mathcal{Q}$-breaks $G^U$, and every unitary implementation $\tilde{A}$ of $A$, it holds that $S^{\tilde{A}}$ $\mathcal P$-breaks $U$.
	\end{enumerate}
Let $\mathcal O$ be a quantum oracle relative to which $\mathcal{P}$ exists. Since, by Definition~\ref{def:exist-relative-to-oracle}, $\mathcal{P}$ has an \emph{efficient} implementation relative to $\mathcal{O}$, there exists a uniform family of unitaries $U$ that is \emph{efficiently computable} with access to $\mathcal{O}$, such that $U$ is a unitary implementation of $\mathcal{P}$. Moreover, $U$ (or rather the quantum channel that $U$ implements) is a secure implementation of $\mathcal{P}$ relative to $\mathcal{O}$.

We show that the following QPT oracle algorithm $\tilde{G}^{(\cdot)}$ is an efficient implementation of $\mathcal{Q}$ relative to $\mathcal{O}$, i.e.\   $\tilde{G}^{\mathcal O} \in \mathcal{Q}$. $\tilde{G}^{\mathcal O}$ runs as follows: implement $G^{U}$ by running $G$, and simulate each call to $U$ by making queries to $\mathcal O$. Note that $\tilde{G}^{(\cdot)}$ is QPT because $U$ is a uniform family of efficiently computable unitaries given access to $\mathcal{O}$. Since $\tilde{G}^{\mathcal O}$ is equivalent to $G^{U}$, and $G^U \in \mathcal{Q}$ (by property 1 above), then $\tilde{G}^{\mathcal O} \in \mathcal{Q}$.

We are left with showing that $\tilde{G}^{\mathcal O}$ is a secure implementation relative to $\mathcal O$, i.e.\ that there is no QPT adversary $A^{(\cdot)}$ such that  $A^{\mathcal O}$ $\mathcal{Q}$-breaks $\tilde{G}^{\mathcal O}$. Suppose for a contradiction that there was a QPT adversary $A^{(\cdot)}$ such that $\mathcal{A}^{\mathcal{O}}$ $\mathcal{Q}$-breaks $\tilde{G}^{\mathcal O}$ (which is equivalent to $G^{U}$). Then, by property 2, $S^{A^\mathcal O}$ $\mathcal{P}$-breaks $U$. Note that adversary $S^{A^{\mathcal{O}}}$ can be implemented efficiently with oracle access to $\mathcal O$, because both $S^{(\cdot)}$ and $A^{(\cdot)}$ are QPT. Thus, this contradicts the security of $U$ relative to $\mathcal{O}$ (formally, of the quantum channel that $U$ implements).
\end{proof}

Similarly, we state a version of Theorem \ref{thm:quantumIR} for fully black-box constructions with access to the inverse.
\begin{theorem}
	\label{thm:quantumIRinverse}
	Suppose there exists a fully black-box construction of primitive $\mathcal{Q}$ from primitive $\mathcal{P}$ with access to the inverse.  Then, for every unitary $\mathcal O$, if $\mathcal{P}$ exists relative to $(\mathcal O, \mathcal O^{-1})$, then $\mathcal{Q}$ also exists relative to the oracle $(\mathcal O, \mathcal O^{-1})$. 
\end{theorem}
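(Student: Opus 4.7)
The plan is to mirror the proof of \cref{thm:quantumIR}, with the key twist that access to $\mathcal{O}^{-1}$ is precisely what lets us simulate inverse queries as needed. First, I would unpack the hypothesis: by \cref{def:bb-unitary-and-inverse}, there exist QPT oracle algorithms $G^{(\cdot)}$ and $S^{(\cdot)}$ such that, for every unitary implementation $U$ of $\mathcal{P}$, $G^{U,U^{-1}}$ implements $\mathcal{Q}$, and for every adversary $A$ that $\mathcal{Q}$-breaks $G^{U,U^{-1}}$ and every unitary implementation $\tilde{A}$ of $A$, the algorithm $S^{\tilde{A},\tilde{A}^{-1}}$ $\mathcal{P}$-breaks $U$. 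Fix a unitary $\mathcal{O}$ such that $\mathcal{P}$ exists relative to $(\mathcal{O},\mathcal{O}^{-1})$. By \cref{def:exist-relative-to-oracle}, there is a uniform family of unitaries $U$ efficiently computable with access to $(\mathcal{O},\mathcal{O}^{-1})$ such that $U$ is a unitary implementation of $\mathcal{P}$ whose induced quantum channel is secure against all QPT adversaries with access to $(\mathcal{O},\mathcal{O}^{-1})$.

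The crucial observation is that $U^{-1}$ is itself efficiently computable given oracle access to $(\mathcal{O},\mathcal{O}^{-1})$: given a circuit for $U$ built from gates and queries to $\mathcal{O}$ and $\mathcal{O}^{-1}$, one obtains a circuit for $U^{-1}$ by reversing the order of operations and inverting each gate, which turns every call to $\mathcal{O}$ into a call to $\mathcal{O}^{-1}$ and vice versa. I would then define the candidate implementation $\tilde{G}^{(\cdot),(\cdot)}$ of $\mathcal{Q}$ relative to $(\mathcal{O},\mathcal{O}^{-1})$ as follows: $\tilde{G}^{\mathcal{O},\mathcal{O}^{-1}}$ runs $G$, simulating each query to $U$ by queries to $(\mathcal{O},\mathcal{O}^{-1})$ and each query to $U^{-1}$ via the reversed circuit just described. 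This $\tilde{G}$ is QPT, and by item~1 of \cref{def:bb-unitary-and-inverse} its induced channel lies in $\mathcal{F}_{\mathcal{Q}}$.

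For security, I would argue by contradiction: suppose a QPT oracle algorithm $A^{(\cdot),(\cdot)}$ is such that $A^{\mathcal{O},\mathcal{O}^{-1}}$ $\mathcal{Q}$-breaks $\tilde{G}^{\mathcal{O},\mathcal{O}^{-1}}$ (equivalently, $G^{U,U^{-1}}$). Let $\tilde{A}$ be a unitary implementation of $A^{\mathcal{O},\mathcal{O}^{-1}}$; by the same reversal trick, $\tilde{A}$ and $\tilde{A}^{-1}$ are both efficiently implementable with access to $(\mathcal{O},\mathcal{O}^{-1})$. By item~2 of \cref{def:bb-unitary-and-inverse}, $S^{\tilde{A},\tilde{A}^{-1}}$ $\mathcal{P}$-breaks $U$. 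But $S^{\tilde{A},\tilde{A}^{-1}}$ can be implemented as a QPT algorithm with access only to $(\mathcal{O},\mathcal{O}^{-1})$, contradicting the security of $U$ relative to $(\mathcal{O},\mathcal{O}^{-1})$. Hence $\mathcal{Q}$ exists relative to $(\mathcal{O},\mathcal{O}^{-1})$.

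The only conceptual point beyond \cref{thm:quantumIR} is the circuit-reversal observation, which is why the conclusion must be stated relative to the pair $(\mathcal{O},\mathcal{O}^{-1})$ rather than $\mathcal{O}$ alone: without access to $\mathcal{O}^{-1}$ we could not, in general, build $U^{-1}$ from a circuit for $U$, and the fully black-box construction (which genuinely exploits inverse access) would not compile. I do not anticipate any real obstacle beyond carefully tracking that each reduction step stays inside the class of QPT algorithms with access to $(\mathcal{O},\mathcal{O}^{-1})$.
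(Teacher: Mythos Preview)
Your proposal is correct and follows essentially the same approach as the paper's proof, which explicitly notes that the argument is analogous to \cref{thm:quantumIR} with the sole addition that $U^{-1}$ can be implemented efficiently given access to $(\mathcal{O},\mathcal{O}^{-1})$. Your explanation of the circuit-reversal point is in fact more detailed than what the paper provides.
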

\begin{proof}
The proof is analogous to the proof of Theorem \ref{thm:quantumIR}. The only difference is that now $G^{(\cdot)}$ additionally makes queries to the inverse of the unitary implementation $U$ of $\mathcal{P}$. Since $U^{-1}$ can be implemented efficiently given access to $(\mathcal O, \mathcal O^{-1})$, we can now define an efficient implementation $\tilde{G}^{(\cdot)}$ of $\mathcal{P}$ relative to $(\mathcal O, \mathcal O^{-1})$. Proving that $\tilde{G}^{\mathcal{O}, \mathcal{O}^{-1}}$ is a secure implementation of $\mathcal{P}$ relative to $(\mathcal O, \mathcal O^{-1})$ also proceeds analogously.
\end{proof}	

    \section{Reduction from a ``state'' oracle to a unitary oracle}\label{sec:unitary-oracle-simulation}
Recall that the oracle separating $\oprs$ and $\prs$ in~\cref{sec:oracle-sep} is an \emph{isometry}. In particular, the CHRS part of the oracle provides copies of a Haar random state. Thus, so far, such a separation only rules out a fully black-box construction of a $\prs$ from ``isometry access'' to a $\oprs$ (as defined precisely in \cref{def:bb-isometry}). Informally, such a black-box construction is only allowed to use the generation procedure of the $\oprs$ as an ``isometry'', i.e.\ it does not have the ability to initialize the auxiliary qubits in an arbitrary state.

In this section, we show that our separation can be upgraded to be relative to a ``parametrized'' \emph{unitary} oracle and its inverse (for $\prs$ that have output length at least $\omega(\log n)$).\footnote{Recall that a parametrized oracle is a a family of oracles $\{O_\lambda\}$. Existence relative to $\{O_\lambda\}$ means that, for a security parameter $\lambda$, both the construction and the adversary are only allowed to query $O_\lambda$such that the construction and the adversary are only allowed to query $O_\lambda$. An oracle of this kind does not rule out the most general kind of black-box construction (which can make use of an arbitrary unitary implementation of primitive $A$, and its inverse, in order to build primitive $B$), but only rules out black-box constructions of primitive $B$ that, for a fixed security parameter $n$, only make use of a unitary implementation of $A$ for the same fixed security parameter $n$.} In particular, we introduce a unitary oracle, which is self-inverse that is approximately equivalent to the isometry oracle that gives out copies of a Haar random state $\ket{\psi}$: access to this unitary oracle allows one to exactly simulate access to copies of $\ket{\psi}$, and, conversely, the unitary oracle can be simulated \emph{approximately} using copies of $\ket{\psi}$.

As mentioned earlier, a separation of $\oprs$ and $\prs$ relative to a standard unitary oracle can be achieved via different techniques as in \cite{BMM+24} and \cite{GZ25}. %Replacing the isometry oracle with the new unitary oracle we are able to establish impossibility of the most general kind of a fully black-box construction of $\prs$ from $\oprs$ (as in~\cref{def:bb-unitary-and-inverse}). 
The technique that we describe here is inspired by techniques in the works of Ji, Liu, and Song~\cite{JLS18} and Zhandry~\cite{zhandry2024space}. The former also considers simulation of a state-dependent unitary oracle, but the latter performs a ``reflection'' across a state, rather than a ``swap'' or ``replacement''. In this sense, our technique is more similar to Zhandry's~\cite{zhandry2024space}, with the difference that we consider the notion of ``global-phase'' invariance of a distribution over unitaries (which we define below), instead of ``relative-phase'' invariance. Overall, we show the following:
\begin{itemize}
    \item[(i)] If a primitive (with a security game consisting of a single-round, e.g.\ a $\oprs$ or $\efi$) exists relative to the CHRS oracle (or, in fact, relative to any distribution over states that is ``global-phase invariant'', as defined in Definition~\ref{def:global-phase-invariant} below), then it also exists relative to a corresponding parametrized unitary oracle. This is \cref{thm:16}.
    \item[(ii)] Conversely, $\prs$ with $\omega(\log n)$ output length do not exist relative to the parametrized unitary oracle (induced by the CHRS oracle), since we can still carry out (a suitably modified version of) the OR lemma attack on $\prs$ that we described in Section~\ref{sec:oralce-sep}.
\end{itemize}

\subsection{Unitary corresponding to a state}
\label{sec:unitaryoracle0}
Throughout the section, let $\ket{\psi}$ be an $n$-qubit state orthogonal to $\ket{0^n}$. In the CHRS model, the common Haar state $\ket{\psi}$ is not necessarily orthogonal to $\ket{0^n}$, but we take them to be be orthogonal at first for simplicity. The result we prove will extend straightforwardly to the case of arbitrary $\ket{\psi}$. For convenience of notation, we will write $\ket{0}$ instead of $\ket{0^n}$ (more generally, we will use $\ket{0}$ to denote the all zero state of a system whose dimension is clear from the context).

We define a corresponding unitary $\upsi$ as follows: $\upsi$ flips $\ket{0}$ and $\ket{\psi}$, and acts as the identity on everything orthogonal to the subspace spanned by $\ket{0}$ and $\ket{\psi}$, i.e.\ $\upsi\ket{0}=\ket{\psi}$, $\upsi\ket{\psi}=\ket{0}$, and $\upsi\ket{\phi}=\ket{\phi}$ for any $\ket{\phi}$ orthogonal to $\ket{0}$ and $\ket{\psi}$. Notice that $\upsi$ is self-inverse.

% It is clear that access to $\upsi$ allows one to simulate the isometry oracle (which provides copies of $\ket{\psi}$), by simply applying $\upsi$ on copies of $\ket{0}$. However, the reduction in the other direction is nontrivial. First of all, notice that we cannot hope to simulate $\upsi$ in the most general sense using the isometry oracle alone, because the phase information is entirely lost: the states of the form $\alpha \ket{\psi}$, for $|\alpha|=1$, are all identical up to a global phase, and so $\alpha$ cannot be detected given only copies of the state. On the other hand, the unitaries of the form $\uapsi$ are in general very different from each other: applying $\uapsi$ or $U_{\alpha' \ket{\psi}}$ (for $\alpha \neq \alpha'$) to a superposition of $\ket{0}$ and $\ket{\psi}$ produces different states in general.

% So, instead, our goal will be to show that $\upsi$ can be simulated using copies of $\ket{\psi}$ in a weaker sense, which will still be sufficient to upgrade our oracle separation results. The key observation is that, while a general simulation is not possible, one might be able to simulate the behaviour of $\uapsi$ ``on average over $\alpha$''. 

Consider an algorithm $\calA^{\upsi}$ that makes $T$ queries to $\upsi$, we will show that one can simulate $\calA^{\upsi}$ with $\epsilon$ precision given $O\left(\frac {T^2}{\epsilon^2}\right)$ copies of $\ket{\psi}$ in the following average sense. 

For any $\ket{\psi}$, and an arbitrary input state $\ket{\sigma}$, we can write the output of $\calA^{\upsi}$ as
\begin{equation*}
    \ket{\Psi_{\psi,T}} = B_T\upsi B_{T-1} \dots B_1 \upsi B_0 \ket{\sigma},
\end{equation*}
for some fixed unitaries $B_0, \dots, B_T$ that do not depend on $\ket{\psi}$. Then, we consider the average of this output over a uniformly random phase $\alpha$, namely $\alpha$ is sampled as a random point on the unit circle $|\alpha|=1$:
\begin{equation}
\label{eq:600}
    \rho_{\psi, T} = \E_\alpha \left[ \ket{\Psi_{\alpha{\ket{\psi}}, T}}\bra{\Psi_{\alpha\ket{\psi}, T}} \right].
\end{equation}
We establish that $\rho_{\psi,T}$ can be simulated approximately given copies of $\ket{\psi}$.
\begin{theorem}
\label{thm:isometry-to-unitary}
    Let $n\in \mathbb{N}$. Let $\ket{\psi}$ be any $n$-qubit state orthogonal to $\ket{0^n}$. Let $\epsilon >0$, and $T\in \mathbb{N}$. Let $\upsi$ be the $n$-qubit unitary defined as above, and let $\rho_{\psi, T}$ be as in Equation \eqref{eq:600}. For any oracle algorithm $\calA^{(\cdot)}$ making $T$ queries to $\upsi$, there is an algorithm $\wt{\cal{A}}$ that, with access to $O\left(\frac {T^2}{\epsilon^2}\right)$ copies of $\ket{\psi}$, outputs a state $\wt{\rho}_{\psi, T}$ that is $\eps$-close to $\rho_{\psi, T}$ in trace distance.
\end{theorem}

\begin{corollary}
\label{cor:20}
Let $n\in \mathbb{N}$. Let $\ket{\psi}$ be any $n$-qubit state.  Let $\epsilon >0$, and $T\in \mathbb{N}$. Define the $(n+1)$-qubit state $\ket{\psi'} = \ket{\psi} \otimes \ket{1}$. Let $U_{\ket{\psi'}}$ be the $(n+1)$-qubit unitary defined as above, and let $\rho_{\psi', T}$ be as in Equation \eqref{eq:600}.
For any oracle algorithm $\calA^{(\cdot)}$ making $T$ queries to $U_{\ket{\psi'}}$, there is an algorithm $\wt{\cal{A}}$ that, with access to $O\left(\frac {T^2}{\epsilon^2}\right)$ copies of $\ket{\psi}$, outputs a state $\wt{\rho}_{\psi', T}$ that is $\eps$-close to $\rho_{\psi', T}$ in trace distance.
\end{corollary}

Corollary \ref{cor:20} follows immediately from Theorem \ref{thm:isometry-to-unitary}. We will prove Theorem \ref{thm:isometry-to-unitary} in the next two sections.

The proof proceeds in two steps. The first step (\cref{sec:unitaryoracle1}) is to show that $\rho_{\psi, T}$ can be produced \emph{perfectly} with access to $T$ copies of $\ket{\psi}$ \emph{and} a certain auxiliary unitary oracle $C_{\ket{\psi}}$. The second step (\cref{sec:unitaryoracle2}) is to show that $C_{\ket{\psi}}$ can be simulated approximately using copies of $\ket{\psi}$. In Section \ref{sec:weak-simulation}, we justify why the weak notion of simulation that we achieve is sufficient to lift our separation results to be relative to the new unitary oracle.

\subsection{Weak simulation of the unitary oracle with a \texorpdfstring{``$\ket{\psi}$''}{''|psi>''}-controlled gate}\label{sec:simulation-unitary}
\label{sec:unitaryoracle1}
Let $\calA^{(\cdot)}$ be an algorithm that makes $T$ queries to $U_{\psi}$. Consider the auxiliary unitary oracle $C_{\ket{\psi}}$ that acts on two registers and performs a ``control-NOT'', controlled on the first register being $\ket{\psi}$. Formally, this is defined as follows:
\begin{align*}
C_{\ket{\psi}}\ket{\psi}\ket{b} &= \ket{\psi}\ket{b \oplus 1} \\
C_{\ket{\psi}}\ket{\phi}\ket{b} &= \ket{\phi} \ket{b} \text{ for any }\braket{\phi|\psi}=0.
\end{align*}
As before, recall that we can, without loss of generality, write the output of $\calA^{\upsi}$ as
\begin{equation*}
    \ket{\Psi_{\psi,T}} = B_T\upsi B_{T-1} \dots B_1 \upsi B_0 \ket{\sigma},
\end{equation*}
where $\ket{\sigma}$ is an arbitary quantum input to the algorithm $\calA^{\upsi}$.
And $B_0, \dots, B_T$ are some fixed unitareis that do not depend on $\ket{\psi}$. Then,
\begin{equation}
    \rho_{\psi, T} = \E_\alpha \left[ \ket{\Psi_{\alpha{\ket{\psi}}, T}}\bra{\Psi_{\alpha\ket{\psi}, T}} \right].
\end{equation}
We show that there is an algorithm $\wt{\cal{A}}$ that outputs exactly $\rho_{\psi, T}$, given access to $T$ copies of $\ket{\psi}$ as well as the unitary $C_{\ket{\psi}}$. 

The simulation algorithm $\wt{\cal{A}}$ will run $\cal{A}$ normally, except that, in order to simulate queries to $U_{\ket{\psi}}$, it will leverage a ``pool'' of $T$ copies of $\ket{\psi}$, and the ``control-NOT'' unitary $C_{\ket{\psi}}$.

%$\wt{\cal{A}}$ has no trouble applying the unitaries $B_i$. The crux is how to apply $U_{\ket{\psi}}$, or rather something whose action is faithful to the $U_{\alpha \ket{\psi}}$, averaged over $\alpha$. 

Very informally, the idea behind is the following. For each query that $\cal{A}$ makes to $U_{\ket{\psi}}$, we first check whether the query register is $\ket{0}$, $\ket{\psi}$ or a state orthogonal to it (we can do this with the assistance of $C_{\ket{\psi}}$). If it is $\ket{0}$, $\wt{\cal{A}}$ swaps it with a $\ket{\psi}$ from the pool, and vice versa. If it is orthogonal to both, $\wt{\cal{A}}$ applies the identity. In this way, the ``pool'' register can be viewed as counting the number of ``net'' queries made on a particular branch. This approach might seem suspicious at first as it entangles the query register with the ``pool''. In particular, the state of the simulation will be in a superposition of ``pools'' with a different number of $\ket{\psi}$. Moreover, note that, since $\ket{0}$ and $\ket{\psi}$ states are orthogonal, states representing ``pools'' with distinct numbers of $\ket{\psi}$ are also orthogonal to each other. Thus, tracing out the ``pool'' register results in a mixture of states, each corresponding to a different number of ``effective'' queries (here ``effective'' captures the fact that, for example, making two consecutive queries results in an identity, and so the number of effective queries would be zero -- this point of view is somewhat reminiscent of Zhandry's compressed oracle technique for recording queries~\cite{zhandry2019record}). Recall that we claimed to be able to achieve \emph{perfect} simulation: why would the traced out be exactly the original state output by ${\cal{A}}^{\upsi}$?

Recall that we are only hoping to achieve a simulation that is faithful \emph{on average over $\alpha$}. Then, the key insight is the following: while for a fixed $\alpha$, the state output by ${\cal{A}}^{\upsi}$ is in general a \emph{superposition} (rather than a mixture) over branches corresponding to a different number of ``effective'' 
queries, averaging over $\alpha$ causes the cross terms of the density matrix (corresponding to a different number of effective queries) to vanish. One nice way to see this is that the state output by ${\cal{A}}^{\upsi}$ can be viewed as as polynomial in $\alpha$, where the term of degree $i$ corresponds to the branches of the superposition with $i$ effective queries. The corresponding density matrix can also be thought of as a polynomial in $\alpha$, and the observation is that entries of the density matrix that have non-zero degree vanish when averaging over $\alpha$ (such terms are precisely the cross terms corresponding to branches with a different number of effective queries).

%each query that $\cal{A}$ makes to $U_{\ket{\psi}}$ we will make use of a pool  This might seem strange at first for several reasons: 

%Very roughly, when we try to simulate $U_{\psi}$, a natural idea is to decide whether it is $\ket{0}, \ket{\psi}$ or some orthogonal states (we can do this with assistance of $C_{\ket{\psi}}$). Then we can retrieve or deposit the state $\ket{\psi}$. Retrieving or depositing states from/to the pool will make the output entangled with the pool, but the key observation here is that the entanglement here only depends on the number of states used, so the reduced density matrix is exactly $\rho_\psi$. 

We now formally describe how $\wt{\cal{A}}$ simulates queries to $\upsi$. $\wt{\cal{A}}$ acts on the following registers:
\begin{itemize}
\item $\sfA$, consisting of $\sfA_1$ and $\sfA_2$. These are respectively the ``query'' and work registers of the original algorithm $\mathcal{A}$. In particular, $\sfA_1$ contains the state we wish to apply $\upsi$ to.
\item $\sfB$, which will store the pool of copies of (initially) $\ket{\psi}$. The auxiliary pool $\sfB$ is initialized as $\ket{\psi}^{\otimes T} \otimes \ket{0}^{\otimes T}$, and the algorithm can retrieve or deposit $\ket{\psi}$ from $\sfB$. We denote the $2T$ sub-registers of $\sfB$ as $\sfB_1, \ldots, \sfB_{2T}$.
\item $\sfC$, a ``counting'' register that is initialized as $\ket{0}$, and counts how many $\ket{\psi}$ have been ``borrowed'' from the pool. Register $\sfC$ is of dimension $2T+1$, and we denote its standard basis as $\{\ket{-T}, \ldots, \ket{0}, \ldots, \ket{T}\}$ (where a negative value means that we ``deposited'' more $\ket{\psi}$ than we have ``borrowed'').
\item $\sfD$, consisting of $\sfD_1, \sfD_2, \sfD_3$ is an additional control register.
\end{itemize} 
$\wt{\cal{A}}$ proceeds as follows:
\begin{itemize}
    \item[(i)] Apply $C_{\ket{0}}$ to $\sfA_1$ and $\sfD_1$, where $C_{\ket{0}}$ acts as follows: $C_{\ket{0}} \ket{0}_{\sfA_1} \ket{b}_{\sfD_1} = \ket{0}_{\sfA_1}\ket{b \oplus 1}_{\sfD_1}$, and $ C_{\ket{0}}\ket{i}_{\sfA_1} \ket{b}_{\sfD_1} = \ket{i}\ket{b}$ for all $i \neq 0$.
    \item[(ii)] Apply $C_{\ket{\psi}}$ to $\sfA_1$ and $\sfD_2$.
    \item[(iii)] Update the counter in $\sfC$ by subtracting the value in $\sfD_2$. %Formally, we perform a controlled-subtraction with $\sfD_2$ as control and $\sfC$ as target. 
    Formally, this subtraction is modulo $2T+1$ (with values represented in $\{-T,\ldots, T\}$) although our algorithm is such that a ``wrap around'' is never required.
    \item[(iv)] Compute the OR of $\sfD_1$ and $\sfD_2$ in $\sfD_3$.
    \item[(v)] Perform a ``controlled-SWAP'' on registers $\sfC$, $\sfD_3$, $\sfA_1$ and $\sfB$ that acts as follows on the standard basis: if $\sfD_3$ is $\ket{0}$, act as the identity; if $\sfD_3$ is $\ket{1}$ and $\sfC$ is $\ket{i}$, then swap the register $\sfA_1$ with $\sfB_{T-i}$.
    \item[(vi)] Add the value of $\sfD_1$ to the counter $\sfC$.
    \item[(vii)] ``Uncompute'' $\sfD_1, \sfD_2, \sfD_3$ (so that they return to zero): first, compute the OR of $\sfD_1$ and $\sfD_2$ in $\sfD_3$ (this uncomputes the OR that we performed previously); then apply $C_{\ket{0}}$ to $\sfA_1$ and $\sfD_2$, followed by $C_{\ket{\psi}}$ to $\sfA_1$ and $\sfD_1$ (note that we have reversed the role of the registers $\sfD_1$ and $\sfD_2$ here, since we have now swapped $\ket{0}$ and $\ket{\psi}$ in $\sfA_1$).
\end{itemize}
We will show that the reduced density matrix on $\sfA$ is exactly $\rho_{\psi, T}$. We start by noticing that the output of ${\cal{A}}^{\uapsi}$ can be viewed as a polynomial in $\alpha$ and $\alpha^{-1}$ of degree at most $T$.

\begin{lemma}\label{lem:unitary-output-poly}
   Let $\ket{\psi}$ be any state, and let $\calA^{(\cdot)}$ be any algorithm making $T$ queries to an oracle of the form $U_{\alpha \ket{\psi}}$ for $\alpha \in \mathbb{C}$ with $|\alpha| = 1$. Let $\ket{\Psi_{\alpha\ket{\psi}, T}}$ denote the output of ${\cal{A}}^{\uapsi}$. When $\ket{\psi}$ is fixed, the amplitudes of $\ket{\Psi_{\alpha\ket{\psi}, T}}$ are polynomials in $\alpha$ and $\alpha^{-1}$ of degree at most $T$. More precisely, there exist un-normalized states $\ket{\phi_i}$, such that
\begin{equation*}
\ket{\Psi_{\alpha\ket{\psi}, T}} = \sum_{i=-T}^T \alpha^i \ket{\phi_i}_\sfA \,.
\end{equation*}
\end{lemma}
\begin{proof}
We prove the lemma by induction on $T$. When $T=0$, the algorithm $\calA^{U_{\alpha\ket{\psi}}}$ does not call the unitary oracle, thus the output will be a fixed state $\ket{\phi_0}$.

Assume the proposition holds for some number $T-1$ of queries. Then, the state
\begin{equation*}
\ket{\Psi_{\alpha\ket{\psi}, T-1}} = B_{T-1}U_{\alpha \ket{\psi}} \dots U_{\alpha\ket{\psi}} B_0 \ket{0}
\end{equation*}
can be expressed as $\ket{\Psi_{\alpha\ket{\psi}, T-1}} = \sum_{i=-T+1}^{T-1}\alpha^i\ket{\phi_i}_\sfA$ for some un-normalized $\ket{\phi_i}$. We can decompose the states $\ket{\phi_i}$ as $\ket{\phi_i}_\sfA = a_i\ket{0}_{\sfA_1}\ket{\phi_{i,1}}_{\sfA_2} + b_i \ket{\psi}_{\sfA_1}\ket{\phi_{i,2}}_{\sfA_2} + c_i \ket{\phi_i^\perp}_\sfA$, for some $a_i, b_i, c_i \in \mathbb{C}$, and normalized states $\ket{\phi_{i,1}}$, $\ket{\phi_{i,2}}$, and $\ket{\phi_i^{\perp}}$, where $\braket{0|_{\sfA_1} \otimes \I_{\sfA_2} |\phi_i^{\perp}}_{\sfA_1\sfA_2} = \braket{\psi|_{\sfA_1} \otimes \I_{\sfA_2} |\phi_i^{\perp}}_{\sfA_1\sfA_2}  = 0$.
Then after applying $U_{\alpha\ket{\psi}}$, the state becomes
\ifnum\widemargin=0
\begin{align*}
    U_{\alpha\ket{\psi}} \ket{\Psi_{\alpha\ket{\psi}, T-1}} &= U_{\alpha\ket{\psi}} \sum_{i=-T+1}^{T-1} \alpha^i ( a_i \ket{0}_{\sfA_1}\ket{\phi_{i,1}}_{\sfA_2} + b_i \ket{\psi}_{\sfA_1}\ket{\phi_{i,2}}_{\sfA_2} + c_i\ket{\phi_i^\perp}_\sfA ) \\
    &= \sum_{i=-T+1}^{T-1} ( \alpha^{i+1}a_i\ket{\psi}_{\sfA_1}\ket{\phi_{i,1}}_{\sfA_2} + \alpha^{i-1} b_i\ket{0}_{\sfA_1}\ket{\phi_{i,2}}_{\sfA_2} + \alpha^i c_i \ket{\phi_i^\perp}_{\sfA} ) \\
    &= \sum_{i=-T}^T \alpha^i ( a_{i-1}\ket{\psi}_{\sfA_1}\ket{\phi_{i-1,1}}_{\sfA_2} + b_{i+1}\ket{0}_{\sfA_1}\ket{\phi_{i+1,2}}_{\sfA_2} + c_i\ket{\phi_i^\perp}_\sfA ),
\end{align*}
\else
\begin{align*}
    &U_{\alpha\ket{\psi}} \ket{\Psi_{\alpha\ket{\psi}, T-1}} \\&= U_{\alpha\ket{\psi}} \sum_{i=-T+1}^{T-1} \alpha^i ( a_i \ket{0}_{\sfA_1}\ket{\phi_{i,1}}_{\sfA_2} + b_i \ket{\psi}_{\sfA_1}\ket{\phi_{i,2}}_{\sfA_2} + c_i\ket{\phi_i^\perp}_\sfA ) \\
    &= \sum_{i=-T+1}^{T-1} ( \alpha^{i+1}a_i\ket{\psi}_{\sfA_1}\ket{\phi_{i,1}}_{\sfA_2} + \alpha^{i-1} b_i\ket{0}_{\sfA_1}\ket{\phi_{i,2}}_{\sfA_2} + \alpha^i c_i \ket{\phi_i^\perp}_{\sfA} ) \\
    &= \sum_{i=-T}^T \alpha^i ( a_{i-1}\ket{\psi}_{\sfA_1}\ket{\phi_{i-1,1}}_{\sfA_2} + b_{i+1}\ket{0}_{\sfA_1}\ket{\phi_{i+1,2}}_{\sfA_2} + c_i\ket{\phi_i^\perp}_\sfA ),
\end{align*}
\fi
where we set $a_i=b_i=c_i=0$ if $|i|\geq T$. Thus the state $U_{\alpha\ket{\psi}}\ket{\Psi_{\alpha\ket{\psi}, T-1}}$ can be written as polynomial in $\alpha$ and $\alpha^{-1}$ with degree less than $T$. The fixed unitary $B_T$ (which is independent of $\alpha$) does not alter this form. Thus, $\ket{\Psi_{\alpha\ket{\psi}, T}} = B_TU_{\alpha\ket{\psi}} \ket{\Psi_{\alpha\ket{\psi}, T-1}}$ has the desired form.
\end{proof}

\begin{lemma}
\label{lem:16}
    Let $\ket{\psi}$ be any state, and let $\calA^{(\cdot)}$ be any algorithm making $T$ queries to an oracle of the form $U_{\alpha \ket{\psi}}$ for $\alpha \in \mathbb{C}$ with $|\alpha| = 1$. Let the $\ket{\phi_i}$ be un-normalized states such that, for all $\alpha$, the output of $\calA^{U_{\alpha\ket{\psi}}}$is $$\ket{\Psi_{\alpha\ket{\psi}, T}} = \sum_{i=-T}^T \alpha^i \ket{\phi_i}_\sfA$$ (such $\ket{\phi_i}$ exist by \cref{lem:unitary-output-poly}). Then, the simulation algorithm $\wt{\calA}$ outputs the state
    %Suppose there exists $\ket{\phi_i}$ such that  \anote{there exist $\beta_i, \ket{\phi_i}$ such that, for all $\alpha$} $\calA^{\uapsi}$ outputs $\ket{\Psi_{\alpha\ket{\psi},T}} = \sum \beta_i \alpha^i \ket{\phi_i}_{\sfA}$. \anote{Are the $\ket{\phi_i}$ meant to be unit norm? Otherwise we could omit $\beta_i$.}Then, $\wt{\cal{A}}$ outputs the state
    \begin{equation*}
    \ket{\wt{\Psi}_{\ket{\psi},T}} = \sum_{i=-T}^{T} \ket{\phi_i}_\sfA \otimes ( \ket{\psi}^{\otimes (T-i)} \otimes \ket{0}^{\otimes (T+i)})_\sfB \otimes \ket{i}_\sfC \otimes \ket{0}_{\sfD} \,.
    \end{equation*}
    As an immediate corollary, the reduced density matrix of $\wt{\Psi}_{\ket{\psi},T}$ on $\sfA$ is $$\tr_{\sfB \sfC \sfD} \wt{\Psi}_{\ket{\psi},T} = \sum_{i=-T}^T \ket{\phi_i}\bra{\phi_i}\,,$$ which is exactly $\rho_{\psi, T} = \mathbb{E}_{\alpha} \Psi_{\alpha \ket{\psi}, T}$.
\end{lemma}
\begin{proof}
    We prove the lemma by induction on $T$. When $T=0$, the statement is trivial. Assume the statement is true for $T-1$, i.e.\ $\ket{\wt{\Psi}_{\ket{\psi}, T-1}}$. According to~\cref{lem:unitary-output-poly}, there exist $\beta_i, \ket{\phi_i}$ such that, for all $\alpha$, 
    $$\ket{\Psi_{\alpha\ket{\psi}, T-1}} = \sum_{i=-T+1}^{T-1}\alpha^i \ket{\phi_i}_\sfA \,.$$
    Then, by induction hypothesis, $$\ket{\wt{\Psi}_{\ket{\psi}, T-1}}  = \sum_{i=-T+1}^{T-1} \ket{\phi_i}_\sfA \otimes ( \ket{\psi}^{\otimes (T-i)} \otimes \ket{0}^{\otimes (T+i)})_\sfB \otimes \ket{i}_\sfC \otimes \ket{0}_\sfD \,.$$
Let $B_{T-1}$ be any fixed unitary, and let $\ket{\phi'_i} = B_{T_1} \ket{\phi_i}$. Then, we have, by linearity, that
    \begin{align*}
        B_{T-1} \ket{\Psi_{\alpha\ket{\psi}, T-1}} &= \sum_{i=-T+1}^{T-1} \alpha^i \ket{\phi'_i}_\sfA \\
        (B_{T-1} \otimes \I) \ket{\wt{\Psi}_{\ket{\psi}, T-1}} &= \sum_{i=-T+1}^{T-1} \ket{\phi'_i}_\sfA \otimes (\ket{\psi}^{\otimes (T-i)} \otimes \ket{0}^{\otimes (T+i)}_\sfB \otimes \ket{i}_\sfC \otimes \ket{0}_\sfD.
    \end{align*}
    We can decompose each $\ket{\phi_i'}$ as 
    \begin{equation*}
        \ket{\phi_i'} =  a_i \ket{0}_{\sfA_1}\ket{\phi_{i,1}}_{\sfA_2} + b_i \ket{\psi}_{\sfA_1}\ket{\phi_{i, 2}}_{\sfA_2} + c_i \ket{\phi_i^\perp}_{\sfA_1 \sfA_2} \,,
    \end{equation*}    
    where $\braket{\psi|_{\sfA_1} \otimes \I_{\sfA_2} |\phi_i^\perp}_{\sfA_1\sfA_2} = \braket{0|_{\sfA_1} \otimes \I_{\sfA_2}|\phi_i^\perp}_{\sfA_1\sfA_2} = 0$. Thus the state $\ket{\Psi_{\ket{\psi}, T}}$ can be expressed as
    \begin{align*}
        \ket{\Psi_{\ket{\psi}, T}} &= \upsi B_{T-1} \ket{\Psi_{\ket{\psi}, T-1}} \\
        &= \sum_{i=-T+1}^{T-1} \alpha^i(a_i \alpha \ket{\psi}_{\sfA_1}\ket{\phi_{i,1}}_{\sfA_2} + \alpha^{-1}b_i\ket{0}_{\sfA_1}\ket{\phi_{i,2}}_{\sfA_2} + c_i \ket{\phi_i^\perp}_{\sfA_1\sfA_2}) \\
        &= \sum_{i = -T}^{T} \alpha^i ( a_{i-1}\ket{\psi}_{\sfA_1}\ket{\phi_{i-1,1}}_{\sfA_2} + b_{i+1}\ket{0}_{\sfA_1}\ket{\phi_{i+1, 2}}_{\sfA_2} + c_i \ket{\phi_i^\perp}_{\sfA_1\sfA_2}) \,,
    \end{align*}
    where we set $a_i=b_i=c_i=0$ if $|i|\geq T$.
    On the other hand, we need to consider the output of the simulation on $(B_{T-1}\otimes \I)\ket{\wt{\Psi}_{\ket{\psi}, T-1}}$. After we apply $C_{\ket{0}}, C_{\ket{\psi}}$, the state turns into (we will abbreviate $\ket{\psi}^{\otimes (T-i)} \otimes \ket{0}^{\otimes (T+i)}$ as $\ket{\psi}^{\otimes (T-i)}$): 
    \ifnum\widemargin=0
    \begin{equation*}
        \sum_{i = -T}^{T}  \left( a_i\ket{0}_{\sfA_1}\ket{\phi_{i,1}}_{\sfA_2} \ket{100}_{\sfD} + b_i \ket{\psi}_{\sfA_1}\ket{\phi_{i,2}}_{\sfA_2}\ket{010}_\sfD + c_i \ket{\phi_i^\perp}_\sfA \ket{000}_\sfD\right) \otimes \ket{\psi}^{\otimes (T+i)}_\sfB \otimes \ket{i}_\sfC  \,.
    \end{equation*}
    \else
    \begin{align*}
       \sum_{i = -T+1}^{T-1}  &\left( a_i\ket{0}_{\sfA_1}\ket{\phi_{i,1}}_{\sfA_2} \ket{100}_{\sfD} + b_i \ket{\psi}_{\sfA_1}\ket{\phi_{i,2}}_{\sfA_2}\ket{010}_\sfD + c_i \ket{\phi_i^\perp}_\sfA \ket{000}_\sfD\right) \\&\otimes \ket{\psi}^{\otimes (T-i)}_\sfB \otimes \ket{i}_\sfC  
    \end{align*}
    \fi
    After updating the counter $\sfC$, we get
    \ifnum\widemargin=0
    \begin{equation*}
        \sum_{i =-T+1}^{T-1}  \left( a_i \ket{0}_{\sfA_1} \ket{\phi_{i,1}}_{\sfA_2} \ket{i}_\sfC \ket{100}_\sfD + b_i \ket{\psi}_{\sfA_1}\ket{\phi_{i,2}}_{\sfA_2}\ket{i+1}\ket{010}_\sfD + c_i\ket{\phi_i^\perp}_\sfA\ket{i}_\sfC\ket{000}_\sfD \right) \otimes \ket{\psi}_B^{\otimes (T-i)}.
    \end{equation*}
    \else
    \begin{align*}
        \sum_{i = -T+1}^{T-1} \Big( a_i \ket{0}_{\sfA_1} \ket{\phi_{i,1}}_{\sfA_2} \ket{i}_\sfC \ket{100}_\sfD +& b_i \ket{\psi}_{\sfA_1}\ket{\phi_{i,2}}_{\sfA_2}\ket{i-1}_{\sfC}\ket{010}_\sfD \\&+ c_i\ket{\phi_i^\perp}_\sfA\ket{i}_\sfC\ket{000}_\sfD \Big) \otimes \ket{\psi}_B^{\otimes (T-i)}.
    \end{align*}
    \fi
     After computing the OR of $\sfD_1$ and $\sfD_2$ in $\sfD_3$, we get
    \ifnum\widemargin=0
    \begin{equation*}
        \sum_{i =-T+1}^{T-1}  \left( a_i \ket{0}_{\sfA_1} \ket{\phi_{i,1}}_{\sfA_2} \ket{i}_\sfC \ket{101}_\sfD + b_i \ket{\psi}_{\sfA_1}\ket{\phi_{i,2}}_{\sfA_2}\ket{i+1}\ket{011}_\sfD + c_i\ket{\phi_i^\perp}_\sfA\ket{i}_\sfC\ket{000}_\sfD \right) \otimes \ket{\psi}_B^{\otimes (T-i)}.
    \end{equation*}
    \else
    \begin{align*}
        \sum_{i = -T+1}^{T-1} \Big( a_i \ket{0}_{\sfA_1} \ket{\phi_{i,1}}_{\sfA_2} \ket{i}_\sfC \ket{101}_\sfD +& b_i \ket{\psi}_{\sfA_1}\ket{\phi_{i,2}}_{\sfA_2}\ket{i-1}_{\sfC}\ket{011}_\sfD \\&+ c_i\ket{\phi_i^\perp}_\sfA\ket{i}_\sfC\ket{000}_\sfD \Big) \otimes \ket{\psi}_B^{\otimes (T-i)}.
    \end{align*}
    \fi
    After the ``controlled-SWAP'', the state becomes
    \begin{align*}
        \sum_{i = -T+1}^{T-1} & \Big( a_i \ket{\psi}_{\sfA_1} \ket{\phi_{i,1}}_{\sfA_2}\ket{\psi}_\sfB^{\otimes (T-i-1)} \ket{i}_\sfC \ket{101}_\sfD \\&+ b_i \ket{0}_{\sfA_1}\ket{\phi_{i,2}}_{\sfA_2}\ket{\psi}_\sfB^{\otimes (T-i+1)}\ket{i-1}\ket{011}_\sfD  + c_i\ket{\phi_i^\perp}_\sfA\ket{\psi}^{\otimes (T-i)}_\sfB\ket{i}_\sfC\ket{000}_\sfD \Big) \\
        &= \sum_{i=-T}^T \Big( a_{i-1}\ket{\psi}_{\sfA_1}\ket{\phi_{i-1,1}}_{\sfA_2}\ket{i-1}_\sfC\ket{101}_\sfD  \\&+b_{i+1}\ket{0}_{\sfA_1}\ket{\phi_{i,2}}_{\sfA_2}\ket{i+1}_\sfC\ket{011}_\sfD + c_i\ket{\phi_i^\perp}_\sfA \ket{i}_\sfC\ket{000}_\sfD \Big) \otimes \ket{\psi}_\sfB^{\otimes (T-i)}.
    \end{align*}
    After updating the counter and the uncomputation, the state becomes
    \ifnum\widemargin=0
    \begin{equation*}
        \sum_{i = -T}^{T} \left( a_{i-1}\ket{\psi}_{\sfA_1}\ket{\phi_{i-1,1}}_{\sfA_2} + b_{i+1}\ket{0}_{\sfA_1}\ket{\phi_{i+1,1}}_{\sfA_2} + c_i \ket{\phi_i^\perp}_\sfA \right) \otimes \ket{\psi}^{\otimes (T-i)}_\sfB \otimes \ket{i}_\sfC \otimes \ket{000}_\sfD
    \end{equation*}
    \else
    \begin{align*}
        \sum_{i=-T}^{T} \Big( a_{i-1}\ket{\psi}_{\sfA_1}\ket{\phi_{i-1,1}}_{\sfA_2} + b_{i+1}\ket{0}_{\sfA_1}&\ket{\phi_{i+1,2}}_{\sfA_2} + c_i \ket{\phi_i^\perp}_\sfA \Big) \\ &\otimes \ket{\psi}^{\otimes (T-i)}_\sfB \otimes \ket{i}_\sfC \otimes \ket{000}_\sfD
    \end{align*}
    \fi
    as desired.
\end{proof}

\subsection{Approximating the \texorpdfstring{``$\ket{\psi}$''}{''|psi>''}-controlled gate using copies of \texorpdfstring{$\ket{\psi}$}{|psi>}}
\label{sec:unitaryoracle2}
In~\cref{sec:simulation-unitary}, we have described how to produce $\rho_{\psi, T}$ perfectly with the assistance of the gate $C_{\ket{\psi}}$. %But we cannot implement $C_{\ket{\psi}}$ without knowing the description of $C_{\ket{\psi}}$. 
In this section, we show how to implement $C_{\ket{\psi}}$ approximately, with some precision $\epsilon$, using $O\left(\frac{1}{\epsilon^2}\right)$ copies of the state $\ket{\psi}$. \anote{clarified the number of copies}Notice that our simulation algorithm $\wt{\cal{A}}$ applies $C_{\ket{\psi}}$ $2T$ times in total. Using $O\left(\frac{T^2}{\epsilon^2}\right)$ copies of $\ket{\psi}$, we can implement one $C_{\ket{\psi}}$ to precision $\frac{\epsilon}{T}$. Thus, using $O\left(\frac{T^2}{\epsilon^2}\right)$ copies of $\ket{\psi}$, we can implement $2T$ $C_{\ket{\psi}}$ gates, each to precision $\frac{\epsilon}{T}$. By a triangle inequality, this suffices to approximate the output of $\wt{\cal{A}}$, and thus $\rho_{\psi, T}$, with an overall precision of $\epsilon$.

In order to simulate $C_{\ket{\psi}}$, we consider a generalized $N$-copy SWAP test. Assume we have $N$ copies of $\ket{\psi}$ at our disposal. We define a unitary that is meant to act on a state of the form $\ket{\phi}_\sfA \otimes \ket{\psi}^{N}_\sfB \otimes \ket{b}_\sfC$, as follows: controlled on the first $N+1$ registers being in the symmetric subspace, it flips the $\sfC$ register, otherwise it applies the identity. Formally,
\begin{equation*}
    C_{\swap} = \Pi_{\sfA\sfB}^{sym} \otimes X_\sfC + (I - \Pi_{\sfA\sfB}^{sym}) \otimes \I_\sfC.
\end{equation*}
We claim that the behavior of $C_{\swap}$ is inverse-polynomially close to $C_{\ket{\psi}}$. More formally,
\begin{lemma}
\label{lem:13}
    For any $\ket{\psi}$ and any state $\ket{\phi}_{\sfA\sfC\sfD}$, we have
    \begin{equation*}
        \norm{ \left((C_{\swap})_{\sfA\sfB\sfC} \otimes \I_\sfD \right)(\ket{\phi}_{\sfA\sfC\sfD} \otimes \ket{\psi}_\sfB^{\otimes N}) - \left((C_{\ket{\psi}})_{\sfA\sfC} \otimes \I_{\sfB\sfD} \right) (\ket{\phi}_{\sfA\sfC\sfD} \otimes \ket{\psi}_\sfB^{\otimes N})} \leq \frac{2}{\sqrt{N+1}}.
    \end{equation*}
\end{lemma}
\begin{proof}
    First we compute the action of $C_{\swap}$ more explicitly. The state $\ket{\psi}^{\otimes N+1}$ lies in $\Pi^{sym}$, so $C_{\swap}(\ket{\psi}_\sfA\ket{\psi}^{\otimes N}_\sfB\ket{b}_\sfC) = \ket{\psi}_\sfA\ket{\psi}^{\otimes N}_\sfB\ket{b \oplus 1}_\sfC$.

    Note that for any state $\ket{\chi}$ orthogonal to $\ket{\psi}$, we have \begin{align}
    \ket{\chi} \ket{\psi}^{\otimes N} &= \Pi^{sym} \ket{\chi} \ket{\psi}^{\otimes N} + (\I-\Pi^{sym}) \ket{\chi} \ket{\psi}^{\otimes N} \\
    &= \frac{1}{\sqrt{N+1}} \ket{\chi, \psi} + \frac{\sqrt{N}}{\sqrt{N+1}}\ket{\chi^{\perp}}  \,,
    \end{align}
    where
    \begin{equation*}
        \ket{\chi, \psi} = \frac{1}{\sqrt{N+1}} ( \ket{\chi\psi\dots\psi} + \dots + \ket{\psi\psi\dots\chi}),
    \end{equation*}
    and $\ket{\chi^\perp}$ is some state orthogonal to $\ket{\chi, \psi}$ that lies in the span of $\I-\Pi^{sym}$. So, we have
    \begin{equation}
    \label{eq:cswap}
        C_{\swap}(\ket{\chi}_\sfA\ket{\psi}^{\otimes N}_\sfB\ket{b}_\sfC) = \frac{1}{\sqrt{N+1}} \ket{\chi, \psi}_{\sfA\sfB}\ket{b\oplus 1}_{\sfC} + \frac{\sqrt{N}}{\sqrt{N+1}}\ket{\chi^\perp}\ket{b}_{\sfC} \,.
    \end{equation}
    Note also that, by a triangle inequality, $\norm{\ket{\chi^\perp} - \ket{\chi} \ket{\psi}^{\otimes N}} \leq \frac{2}{\sqrt{N+1}}$.
    
    We can express the state $\ket{\phi}_{\sfA\sfC\sfD}$ as $\ket{\phi}_{\sfA\sfC\sfD} = \sum_{i,b} \alpha_{i,b}\ket{\phi_i}_\sfA\ket{b}_\sfC \ket{\xi_{i,b}}_\sfD$, for some $\alpha_{i,b} \in \mathbb{C}$, and some normalized states $\ket{\xi_{i,b}}$ and $\ket{\phi_i}$ such that  $\ket{\phi_0}=\ket{\psi}$ and all of the $\ket{\phi_i}$ are orthogonal to each other. Then, the ideal state $\ket{\psi_{\sf{Ideal}}} = (C_{\ket{\psi}}\ket{\phi})\otimes \ket{\psi}^{\otimes N}$ can be expressed as
    \begin{equation*}
        \ket{\psi_{\sf{Ideal}}} = \sum_b \alpha_{0,b}\ket{\psi}_\sfA\ket{\psi}_{\sfB}^{\otimes N}\ket{b\oplus 1}_\sfC\ket{\xi_{0,b}}_\sfD + \sum_{i\neq 0, b} \alpha_{i,b}\ket{\phi_i}_\sfA\ket{\psi}_{\sfB}^{\otimes N}\ket{b}_\sfC\ket{\xi_{i,b}}_\sfD.
    \end{equation*}
    On the other hand, the real state $\ket{\psi_{\sf{Real}}} =
 C_\swap(\ket{\phi}\ket{\psi}^{\otimes N})$ can be expressed as
    \begin{align*}
        \ket{\psi_{\sf{Real}}} = &\sum_b \alpha_{0,b} \ket{\psi}_\sfA  \ket{\psi}_\sfB^{\otimes N}\ket{b\oplus 1}_\sfC \ket{\xi_{0,b}}_\sfD \\&+ \sum_{i\neq 0, b} \alpha_{i,b} \left(\frac{1}{\sqrt{N+1}} \ket{\phi_i, \psi}_{\sfA\sfB}\ket{b\oplus 1}_\sfC + \frac{\sqrt{N}}{\sqrt{N+1}}\ket{\phi_i^\perp}_{\sfA\sfB}\ket{b}_\sfC \right) \ket{\xi_{i,b}}\,,
    \end{align*}
    where, when writing $\ket{\phi_i, \psi}$ and $\ket{\phi_i^\perp}$, we are using the notation introduced earlier for $\ket{\chi, \psi}$ and $\ket{\chi^{\perp}}$.
    Thus we have
    \ifnum\widemargin=0
    \begin{align*}
        \norm{\ket{\psi_{\sf{Real}}} - \ket{\psi_{\sf{Ideal}}}} &\leq  \bigg\|\sum_{i\neq 0, b} \alpha_{i, b} \Big( \frac{1}{\sqrt{N+1}}\ket{\phi_i,\psi}_{\sfA\sfB}\ket{b\oplus 1}_\sfC\ket{\xi_{i,b}}_\sfD \\&\qquad\qquad\qquad+ \Big( \frac{\sqrt{N}}{\sqrt{N+1}}\ket{\phi_i^\perp} - \ket{\phi_i}\ket{\psi}^{\otimes N} \ \Big)_{\sfA\sfB}\ket{b}_\sfC \ket{\xi_{i,b}}_\sfD \Big)\bigg\| \\
        &\leq  \norm{ \sum_{i\neq 0,b} \frac{\alpha_{i,b}}{\sqrt{N+1}}\ket{\phi_i,\psi}_{\sfA\sfB}\ket{b\oplus 1}_\sfC\ket{\xi_{i,b}}_\sfD} + \norm{ \sum_{i\neq 0, b} \frac{\alpha_{i,b}}{\sqrt{N+1}}\ket{\phi_i, \psi}_{\sfA\sfB}\ket{b}_\sfC\ket{\xi_{i,b}}_\sfD} \\
        &= \frac{2}{\sqrt{N+1}}\norm{ \sum_{i\neq 0, b} \alpha_{i,b}\ket{\phi_i, \psi}_{\sfA\sfB}\ket{b}_\sfC\ket{\xi_{i,b}}_\sfD} \\
        &= \frac{2}{\sqrt{N+1}} \sqrt{\sum_{i\neq 0,b} \alpha_{i,b}^2} \leq \frac{2}{\sqrt{N+1}}
    \end{align*}
    \else
    \begin{align*}
            &\norm{\ket{\psi_{\sf{Real}}} - \ket{\psi_{\sf{Ideal}}}} \\&\leq  \Bigg\|\sum_{i\neq 0, b} \alpha_{i, b} \Big( \frac{1}{\sqrt{N+1}}\ket{\phi_i,\psi}_{\sfA\sfB}\ket{b\oplus 1}_\sfC\ket{\xi_{i,b}}_\sfD \\&\qquad\qquad\qquad+ \Big( \frac{\sqrt{N}}{\sqrt{N+1}}\ket{\phi_i^\perp} - \ket{\phi_i}\ket{\psi}^{\otimes N} \ \Big)_{\sfA\sfB}\ket{b}_\sfC \ket{\xi_{i,b}}_\sfD \Big)\Bigg\| \\
        &\leq  \norm{ \sum_{i\neq 0,b} \frac{\alpha_{i,b}}{\sqrt{N+1}}\ket{\phi_i,\psi}_{\sfA\sfB}\ket{b\oplus 1}_\sfC\ket{\xi_{i,b}}_\sfD} \nonumber\\&\qquad\qquad\qquad+ \norm{ \sum_{i\neq 0, b} \frac{\alpha_{i,b}}{\sqrt{N+1}}\ket{\phi_i, \psi}_{\sfA\sfB}\ket{b}_\sfC\ket{\xi_{i,b}}_\sfD} \\
        &= \frac{2}{\sqrt{N+1}}\norm{ \sum_{i\neq 0, b} \alpha_{i,b}\ket{\phi_i, \psi}_{\sfA\sfB}\ket{b}_\sfC\ket{\xi_{i,b}}_\sfD} \\
        &= \frac{2}{\sqrt{N+1}} \sqrt{\sum_{i\neq 0,b} \alpha_{i,b}^2} \leq \frac{2}{\sqrt{N+1}} \,,
    \end{align*}
    \fi
    where the second inequality follows from \eqref{eq:cswap}, the fact that $\frac{\sqrt{N}}{\sqrt{N+1}}\ket{\phi_i^\perp}$ is the projection of $\ket{\phi_i}$ onto $I-\Pi^{sym}_{\sfA\sfB}$. In more detail,
    \begin{align*}
        \ket{\phi_i}\ket{\psi}^{\otimes N} - \frac{\sqrt{N}}{\sqrt{N+1}}\ket{\phi_i^\perp} &= \ket{\phi_i}\ket{\psi}^{\otimes N} - (I-\Pi^{sym}_{\sfA\sfB})\ket{\phi_i}\ket{\psi}^{\otimes N} \\&= \Pi_{\sfA\sfB}^{sym}\ket{\phi_i}\ket{\psi}^{\otimes N} = \frac{1}{\sqrt{N+1}}\ket{\phi_i, \psi},
    \end{align*}
    and combined with a triangle inequality.

Together, \cref{lem:16} and \cref{lem:13} conclude the proof of \cref{thm:isometry-to-unitary}, and hence of \cref{cor:20}.
\end{proof}

\subsection{Weak simulation of the unitary oracle suffices to lift our separation results}\label{sec:weak-simulation}
In this section, we show that a weak simulation of the unitary oracle (as in Corollary \ref{cor:20}) suffices to establish the desired ``lifting'' result: a separation of $\oprs$ and $\prs$ relative to the CHRS oracle, which gives out copies of a state $\ket{\psi}$ sampled from the Haar measure (and possibly relative to some additional arbitrary unitary oracle $\mathcal{O}$), holds also relative to the unitary oracle $U_{\ket{\psi}}$, where $\ket{\psi}$ is sampled from the Haar measure (and the same unitary oracle $\mathcal{O}$)\footnote{Technically, the CHRS oracle consists of one state for each size (as described in \cref{sec:1prs_CHRS}), but the argument in this section applies just the same, since all of these states are sampled independently. The number of copies required to weakly simulate with precision $\epsilon$ is still $O(\frac{T^2}{\epsilon^2})$ where $T$ is now the total number of queries to unitaries $U_{\ket{\psi_m}}$ made by the algorithm, for states $\ket{\psi_m}$ possibly of different sizes.}. We proceed in two steps: 
\begin{itemize}
    \item[(i)] We first show that if a primitive (with a security game consisting of a single-round, e.g. a $\oprs$ or $\efi$) exists relative to the CHRS oracle (or, in fact, relative to any distribution over states that is ``global-phase invariant'', as defined in Definition~\ref{def:global-phase-invariant} below), then it also exists relative to a corresponding unitary oracle.
    \item[(ii)] Conversely, we show that $\prs$, with $\omega(\log n)$ output length, do not exist relative to the unitary oracle (induced by the CHRS oracle), since we can still carry out (a suitably modified version of) the OR lemma attack on $\prs$ that we described in Section~\ref{sec:oralce-sep}.
\end{itemize}
Now, for step (i), we start by defining the notion of a ``global-phase invariant distribution''. 
\begin{definition}
\label{def:global-phase-invariant}
    A distribution $\mathcal{D}$ over quantum states is said to be ``global-phase invariant'' if the following distribution over states is identical to $\mathcal{D}$, \emph{even up to global phases}: sample $\ket{\psi} \leftarrow \mathcal{D}$ and a uniformly random phase $\alpha$; output $\alpha \ket{\psi}$.
\end{definition}

As an example, the Haar measure is clearly global-phase invariant. However, for example, a distribution that outputs $\ket{0}$ with probability $\frac12$ and $\ket{1}$ with probability $\frac12$ is not, since almost all states of the form $\alpha \ket{0}$ are different from $\ket{0}$, when the global phase $\alpha$ is taken into consideration. It might seem strange to consider global phases, but the point is that some of the distributions we are considering are over \emph{unitaries} of the form $U_{\alpha \ket{\psi}}$, for which the ``global'' phase $\alpha$ gives rise to unitaries that are actually distinct. We remark that the notion of global-phase invariance is reminiscent of the notion of ``phase-invariance'' introduced by Zhandry in \cite{zhandry2024space}. The crucial difference is that here we consider a \emph{global} phase, rather than a \emph{relative} phase. 

We will make use of the following corollary of our previous weak simulation result from Section~\ref{sec:unitaryoracle0}.
\begin{corollary}
\label{cor:21}
For an $n$-qubit state $\ket{\psi}$, define the $(n+1)$-qubit state $\ket{\psi'} = \ket{\psi} \otimes \ket{1}$. Let $U_{\ket{\psi'}}$ be the corresponding $(n+1)$-qubit unitary defined in \cref{sec:unitaryoracle0}. Let $\epsilon >0$, and $T\in \mathbb{N}$. Let $\xi$ any map from $n$-qubit states to $m$-qubit states such that $\xi(\ket{\psi}) = \xi(\alpha\ket{\psi})$ for all $\alpha$ such that $|\alpha| = 1$. Then, let $\mathcal{D}$ be any global-phase invariant distribution over $n$-qubit states. For any $T$-query oracle algorithm $\calA^{(\cdot)}$ taking as input an $m$-qubit state, there is an algorithm $\wt{\cal{A}}$ such that:
$$\Big\| \mathbb{E}_{\ket{\psi} \leftarrow \mathcal{D}}\calA^{U_{\ket{\psi'}}}\Big( \xi(\ket{\psi}) \Big)  - \mathbb{E}_{\ket{\psi} \leftarrow \mathcal{D}} \wt{\cal{A}}\Big(\ket{\psi}^{\otimes O(\frac{T^2}{\epsilon^2})},\xi(\ket{\psi})\Big)\Big\| \leq \epsilon \,.$$
\end{corollary}
In the above corollary, the function $\xi$ captures the fact that the input to $\mathcal{A}$ can depend arbitrarily on $\ket{\psi}$. The outputs of the two algorithms are mixed states (and the norm is the trace norm).

\begin{proof}[Proof of \cref{cor:21}]
The proof is straightforward, and is a consequence of \cref{cor:20}. We have the following:
\begin{align}
&\Big\| \mathbb{E}_{\ket{\psi} \leftarrow \mathcal{D}}\calA^{U_{\ket{\psi'}}}\Big( \xi(\ket{\psi}) \Big)  - \mathbb{E}_{\ket{\psi} \leftarrow \mathcal{D}} \wt{\cal{A}}\Big(\ket{\psi}^{\otimes O(\frac{T^2}{\epsilon^2})},\xi(\ket{\psi})\Big)\Big\| \nonumber\\
=& \Big\|   \mathbb{E}_{\substack{\ket{\psi} \leftarrow \mathcal{D} \\ \alpha: |\alpha| = 1} } \calA^{U_{\alpha\ket{\psi'}}}\Big( \xi(\alpha\ket{\psi}) \Big) -  \mathbb{E}_{\substack{\ket{\psi} \leftarrow \mathcal{D} \\ \alpha: |\alpha| = 1} }\wt{\cal{A}}\Big(\ket{\psi}^{\otimes O(\frac{T^2}{\epsilon^2})}, \xi(\alpha\ket{\psi})\Big)   \Big\| \label{eq:200}\\
=&\Big\|  \mathbb{E}_{\ket{\psi} \leftarrow \mathcal{D}} \Big( \mathbb{E}_{ \alpha: |\alpha| = 1} \calA^{U_{\alpha\ket{\psi'}}}\Big( \xi(\ket{\psi}) \Big) - \wt{\cal{A}}\Big(\ket{\psi}^{\otimes O(\frac{T^2}{\epsilon^2})}, \xi(\ket{\psi})\Big) \Big)  \Big\| \nonumber\\
\leq& \, \mathbb{E}_{\ket{\psi} \leftarrow \mathcal{D}} \Big\|    \mathbb{E}_{ \alpha: |\alpha| = 1} \calA^{U_{\alpha\ket{\psi'}}}\Big( \xi(\ket{\psi}) \Big) - \wt{\cal{A}}\Big(\ket{\psi}^{\otimes O(\frac{T^2}{\epsilon^2})}, \xi(\ket{\psi})\Big)  \Big\| \nonumber\\
\leq& \epsilon \,, \nonumber
\end{align}
where the first equality follows from the fact that $\calD$ is global-phase invariant, and the second equality just interchanges the order of expectations and uses the fact that $\xi(\alpha\ket{\psi})=\xi(\ket{\psi})$ for all $\alpha, \ket{\psi}$. The first inequality is an application of~\cref{cor:20}.

\end{proof}

% \anote{Short, will fill in.}\bnote{Write a brief proof for the fact} The key observation here is that in the proof of simulation of $U_{\ket{\psi}}$, we set $\ket{\psi}$ as a fixed state (up to a global phase), and we show that the simulation is actually $\epsilon$-precise up for any input. So no matter $\calA$ takes any $\xi(\ket{\psi})$ as input, the error of the simulation is bounded by $\epsilon$. \anote{yes, I can write the short chain of inequalities.}

We are now ready to prove the first half of our lifting result (step (i))\footnote{The following theorem involves distributions over oracles. However, one can identify fixed oracles relative to which the same separations hold, by a similar argument as in \ifnum\shortver=0\cref{ssec:ampli-random}\else\cref{sec:borel-cantelli}\fi.}.
\begin{theorem}
\label{thm:16}
    Let $\calP$ be a primitive with a security game consisting of a single round. Suppose $\mathcal{P}$ exists relative to an oracle $\mathcal{O}$ that provides copies of a (fixed) state $\ket{\psi} \leftarrow \mathcal{D}$, where $\mathcal{D}$ is a ``global-phase invariant'' distribution. Then, $\mathcal{P}$ also exists relative to an oracle $\mathcal{U}$ that applies $U_{\ket{\psi'}}$, for a state $\ket{\psi} \leftarrow \mathcal{D}$, where $\ket{\psi'} = \ket{\psi}\ket{1}$.
\end{theorem}

\begin{proof}
Let $C$ be a secure construction of $\mathcal{P}$ relative to $\mathcal{O}$.

First, notice that any algorithm $\mathcal{A}$ that queries $\mathcal{O}$ can be replicated perfectly by querying the corresponding unitary oracle (making queries on $\ket{0}$ each time a copy is required). Thus, if a primitive exists relative to $\mathcal{O}$, any guarantee pertaining ``honest'' algorithms will hold verbatim (e.g.\ any ``correctness'' guarantee).

What about security? Suppose for a contradiction there is an adversary $\mathsf{Adv}^{(\cdot)}$ that breaks security of $C$ relative to $\mathcal{U}$.

Let $\langle \mathsf{Ch}^{U_{\ket{\psi'}}}, \mathsf{Adv}^{U_{\ket{\psi'}}} \rangle$ denote the interaction between the challenger $\mathsf{Ch}$ and adversary $\mathsf{Adv}$ in the security game for construction $C$, when $\mathcal{U}$ applies $U_{\ket{\psi'}}$ for some $\ket{\psi}$. Here $\mathsf{Ch}^{U_{\ket{\psi'}}}$ is identical to the challenger relative to the CHRS oracle (it simply queries $U_{\ket{\psi'}}$ whenever the original challenger would have requested a copy of $\ket{\psi}$).

Assume for simplicity that the security game has a ``threshold'' of $\frac12$ (this does not change the argument), i.e.\ security requires that no bounded adversary can win with probability non-negligibly greater than $\frac12$. Then, by the hypothesis that $\mathsf{Adv}$ breaks security of $C$, we have that $$\mathbb{E}_{\ket{\psi} \leftarrow \mathcal{D}}\Pr[\langle \mathsf{Ch}^{U_{\ket{\psi'}}} , \mathsf{Adv}^{U_{\ket{\psi'}}} \rangle = 1] = \frac12 + \textsf{non-negl}(\lambda) \,,$$
%We can view this interaction as an oracle algorithm $\mathcal{A}^{(\cdot)} :=  \langle \mathsf{Ch}, \mathsf{Adv} \rangle^{(\cdot)}$, where $\langle \mathsf{Ch}, \mathsf{Adv} \rangle^{\upsi} = \langle \mathsf{Ch}^{\upsi}, \mathsf{Adv}^{\upsi} \rangle$, for any oracle $U_{\ket{\psi}}$. Assume for simplicity that the security game has a ``threshold'' of $\frac12$ (this does not change the argument), i.e.\ security requires that no bounded adversary can win with probability non-negligibly greater than $\frac12$. Then, by the hypothesis that $\mathsf{Adv}$ breaks security of $C$, we have that $$\mathbb{E}_{\ket{\psi}}\Pr[\mathcal{A}^{\upsi} = 1] = \frac12 + \textsf{non-negl}(\lambda) \,,$$
where $\lambda$ is the security parameter.  Now, let $T(\lambda)$ be the number of queries made by $\mathsf{Adv}$, and let $\epsilon(n)$ be a sufficiently small inverse polynomial in $\lambda$. 

Now, by \cref{cor:21}, there exists a simulator $\wt{\mathsf{Adv}}$ that only uses $t = O(\frac{T^2}{\epsilon^2})$ copies of $\ket{\psi}$, and satisfies 
$$\Big\| \mathbb{E}_{\ket{\psi} \leftarrow \mathcal{D}}\mathsf{Adv}^{U_{\ket{\psi'}}}\Big( \xi(\ket{\psi}) \Big)  - \mathbb{E}_{\ket{\psi} \leftarrow \mathcal{D}}\wt{\mathsf{Adv}}\Big( \ket{\psi}^{\otimes t}, \xi(\ket{\psi})\Big)\Big\| \leq \epsilon \,,$$
where $\xi$ is an arbitrary function as in \cref{cor:21}.
When $\epsilon$ is taken to be sufficiently small, and $\xi$ is taken to be precisely the challenger's message, we have
$$\mathbb{E}_{\ket{\psi} \leftarrow \mathcal{D}}\Pr[\langle \mathsf{Ch}^{U_{\ket{\psi'}}} , \wt{\mathsf{Adv}}\big(\ket{\psi}^{\otimes t}\big) \rangle = 1] = \frac12 + \textsf{non-negl}(\lambda) \,,$$
for some possibly different non-negligible function.

Finally, recall that $\mathsf{Ch}^{U_{\ket{\psi'}}}$ is identical to the challenger for the original construction $C$ relative to $\mathcal{O}$. So, $\wt{\mathsf{Adv}}$ breaks the security of $C$, which is a contradiction.

\end{proof}

Moving on to step (ii), we will show that $\prs$ do not exist relative to the unitary oracle (induced by the CHRS oracle), since the OR lemma attack can be lifted. 
\begin{theorem}
\label{thm:prs-nonexistence}
  $\prs$ with $\omega(\log n)$ output length do not exist relative to the paramterized unitary oracle induced by the CHRS oracle (as described in Section~\ref{sec:unitaryoracle0}).
\end{theorem}
One might think that a generic lifting theorem, such as Theorem~\ref{thm:16}, which lifts any ``existence'' results from a state to a unitary model, might also hold for lifting impossibility results (and that, as a consequence, one need not think about lifting a specific attack). However, this is not the case due to the following important subtlety. The natural argument would go as follows. By hypothesis, the primitive does not exist in the state model. Now, consider any candidate construction in the unitary model. We would like to assert that the attack in the state oracle can be lifted to the unitary model. Can we do so? Certainly each copy of the oracle state used by the attacker can be simulated perfectly with one query to the unitary oracle. However, the attacker is only guaranteed to break constructions in the state model (this may in fact even be a syntactic requirement). At first, this does not seem like an important issue because one can obtain a construction relative to the state oracle by simulating the construction relative to the unitary oracle. However, since the simulation is not perfect, the resulting construction may fail to satisfy correctness requirements of the primitive. Thus, we are no longer guaranteed the existence of an attacker that breaks this (invalid) construction.\footnote{In an earlier version of our paper, we identified this subtlety, and proved a generic lifting theorem for non-existence of primitives that do \emph{not} have a correctness condition. However, we erroneously claimed that $\prs$ fall into this category, i.e.\ they do not have any correctness condition. This is false, since PRS do have a correctness condition, namely that the output of the generator should be a pure state. We thank Mark Zhandry for pointing out this error to us.} 

In the rest of this section, we will show how to lift the OR-lemma attack on PRS in the CHRS model to the unitary model, i.e.\ prove Theorem~\ref{thm:prs-nonexistence}. We show that, while in the CHRS model we can only simulate a ``global-phase twirled'' version of the corresponding unitary oracle, this is enough to lift the attack.

%Given a $\prs$ in the unitary model, we can only simulate a global-phase twirling of the $\prs$ in the CHRS model. But in fact, the global-phase twirling is enough for the $\prs$ attack, as shown later in this section.

The attack in the unitary model is a slight modification of the attack in~\cref{sec:oracle-sep}. Let us recall the attack in~\cref{sec:oracle-sep} first. Let $\gen_k$ be the generation unitary of the $\prs$ when the secret key is $k$, where $\gen_k$ is meant to act on multiple copies of the CHRS state $\ket{\psi}$. For each $k$, the adversary from \cref{sec:oralce-sep} takes sufficiently many copies of $\ket{\psi}$, applies $\gen_k$ on them (thereby generating sufficiently many copies of the PRS state on seed $k$). Then the adversary applies swap tests between the generated copies and the states received from the challenger (which are either copies of a PRS state or copies of a Haar random state). %$\{\Pi_{sym}, I-\Pi_{sym}\}$ on it. 

Now, in the unitary oracle model, let $U_{\ket{\psi}}$ be the unitary oracle. The generation algorithm $\gen_k$ makes queries to the unitary oracle $U_{\ket{\psi}}$, so we will denote this as $\gen^{U_{\psi}}_k$. According to~\cref{thm:isometry-to-unitary}, given polynomial many copies of $\ket{\psi}$, there is an efficient unitary $\widetilde{\gen}_k$ that takes as input $\frac{4T^3n} {\epsilon^2}$ copies of $\ket{\psi}$, and outputs a state that is $\epsilon$-close to the ``global-phase twirled'' state $\rho_{k,\psi} = \E_{\alpha} \gen_k^{U_{\alpha\ket{\psi}}}\ket{0}\bra{0}(\gen_k^{U_{\alpha\ket{\psi}}})^\dagger$. The key observation is that, while $\rho_{k,\psi}$ is not in general close to $\gen^{U_{\ket{\psi}}}_k \ket{0}$, it has an inverse polynomial overlap with  $\gen^{U_{\ket{\psi}}}_k\ket{0}\bra{0} (\gen^{U_{\ket{\psi}}}_k)^\dagger$.

\begin{lemma}\label{lem:state-simulation-overlap}
    Assume that $\gen_k$ makes at most $T$ queries to $U_{\psi}$, where $\ket{\psi}$ is the $m$-qubit CHRS state, then except with probability at most $\exp(-\Omega(2^m/T^2))$ over $\ket{\psi}$ sampled from the Haar measure, $\braket{0|(\gen^{U_{\ket{\psi}}}_k)^{\dagger} \rho_{k,\psi}\gen^{U_{\ket{\psi}}}_k|0} \geq \frac{1}{3T}$.
\end{lemma}
\begin{proof}
    First, we will show the average of the overlap $\braket{0|\gen_k^{\dagger} \rho_{k,\psi} \gen_k|0}$ is at least $\frac{1}{2T+1}$. In fact,
    \begin{align*}
        \E_{\psi} \braket{0|(\gen^{U_{\ket{\psi}}}_k)^{\dagger} \rho_{k, \psi}\gen_k^{U_{\ket{\psi}}}|0} &= \E_{\psi, \alpha} \braket{0|(\gen_k^{U_{\alpha\ket{\psi}}})^{\dagger}\rho_{k,\psi}\gen_k^{U_{\alpha\ket{\psi}}}|0} \\
        &= \E_{\psi} \tr \rho_{k, \psi}^2 \\
        &\geq \E_\psi \frac{1}{2T+1} = \frac{1}{2T+1},
    \end{align*}
    where the last inequality stems from the fact that $\rho_{k, \psi}$ is of rank at most $2T+1$ (Recall the proof in~\cref{sec:unitaryoracle1} that the counting register $\sfC$ ranges from $-T$ to $T$ so there are at most $2T+1$ branches in the purification of $\rho_{k, \psi}$) and Cauchy-Schwarz inequality.

    Thus, according to~\cref{lem:levy-lemma}, except with probability at most $\exp(-O(2^m/T^2))$\anote{Seems like we can pick $n = \log T$}\bnote{But it seems $T$ can be arbitrarily large polynomial}, the overlap $\braket{0|\gen_k^{\dagger}|\rho_{k,\psi}|\gen_k|0}$ is at least $1/3T$.
\end{proof}

Now, let $\widetilde{\gen}_k$ be an efficient unitary that takes as input $L = 4 T^3n/\epsilon^2$ copies of $\ket{\psi}$ and outputs a state that is $\epsilon$-close to $\rho_{k, \psi}$ (such a unitary exists by Lemma~\cref{lem:13}).

To lift the OR lemma attack, we define the new OR lemma projectors $\widetilde{\Pi}_k$ as follows:
\begin{align*}
    \widetilde{\Pi}_k = \left(\mathop{\bigotimes}_{i=1}^{Tn}\Big((\widetilde{\gen}_k^\dagger)_{\mathsf{A}_i \mathsf{B}_i} \otimes \I_{\mathsf{C}_i} \Big)\right) \Big((\Pi^{sym}_{\geq Tn/2+n/6})_{\mathsf{AC}} \otimes \I_{\mathsf{B}} \Big) \left(\mathop{\bigotimes}_{i=1}^{Tn}\Big((\widetilde{\gen}_k)_{\mathsf{A}_i\mathsf{B}_i} \otimes \I_{\mathsf{C}_i}\Big)\right)\,.
\end{align*}
Here, $\mathsf{A}_i$ is the $i$-th sub-register of $\mathsf{A}$, and similarly for $\mathsf{B}_i$ and $\mathsf{C}_i$. $\mathsf{C}_i$ contains the $i$-th copy of the challenge state, $\mathsf{A}_i$ is the output register of $\widetilde{\gen}_k$ (which is of the same length as $\mathsf{C}_i$), and $\mathsf{B}_i$ is an auxiliary register. $\Pi^{sym}_{\geq Tn/2+n/6}$ is the projector onto the subspace that is spanned by states that lie in the symmetric subspace on at least $Tn/2+n/6$ of all $Tn$ pairs of registers $\sfA_i$ and $\sfC_i$. 

These projectors replace the projectors $\Pi_k$ from~\cref{eq:projectors_attack}. In words, the projector $\widetilde{\Pi}_k$ corresponds to performing $Tn$ copies of a SWAP test between $Tn$ copies of a challenge state (PRS or Haar) state and $Tn$ copies of states produced by the simulated generation procedure $\widetilde{\gen}_k$. The projector ``accepts'' if slightly more than half of the SWAP tests accept.

Then, by~\cref{lem:state-simulation-overlap}, $(\ket{\phi_k}^{\otimes Tn})_{\mathsf{C}} \otimes (\ket{\psi}^{\otimes LTn})_{\mathsf{AB}}$\footnote{Register $\mathsf{AB}$ may contain additional auxiliary registers initialized in the zero state, but we omit writing them for simplicity.} has constant overlap with $\widetilde{\Pi}_k$, while $(\ket{\phi}^{\otimes Tn})_{\mathsf{C}} \otimes (\ket{\psi}^{\otimes LTn})_{\mathsf{AB}}$ has exponentially small overlap with $\widetilde{\Pi}_k$, with overwhelming probability over $\ket{\psi}$. We can thus run the OR lemma algorithm to break the $\prs$ construction.

\fi
\ifnum\llncs=1
    \bibliographystyle{splncs04}
\fi
\ifnum \preprint=1
    \bibliographystyle{alphaabbrurldoieprint}
\fi
\bibliography{ref,main}
\ifnum\proceeding=0
\appendix
\ifnum\shortver=1

\fi
\ifnum\shortver=0
    \section{Quantum OR lemma algorithm using a QPSPACE machine}
\label{sec:OR_leamma_unitary}

In this appendix, we justify the claim made in Remark~\ref{rem:or_lemma_qpspace} that we can implement the quantum OR lemma algorithm using a unitary QPSPACE machine, i.e.\ using a uniform family of unitary circuits, indexed by $n$ (the number of qubits of the state $\rho$), that utilizes only $\text{poly}(n)$ qubits of space. Figure~\ref{fig:alg1} describes the quantum OR lemma algorithm from \cite[Algorithm 1]{HLM17}.

\begin{figure}[!htb]
\centering
\begin{mdframed}[userdefinedwidth=0.8\textwidth, align=center]
The quantum OR algorithm, taken verbatim from \cite[Algorithm 1]{HLM17}:
\begin{enumerate}
\item Create the state $\rho \otimes \ket{0}\bra{0}^{\otimes m}$. 
\item Repeat $N$ times or until the algorithm accepts:
\begin{enumerate}
    \item Perform the projective measurement $\{\Pi,I-\Pi\}$. If the first result is returned, accept (and terminate).
    \item Perform the projective measurement $\{\Delta,I-\Delta\}$. If the second result is returned, accept (and terminate).
\end{enumerate}
\item Reject.
\end{enumerate}
\end{mdframed}
\caption{Algorithm 1}
\label{fig:alg1}
\end{figure}

The definitions of the projectors $\Pi$, $\Delta$, and $m$ are omitted; the only relevant detail (which is straightforward to verify) is that, in our setting, these measurements can be implemented using a polynomial-space quantum circuit. 

Note that the algorithm above uses measurements. We wish to use unitary gates only. The simplest approach to deal with this is to use delayed measurements: applying a CNOT gate to a fresh qubit, and measuring only the resulting qubit at the very end. Unfortunately, since the number of measurements is exponential, this requires exponential space, for all the intermediate results.

We show how the algorithm can be implemented coherently by a unitary QPSPACE machine, by introducing two additional algorithms, both of which have the same acceptance probability.

In Algorithm 1, the algorithm may accept and terminate early in steps 2(a) and 2(b). In Algorithm 2 below, we simplify the algorithm, without changing the worst-case running time.  The only difference is that there is no early termination.

\begin{figure}[!htb]
\centering
\begin{mdframed}
\begin{enumerate}
\item Create the state $\rho \otimes \ket{0}\bra{0}^{\otimes m}\otimes \ket{0}\bra{0}\otimes \ket{0}\bra{0}$, and initialize an $n$ qubit counter to $\ket{0}$. 
\item Repeat $N$ times or until the algorithm accepts:
\begin{enumerate}
    \item Apply the unitary $\Pi \otimes X \otimes I+ (I-\Pi)\otimes I \otimes I$. 
    \item Measure the third register, and increment the counter if the output is 1.
    \item Apply the unitary $\Delta \otimes I \otimes X + (I-\Delta) \otimes I \otimes I $.
    \item Measure the fourth register, and increment the counter if the output is 1.
\end{enumerate}
\item Measure the counter and accept if the outcome is 0.
\end{enumerate}
\end{mdframed}
\caption{Algorithm 2}
\label{fig:alg2}
\end{figure}

Algorithm 2 lends itself to a natural version, in which all the steps are unitary, except a measurement in the very last step, as depicted in Algorithm 3.

\begin{figure}[!htb]
\centering
\begin{mdframed}
\begin{enumerate}
\item Create the state $\rho \otimes \ket{0}\bra{0}^{\otimes m}\otimes \ket{0}\bra{0}\otimes \ket{0}\bra{0}$, and initialize an $n$ qubit counter to $\ket{0}$. 
\item Repeat $N$ times or until the algorithm accepts:
\begin{enumerate}
    \item Apply the unitary $\Pi \otimes X \otimes I+ (I-\Pi)\otimes I \otimes I$. 
    \item Apply a Controlled-Increment between the third register and the counter.
    \item Apply the unitary $\Delta \otimes I \otimes X + (I-\Delta) \otimes I \otimes I $.
    \item \item Apply a Controlled-Increment between the fourth register and the counter.
\end{enumerate}
\item Measure the counter and accept if the outcome is 0.
\end{enumerate}
\end{mdframed}
\caption{Algorithm 3}
\label{fig:alg3}
\end{figure}

A direct calculation shows that the acceptance probabilities of Algorithms 2 and 3 are equal. More specifically, let $p_i$ denote the probability that the counter is $0$ at the end of the $i$th iteration in Algorithm 2.  Additionaly, let $\ket{\psi_i}$ be the state at the end of the $i$th iteration of the loop in algorithm 3, and $\ket{\psi_i} =a_i \ket{\alpha_i}\otimes \ket{0} \otimes \ket{0} \otimes \ket{0} + \ket{\beta_i}$, where the last 3 registers of $\ket{\beta_i}$ are orthogonal to 000. It is easy to prove by induction that $p_i=|a_i|^2$.

In order to make the entire algorithm unitary, the measurement in the last step in Algorithm 3 is omitted. Of course, this measurement can be done directly by the $\BQP$ machine that breaks the~$\prs$.
\fi
\section{Proofs of \cref{lem:haar-coeff-estimate} and \cref{lem:qubit-amplify-ineq}}\label{sec:proof-sandwich-lemma}
\begin{proof}[Proof of~\cref{lem:haar-coeff-estimate}]
    First, notice that one can sample a Haar random state by sampling $\tilde{\ket{\psi}} = \alpha \ket{0} \ket{\psi_1} + \sqrt{1-\alpha^2}\ket{1} \ket{\psi_2}$, where $\ket{\psi_1}$ and $\ket{\psi_2}$ are Haar random $m-1$ qubit states, and $\alpha$ is sampled according to the marginal distribution of $|(\bra{0} \otimes \I)\ket{\psi}|$ where $\ket{\psi}$ is sampled from the Haar distribution. Denote the latter distribution by $\mathcal{D}_0$. For convenience, in the rest of the section, we use the notation $\braket{0_1|\psi} = (\bra{0} \otimes \I)\ket{\psi}$. The fact that $\ket{\tilde{\psi}}$ has the same distribution as a Haar random state follows from the unitary invariance of the Haar measure. More precisely, one can see this as follows, where for $(m-1)$-qubit unitaries $U_1$ and $U_2$ we write $C_{U_1, U_2} = \ket{0}\bra{0} \otimes U_1 + \ket{1}\bra{1} \otimes U_2$:
    \begin{align}
        \E_{\ket{\psi} \gets \mu_{2^m}} \psi^{\ot r} 
        &=  \E_{\substack{U_1, U_2 \gets SU(2^{m-1}) \\\ket{\psi} \gets \mu_{2^m}}} (C_{U_1, U_2} \,\psi \,C_{U_1, U_2}^{\dagger})^{\otimes r} \nonumber\\
        &=  \E_{\substack{U_1, U_2 \gets SU(2^{m-1}) \\\ket{\psi} \gets \mu_{2^m}, \\ \alpha, \ket{\psi_1}, \ket{\psi_2}, \ket{\tilde{\psi}}\,: \,\ket{\psi} = \alpha \ket{0} \ket{\psi_1} + \sqrt{1-\alpha^2}\ket{1} \ket{\psi_2} \,, \nonumber\\ \ket{\tilde{\psi}} =  \alpha \ket{0} U_1\ket{\psi_1} + \sqrt{1-\alpha^2}\ket{1} U_2 \ket{\psi_2}}} \,\tilde{\psi}^{\otimes r} \nonumber \\
        &= \E_{\substack{\alpha \gets \mathcal{D}_0, \ket{\psi_1}, \ket{\psi_2} \gets \mu_{2^{m-1}} \,, \\ \ket{\psi} = \alpha \ket{0} \ket{\psi_1} + \sqrt{1-\alpha^2}\ket{1} \ket{\psi_2}}}\psi^{\ot r} \,, \label{eq:9}
    \end{align}
where the first equality is by the unitary invariance of the Haar measure.

     Now, define a map $F$ such that, for any state $\ket{\psi} = \alpha \ket{0} \ket{\psi_1} + \beta \ket{1}\ket{\psi_2}$, with $\alpha, \beta \in \mathbb{R}^+$,  $F(\ket{\psi}) = \frac{1}{\sqrt{2}} \ket{0} \ket{\psi_1} + \frac{1}{\sqrt{2}} \ket{1}\ket{\psi_2}$. Then, $F(\ket{\psi})$ is well defined on all pure states, and, by \cref{eq:9}, the distribution of $F(\ket{\psi})$ for a Haar random $\ket{\psi}$ is identical to the distribution of $\ket{\psi'} = \frac{1}{\sqrt{2}} \ket{0} \ket{\psi_1} + \frac{1}{\sqrt{2}} \ket{1}\ket{\psi_2}$ for Haar random $\ket{\psi_1}$ and $\ket{\psi_2}$. It follows that
     \if\widemargin=0
    \begin{align}
        \norm{\E_{\ket{\psi} \gets \mu_{2^m}} \psi^{\ot r} - \E_{\substack{\ket{\psi_1}, \ket{\psi_2} \gets \mu_{2^{m-1}} \\  \ket{\psi'} = \frac{1}{\sqrt{2}} \ket{0} \ket{\psi_1} + \frac{1}{\sqrt{2}} \ket{1} \ket{\psi_2}}} \psi'^{\ot r}}  &= \norm{\E_{\substack{\alpha \gets \mathcal{D}_0, \ket{\psi_1}, \ket{\psi_2} \gets \mu_{2^{m-1}} \,, \\ \ket{\psi} = \alpha \ket{0} \ket{\psi_1} + \sqrt{1-\alpha^2}\ket{1} \ket{\psi_2}}}\psi^{\ot r}  - \E_{\substack{\ket{\psi_1}, \ket{\psi_2} \gets \mu_{2^{m-1}} \\  \ket{\psi'} = \frac{1}{\sqrt{2}} \ket{0} \ket{\psi_1} + \frac{1}{\sqrt{2}} \ket{1} \ket{\psi_2}}} \psi'^{\ot r}} \nonumber\\
        &= \norm{\E_{\substack{\alpha \gets \mathcal{D}_0, \ket{\psi_1}, \ket{\psi_2} \gets \mu_{2^{m-1}} \,, \\ \ket{\psi} = \alpha \ket{0} \ket{\psi_1} + \sqrt{1-\alpha^2}\ket{1} \ket{\psi_2}}} \big(\psi^{\ot r} - F(\psi)^{\ot r} \big)} \nonumber\\
        & = \norm{\E_{\ket{\psi}\gets \mu_{2^m} } \big(\psi^{\ot r} - F(\psi)^{\ot r}\big)} \nonumber\\
        &\leq \E_{\ket{\psi}\gets \mu_{2^m} }  \norm{\psi^{\ot r} - F(\psi)^{\ot r}} \nonumber \\
        &\leq r \E_{\ket{\psi}\gets \mu_{2^m} }  \norm{\psi - F(\psi)} \,, \label{eq:ineq-psi-psi-prime}
    \end{align}
    \else
    \begin{align}
        &\norm{\E_{\ket{\psi} \gets \mu_{2^m}} \psi^{\ot r} - \E_{\substack{\ket{\psi_1}, \ket{\psi_2} \gets \mu_{2^{m-1}} \\  \ket{\psi'} = \frac{1}{\sqrt{2}} \ket{0} \ket{\psi_1} + \frac{1}{\sqrt{2}} \ket{1} \ket{\psi_2}}} \psi'^{\ot r}}  \nonumber\\&= \norm{\E_{\substack{\alpha \gets \mathcal{D}_0, \ket{\psi_1}, \ket{\psi_2} \gets \mu_{2^{m-1}} \,, \\ \ket{\psi} = \alpha \ket{0} \ket{\psi_1} + \sqrt{1-\alpha^2}\ket{1} \ket{\psi_2}}}\psi^{\ot r}  - \E_{\substack{\ket{\psi_1}, \ket{\psi_2} \gets \mu_{2^{m-1}} \\  \ket{\psi'} = \frac{1}{\sqrt{2}} \ket{0} \ket{\psi_1} + \frac{1}{\sqrt{2}} \ket{1} \ket{\psi_2}}} \psi'^{\ot r}} \nonumber\\
        &= \norm{\E_{\substack{\alpha \gets \mathcal{D}_0, \ket{\psi_1}, \ket{\psi_2} \gets \mu_{2^{m-1}} \,, \\ \ket{\psi} = \alpha \ket{0} \ket{\psi_1} + \sqrt{1-\alpha^2}\ket{1} \ket{\psi_2}}} \big(\psi^{\ot r} - F(\psi)^{\ot r} \big)} \nonumber\\
        & = \norm{\E_{\ket{\psi}\gets \mu_{2^m} } \big(\psi^{\ot r} - F(\psi)^{\ot r}\big)} \nonumber\\
        &\leq \E_{\ket{\psi}\gets \mu_{2^m} }  \norm{\psi^{\ot r} - F(\psi)^{\ot r}} \nonumber \\
        &\leq r \E_{\ket{\psi}\gets \mu_{2^m} }  \norm{\psi - F(\psi)} \,, \label{eq:ineq-psi-psi-prime}
    \end{align}
    \fi
    where the last line holds due to the triangle inequality and properties of the trace distance. So, to prove the lemma, it is enough to prove that
    \begin{equation*}
    \begin{split}
        \E_{\ket{\psi}\gets \mu_{2^m} }  \norm{\psi - F(\psi)} \leq \frac{80 \sqrt{m}}{2^{m/2}}
    \end{split}
    \end{equation*}
    Notice that, letting $\ket{\psi} = \alpha \ket{0} \ket{\psi_1} + \sqrt{1-\alpha^2}\ket{1} \ket{\psi_2}$, for $\alpha \geq 0$, and denoting $\beta = \sqrt{1-\alpha^2}$, we have
    \if\widemargin=0
    \begin{align}
        \norm{\psi - F(\psi)} &\leq \abs{\alpha^2 - \frac{1}{2}}\norm{\ket{\psi_1}\bra{\psi_1}} + \abs{\beta^2 - \frac{1}{2}}\norm{\ket{\psi_2}\bra{\psi_2}} + \abs{\alpha\beta - \frac{1}{2}} \norm{\ket{\psi_1}\bra{\psi_2}} + \abs{\alpha\beta - \frac{1}{2}} \norm{\ket{\psi_2}\bra{\psi_1}} \nonumber\\
        &= \abs{\alpha^2 - \frac{1}{2}} +\abs{\beta^2 - \frac{1}{2}} + 2\abs{\alpha\beta - \frac{1}{2}} \nonumber\\
        &\leq 4 \abs{\alpha^2 - \frac{1}{2}} \label{eq:alpha-norm-bound}
    \end{align}
    \else
    \begin{align}
        \norm{\psi - F(\psi)} &\leq \abs{\alpha^2 - \frac{1}{2}}\norm{\ket{\psi_1}\bra{\psi_1}} + \abs{\beta^2 - \frac{1}{2}}\norm{\ket{\psi_2}\bra{\psi_2}}  \nonumber\\&\quad+\abs{\alpha\beta - \frac{1}{2}} \norm{\ket{\psi_1}\bra{\psi_2}} + \abs{\alpha\beta - \frac{1}{2}} \norm{\ket{\psi_2}\bra{\psi_1}} \nonumber\\
        &= \abs{\alpha^2 - \frac{1}{2}} +\abs{\beta^2 - \frac{1}{2}} + 2\abs{\alpha\beta - \frac{1}{2}} \nonumber\\
        &\leq 4 \abs{\alpha^2 - \frac{1}{2}} \label{eq:alpha-norm-bound}
    \end{align}
    \fi
    So it is enough of us to bound $\E_{\alpha \gets \mathcal{D}_0} \abs{\alpha^2 - \frac{1}{2}}$.
    Consider the function $f: U(d) \to \R$ such that $f(\ket{\psi}) = \Vert\braket{0_1 | \psi}\Vert^2$, 
    where recall that we denote $\braket{0_1 | \psi} =(\bra{0}\otimes I )\ket{\psi}$.  %where $\ket{0_1}$ refers to $\ket{0}$ at the first qubit. 
    $f$ is 2-Lipschitz, because for any two states $\ket{\psi_1}$ and $\ket{\psi_2}$, we have
    \ifnum\widemargin=0
    \begin{align*}
        |f(\ket{\psi_1})-f(\ket{\psi_2})| &= \Big| \Vert\braket{0_1|\psi_1}\Vert^2 - \Vert\braket{0_1|\psi_2}\Vert^2 \Big| \\
        &\leq \norm{\braket{0_1|\psi_1}} \cdot \Big| \norm{\braket{0_1|\psi_1}} - \norm{\braket{0_1|\psi_2}}\Big| + \norm{\braket{0_1|\psi_2}}  \cdot \Big| \norm{\braket{0_1|\psi_1}} - \norm{\braket{0_1|\psi_2}} \Big| \\
        &\leq 2 \Big| \norm{\braket{0_1|\psi_1}} - \norm{\braket{0_1|\psi_2}} \Big| \leq 2 \norm{\ket{\psi_1}-\ket{\psi_2}}.
    \end{align*}
    \else    
    \begin{align*}
        |f(\ket{\psi_1})-f(\ket{\psi_2})| &= \Big| \Vert\braket{0_1|\psi_1}\Vert^2 - \Vert\braket{0_1|\psi_2}\Vert^2 \Big| \\
        &\leq \norm{\braket{0_1|\psi_1}} \cdot \Big| \norm{\braket{0_1|\psi_1}} - \norm{\braket{0_1|\psi_2}}\Big| \\ &\quad+\norm{\braket{0_1|\psi_2}}  \cdot \Big| \norm{\braket{0_1|\psi_1}} - \norm{\braket{0_1|\psi_2}} \Big| \\
        &\leq 2 \Big| \norm{\braket{0_1|\psi_1}} - \norm{\braket{0_1|\psi_2}} \Big| \leq 2 \norm{\ket{\psi_1}-\ket{\psi_2}}.
    \end{align*}
    \fi
    Thus, using L\'evy's lemma (\cref{lem:levy-lemma}), we have
    \begin{equation*}
           \Pr_{\ket{\psi}\gets \mu_{2^m}} \Big[|f(\psi) - \E_{\ket{\psi} \gets \mu_{2^m}} f(\ket{\psi})| \geq \delta\Big] \leq 4 \exp \parens{-\frac{2^m \delta^2}{18\pi^3}}
    \end{equation*}
    Let $\delta = 18 \frac{\sqrt{m}}{2^{m/2}}$. Then, since $\E_{\ket{\psi}\gets \mu_{2^m}} f(\ket{\psi}) = 1/2$, we have
    \begin{align*}
        \E_{\ket{\psi} \gets \mu_{2^m}} \Big| f(\ket{\psi}) - 1/2\Big| &\leq \frac{1}{2} \Pr_{\ket{\psi} \gets \mu_{2^m}} \bigg(\Big|f(\ket{\psi}) - \E_{\ket{\psi} \gets \mu_{2^m}} f(\ket{\psi})\Big|\geq \delta\bigg) + \delta \\
        &= \frac{1}{2} \cdot 4 \cdot \exp\parens{-\frac{2^m\delta^2}{18\pi^3}} + \delta \\
        &\leq 2 \exp(-m/2) + \frac{18\sqrt{m}}{2^{m/2}} \\
        &\leq \frac{2}{2^{m/2}} + \frac{18\sqrt{m}}{2^{m/2}} \\
        &\leq \frac{20\sqrt{m}}{2^{m/2}}.
    \end{align*}
Combining this with \cref{eq:ineq-psi-psi-prime} and \eqref{eq:alpha-norm-bound} gives the desired conclusion. 

\end{proof}

\begin{proof}[Proof of~\cref{lem:qubit-amplify-ineq}]
    Let $A = \ket{0}\bra{0}_1 \otimes A_{00} + \ket{0}\bra{1}_1 \otimes A_{01} + \ket{1}\bra{0}_1 \otimes A_{10} + \ket{1}\bra{1}_1 \otimes A_{11}$, for some $A_{00}, A_{01}, A_{10}, A_{11}$, then the hypothesis of the lemma is equivalent to
    \begin{equation}\label{eq:density-matrices-bound}
    \begin{split}
        &\norm{A_{00}} \leq \eps \\
        &\norm{A_{11}} \leq \eps \\
        &\frac{1}{2}\norm{A_{00} + A_{01} + A_{10} + A_{11}} \leq \eps \\
        &\frac{1}{2}\norm{A_{00} - iA_{01} +i A_{10} + A_{11}} \leq \eps \,.
    \end{split}
    \end{equation}
    From \cref{eq:density-matrices-bound}, we can deduce that 
    \if\widemargin=0
    \begin{equation*}
    \begin{split}
        \norm{A_{01}} & = \norm{ \frac{1}{2}(A_{00} + A_{01} + A_{10} + A_{11}) + \frac{i}{2}(A_{00} - iA_{01} +i A_{10} + A_{11}) - \frac{1+i}{2}A_{00} - \frac{1+i}{2}A_{11}}\\ &\leq \eps + \eps + \frac{\sqrt{2}}{2} \eps + \frac{\sqrt{2}}{2} \eps \\
        &\leq (2+\sqrt{2}) \eps
    \end{split}
    \end{equation*}
    \else
    \begin{equation*}
    \begin{split}
        \norm{A_{01}} & = \bigg\| \frac{1}{2}(A_{00} + A_{01} + A_{10} + A_{11}) + \frac{i}{2}(A_{00} - iA_{01} +i A_{10} + A_{11}) \\&\quad- \frac{1+i}{2}A_{00} - \frac{1+i}{2}A_{11}\bigg\|\\ &\leq \eps + \eps + \frac{\sqrt{2}}{2} \eps + \frac{\sqrt{2}}{2} \eps \\
        &\leq (2+\sqrt{2}) \eps
    \end{split}
    \end{equation*}
    \fi
    Similarly we have $\norm{A_{10}} \leq \parens{2+\sqrt{2}}\eps$, so
    \begin{equation*}
    \begin{split}
        \norm{A} &\leq \norm{A_{00}} + \norm{A_{01}} + \norm{A_{10}} + \norm{A_{11}} \leq (6+2\sqrt{2}) \eps < 10 \eps \,.
    \end{split}
    \end{equation*}
\end{proof}
\fi
\end{document}